\newtheorem{theorem}{Theorem}
\newtheorem{lemma}{Lemma}
\newtheorem{proposition}{Proposition}
\newtheorem{corollary}{Corollary}
\newtheorem{definition}{Definition}
\newcommand{\ffn}{\text{ffn}}
\newcommand{\mff}{\textsc{FireFighting}}
\newcommand{\mfft}{\textsc{FireFightingInTime}}
\newcommand{\mhun}{\textsc{Hunting}}
\newcommand{\hn}{\text{hn}}
\newcommand{\tw}{\text{tw}}
\newcommand{\vcn}{\text{vcn}}
\newcommand{\diam}{\text{diam}}
\newcommand{\pw}{\text{pw}}
\newcommand{\N}{\mathbb{N}}
\newcommand{\Z}{\mathbb{Z}}
\newcommand{\B}{\mathcal{B}}
\newcommand{\ZZ}{\mathrm{Z\kern-.3em\raise-0.5ex\hbox{Z}}}
\newcommand{\dist}{\text{dist}}
\newcommand{\hw}{\text{hw}}
\newcommand{\flips}{\text{flips}}
\definecolor{indiagreen}{rgb}{0.07, 0.53, 0.03}
\definecolor{dkteal}{RGB}{0,72,101}
\newcommand{\pathcolor}{blue}
\newif\ifshortversion
\tikzset{cross/.style={cross out, draw=black, minimum size=2*(#1-\pgflinewidth), inner sep=0pt, outer sep=0pt},
cross/.default={1pt}}
\title{Complexity of Firefighting on Graphs}
\begin{document}
\date{}
\author[1]{Julius Althoetmar\orcidlink{0009-0001-0152-7408}}
\author[1]{Jamico Schade\orcidlink{0009-0005-2727-1163}}
\author[2]{Torben Schürenberg\orcidlink{0009-0006-5947-0172}}
\affil[1]{Technical University of Munich, Germany, \textit {\{julius.althoetmar,jamico.schade\}@tum.de}}
\affil[2]{University of Bremen, Germany, \textit{torsch@uni-bremen.de}}

\maketitle

\begin{abstract}
We consider a pursuit-evasion game that describes the process of extinguishing a fire burning on the nodes of an undirected graph.
We denote the minimum number of firefighters required by $\ffn(G)$ and provide almost sharp bounds to this graph parameter for complete binary trees.
We show that deciding whether $\ffn(G) \leq m$ for given $G$ and $m$ is \texttt{NP-hard}.
Furthermore, we show that shortest strategies can have superpolynomial length, leaving open whether the problem is in \texttt{NP}. We provide a construction that allows for transferring these results to a well-established Cops and Robbers variant called the ''Hunter and Rabbit game``.

\end{abstract}
\section{Introduction}
We consider a game played on a simple undirected graph $G = (V, E)$.
At the start of the game, we imagine all nodes of the graph to be on fire.
A fixed number of firefighters are trying to extinguish the fire. Each round, every
firefighter can extinguish one freely chosen node (without any restrictions like moving along edges), but must then leave to gather more water. In their absence, the fire spreads: Each node with a burning neighbour catches fire again. In particular, this can include nodes that have just been extinguished.
We are interested in the smallest number of firefighters for which it is possible to extinguish the fire entirely, and call this number the \emph{firefighter number $\ffn(G)$}.

This problem was studied by Bernshteyn and Lee~\cite{Bernshteyn2021SearchingFA} in 2022.\footnote{They call it the ``inspection number'' of a graph.} 
They introduced a method of proving a lower bound on the firefighter number, which we improve, see Lemma \ref{lem:lowerbounds}.1.
Furthermore, they showed that complete binary trees can have an arbitrarily high firefighter number.
We extend this result by giving almost (up to an additive logarithmic term) tight bounds on the firefighter number of complete binary trees:

\setcounter{theorem}{6} 
\begin{theorem}[Bounds on the Firefighter Number of Binary Trees]
\label{thm:begin:binTree}
Let $\B_d$ be the complete binary tree of depth $d$.
We have $\ffn(\B_0)=1$, $\ffn(\B_{1})=\ffn(\B_{2})=2$, $\ffn(\B_3) = \ffn(\B_4) = 3$ and $\ffn(\B_5), \ffn(\B_6) \in \{3, 4\}$.
For all $d\in\N_{\geq 7}$ it holds that
\[
    \left\lfloor \frac{d - 1}{2}\right\rfloor - \frac12 \log\left( \left\lfloor \frac{d - 5}{2} \right\rfloor\right) - 2 < \ffn(\B_d) \leq \left\lceil \frac{d}{2} \right\rceil + 1.
\]
\end{theorem}

\noindent
The game can also be interpreted as a pursuit-evasion game, specifically as a cops and robbers game with helicopter cops and an entirely invisible, omniscient robber with bounded speed.
At any given time, the set of burning nodes corresponds to the set of still possible locations of the robber.
The cops are not forced to move along edges, which was the case in the initial variant that was introduced by To{\v{s}}i{\'c}~\cite{tovsic1985vertex}.
Pursuit-evasion games have been widely studied over the past few decades due to their broad range of applications.
For example, the fire fighting version could be used to model the propagation of diseases or pests throughout certain ecosystems, i.e. how bark beetles spread between forests, and provide strategies to counter this spread even with limited resources.

A good overview of results regarding pursuit-evasion games can be found in the surveys by Alspach~\cite{Alspach2006SEARCHINGAS,ALSPACH2008158}, Bonato and Nowakowski~\cite{bonatoBook}, Bonato and Yang~\cite{Bonato2013}, Fomin and Thilikos~\cite{Fomin2008AnAB}, and Hahn~\cite{hahnSurvey}.
For many variants, certain graph parameters, such as pathwidth or treewidth, can yield upper or lower bounds.
In Table~\ref{tab:basicProps}, we provide an overview of the relation between the firefighter number and some graph parameters.   

A closely related variant of the problem discussed in this paper is the Hunter and Rabbit game, where the fugitive is forced to move to a neighbouring node in each time step (inspired by a rabbit moving when startled by a gunshot) while the hunters have no restrictions on which nodes to shoot at in every round.
If the fugitive is also allowed to not move, this problem is equivalent to the firefighter variant.
The Hunter and Rabbit game was studied by different researchers, including Abramovskaya, Fomin, Golovach and Pilipczuk~\cite{Abramovskaya2015HowTH},
Bolkema and Groothuis~\cite{Bolkema2017HuntingRO},
Britnell and Wildon~\cite{Britnell2012FindingAP}, 
Gruslys and M'eroueh~\cite{Gruslys2015CatchingAM} and
Haslegrave~\cite{HASLEGRAVE20141}.
Among other results, they characterize the set of graphs with hunter number equal to one, and find the hunter number for certain graph classes.
Similarly, we give a characterization of graphs $G$ with $\ffn(G) = 1$ and $\ffn(G) = 2$ and determine the firefighter number for some graph classes in Section \ref{ch:basicProps}.

Apart from pursuit-evasion games, researchers are also studying some other problems related to a fire spreading on a graph:
The graph burning problem (see e.g.~\cite{graphBurning} for a survey) concerns itself with the quickest way to burn an entire graph, where the player is allowed to set a new node on fire each turn, after which the fire spreads.
In the firefighter problem (see e.g.~\cite{finbow} for a survey), a node of a graph is on fire and the fire spreads through the graph.
The firefighters can (permanently) save a number of not yet burned nodes each turn, with the goal of containing the fire and saving the most nodes from being burned.

Although the concept of a fire spreading on a graph in these problems is similar to the problem we are concerned with, the underlying rules and objectives are fundamentally different, so to the best of our knowledge, there are no transferable results.
In particular, if a node catches on fire, it is considered permanently burned in both of these problems, which gives them a monotonous nature.
By contrast, a node may catch on fire and be extinguished an arbitrary amount of times in our problem.

The decision variants of many pursuit-evasion games are \texttt{NP-hard}, see \cite{FOMIN20101167} and \cite{MAMINO201348}.
In this paper, we analyze the following two decision problems.

\begin{framed}
\begin{tabular}{l}
\textbf{(\mff): fire fighting} \\
\hspace{0.5cm}\textbf{Input:} A graph $G$ and $m\in \N_{>0}$.\\
\hspace{0.5cm}\textbf{Output:} Is $\ffn(G)\leq m$?
\end{tabular}
\end{framed}

\begin{framed}
\begin{tabular}{l}
\textbf{(\mfft): fire fighting within a given time horizon} \\
\hspace{0.5cm}\textbf{Input:} A graph $G$, $m\in \N_{>0}$ and $T\in \N_{>0}$.\\
\hspace{0.5cm}\textbf{Output:} Is $\ffn(G)\leq m$ when restricted to at most $T$ turns?
\end{tabular}
\end{framed}

\noindent In 2025, Ben-Ameur and Maddaloni~\cite{https://doi.org/10.1002/net.22284} proved that $\mfft$ is \texttt{NP-hard} via a reduction from the \textsc{3-Partition} problem.
Moreover, they proved that the Hunter and Rabbit game is \texttt{NP-hard} on digraphs.
In this variant, the movement of the rabbit is constrained to the direction of the edges.
Together with Gahlawat, they extended their proof for undirected graphs~ \cite{benameur2025huntingrabbithard}.
We prove that \mff ~is \texttt{NP-hard}, which also implies that \mff ~on digraphs is \texttt{NP-hard}, since each undirected graph can be interpreted as a digraph where each edge has a reverse counterpart.

\setcounter{theorem}{10} 
\begin{theorem}[Hardness of \mff]
    $\mff$ is \texttt{NP-hard}.
\end{theorem}

\noindent We provide a construction to show that Theorem \ref{thm:mffHard} implies \texttt{NP-hardness} of the Hunter and Rabbit game, thus giving an alternative proof to the result by Ben-Ameur, Gahlawat and Maddaloni in \cite{benameur2025huntingrabbithard}.
Additionally, we show that $\mfft$ is \texttt{NP-hard}, even on some heavily restricted graph classes.

\setcounter{theorem}{12} 
\begin{theorem}[Hardness of \mfft]
    The problem $\mfft$ is \texttt{NP-hard} even on trees. In particular, it is \texttt{NP-hard} even on trees with diameter at most $4$ and on spiders (trees where at most one node has a degree greater than $2$). 
\end{theorem}

\noindent Furthermore, we provide a class of graphs $G_m$ whose shortest strategies can have superpolynomial length $T(G_m)$, leaving open whether \mff~is in \texttt{NP}.

\setcounter{theorem}{17} 
\begin{theorem}[Shortest Strategies can have Superpolynomial Length]
    There is an infinite class of graphs $(G_m)_{m\in \N_{\geq 2}}$ where the shortest strategies with $\ffn(G_m)$ many firefighters have a length of at least $(m - 1)!$, which is superpolynomial in $\text{size}(G_m) = \mathcal{O}(m^6)$.
\end{theorem}

\noindent We can extend Theorem \ref{thm:length} to the Hunter and Rabbit game, answering one of the open questions in \cite{benameur2025huntingrabbithard} regarding a bound on the length of shortest hunter strategies. 

For all statements, the full rigorous proofs omitted in the main paper can be found in Appendix \ref{app}.
\section{Model}
\setcounter{theorem}{0} 
Any graphs mentioned in this paper are assumed to be simple and undirected.
For a graph $G = (V, E)$ and a set $W \subseteq V$, we define the neighbourhood $N(W)$ as the set of nodes in $V \setminus W$ that are adjacent to at least one node in $W$.
To keep the notation concise, we will sometimes refer to the node set of a graph $G$ as $G$ when it is clear from the context.
For a number $n \in \Z$, we set $[n] = \{1, \dots, n\}$ and $[n]_0 = \{0,1, \dots, n\}$.

\begin{figure}[h]
	\renewcommand{\arraystretch}{0.7}
	\centering
	\begin{tabular}{|c|c|}
    \hline
		\begin{tikzpicture}[scale=0.3, every node/.style={font=\small}]
			
			\node[circle, fill=white!20] (v0) at (1,0.5) {};
                \node[circle, fill=white!20] (v0) at (-5.5,0) {};
                \node[circle, fill=white!20] (v0) at (7.5,0) {};
			\node[draw, circle, fill=red!40] (v1) at (1,0) {};
			
			\node[draw, circle, fill=white] (v2) at (-3,-1.5) {};
			\node[draw, circle, fill=red!40] (v3) at (5,-1.5) {};
			
			\node[draw, circle, fill=white] (v4) at (-5,-3) {};
			\node[draw, circle, fill=white] (v5) at (-1,-3) {};
			
			\node[draw, circle, fill=red!40] (v6) at (3,-3) {};
			\node[draw, circle, fill=white] (v7) at (7,-3) {};
			\draw[thick] (v1) -- (v2);
			\draw[thick] (v1) -- (v3);
			
			\draw[thick] (v2) -- (v4);
			\draw[thick] (v2) -- (v5);
			
			\draw[thick] (v3) -- (v6);
			\draw[thick] (v3) -- (v7);

		\end{tikzpicture}&\raisebox{0.15cm}{ 
		\begin{tikzpicture}[scale=0.4, every node/.style={font=\normalsize}]
			\node[text width=6.5cm, right] at (0,0) { 
				\tikz[baseline=-0.8ex]\draw[fill=red!40] (0,0) circle (1.2ex); Burning nodes $B_t$ \\
				\tikz[baseline=-0.8ex]\draw[fill=white] (0,0) circle (1.2ex); Extinguished nodes $E_t$ \\
			};
		\end{tikzpicture}
		}\\
		\hline
		\multicolumn{2}{|c|}{\textdownarrow \textcolor{white}{|}Extinguishing \textdownarrow\hspace{6.87cm}\textcolor{white}{a}} \\		
		\hline
		\begin{tikzpicture}[scale=0.3, every node/.style={font=\footnotesize}]
			\node[circle, fill=white!20] (v0) at (1,0.5) {};
			\node[draw, circle, fill=red!40] (v1) at (1,0) {};
			
			\node[draw, circle, fill=white] (v2) at (-3,-1.5) {};
            
			\node[draw=dkteal, thick, circle, fill=white, line width=1mm] (v3) at (5,-1.5) {};
			
			\node[draw, circle, fill=white] (v4) at (-5,-3) {};
			\node[draw, circle, fill=white] (v5) at (-1,-3) {};
			
			\node[draw=dkteal, thick, circle, fill=white, line width=1mm] (v6) at (3,-3) {};
			\node[draw, circle, fill=white] (v7) at (7,-3) {};
			
			\draw[thick] (v1) -- (v2);
			\draw[thick] (v1) -- (v3);
			
			\draw[thick] (v2) -- (v4);
			\draw[thick] (v2) -- (v5);
			
			\draw[thick] (v3) -- (v6);
			\draw[thick] (v3) -- (v7);
			\end{tikzpicture}&\raisebox{0.01cm}{ 
			\begin{tikzpicture}[scale=0.5, every node/.style={font=\normalsize}]
				\node[text width=6.5cm, right] at (0,-2) { 
					\tikz[baseline=-0.6ex]\draw[fill=red!40] (0,0) circle (1.2ex); Provisionally burning nodes $\Tilde{B}_{t + 1}$ \\
					\tikz[baseline=-0.6ex]\draw[fill=white] (0,0) circle (1.2ex); Provisionally extinguished nodes $\Tilde{E}_{t + 1}$ \\
					\hspace{0.19mm}\tikz[baseline=-0.6ex]\draw[dkteal, thick, fill=white!40, line width=1mm] (0,0) circle (1ex); Firefighter set $F_{t + 1}$ \\
				};
			\end{tikzpicture}
		}\\
		\hline 
		\multicolumn{2}{|c|}{\textdownarrow \textcolor{white}{|}Propagation \textdownarrow\hspace{6.84cm}\textcolor{white}{a}} \\		
		\hline
		\begin{tikzpicture}[scale=0.3, every node/.style={font=\tiny}]
			\node[circle, fill=white!20] (v0) at (1,0.5) {};
                \node[circle, fill=white!20] (v0) at (-5.5,0) {};
                \node[circle, fill=white!20] (v0) at (7.5,0) {};
			\node[draw, circle, fill=red!40] (v1) at (1,0) {};
			
			\node[draw, circle, fill=red!40] (v2) at (-3,-1.5) {};
			\node[draw, circle, fill=red!40] (v3) at (5,-1.5) {};
			
			\node[draw, circle, fill=white] (v4) at (-5,-3) {};
			\node[draw, circle, fill=white] (v5) at (-1,-3) {};
			
			\node[draw, circle, fill=white] (v6) at (3,-3) {};
			\node[draw, circle, fill=white] (v7) at (7,-3) {};
			
			\draw[thick] (v1) -- (v2);
			\draw[thick] (v1) -- (v3);
			
			\draw[thick] (v2) -- (v4);
			\draw[thick] (v2) -- (v5);
			
			\draw[thick] (v3) -- (v6);
			\draw[thick] (v3) -- (v7);

		\end{tikzpicture}&\raisebox{0.15cm}{ 
			\begin{tikzpicture}[scale=0.5, every node/.style={font=\normalsize}]
				\node[text width=6.5cm, right] at (10,-2) { 
					\tikz[baseline=-0.8ex]\draw[fill=red!40] (0,0) circle (1.2ex); Burning nodes $B_{t+1}$ \\
					\tikz[baseline=-0.8ex]\draw[fill=white] (0,0) circle (1.2ex); Extinguished nodes $E_{t+1}$ \\
				};
			\end{tikzpicture}
		}\\
		\hline
	\end{tabular}
	\caption[]{Two firefighters try to extinguish a partially burning tree. Visualization of the extinguishing and propagation process of the fire.}
	\label{fig:SimpleExample}
\end{figure}
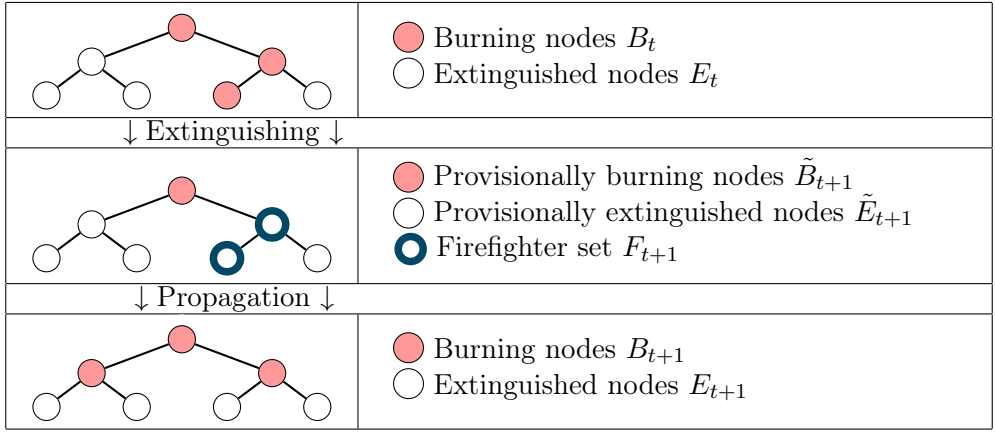

Let us now introduce some basic notation for this game.
A vector $S=(F_1,\ldots,F_T)$ with $F_i\subseteq V$ and $|F_i|\leq m$ for all $i\in [T]$ is called an \emph{$m$-strategy} for $G$ of length $T$. $F_i$ is called the \emph{firefighter set} at time $i$.
Given an $m$-strategy $S=(F_1,\ldots,F_T)$, we define the set of \emph{burning nodes $B_t$} at time $t$ iteratively by setting
$B_0\coloneqq V$ and $B_{t}\coloneqq (B_{t-1} \setminus F_{t}) \cup N(B_{t-1} \setminus F_{t})$ for $t \geq 1$, where $F_t=\emptyset$ for $t > T$.
Furthermore, let the set of \emph{extinguished nodes $E_t$} at time $t \in \N_{\geq 0}$ be defined as $E_t\coloneqq V \setminus B_t$.
For convenience, we write $\Tilde{B}_t$ to denote $B_{t-1} \setminus F_t$, the \emph{provisionally burning nodes}, as well as $\tilde{E}_t$ to denote $V \setminus \Tilde{B}_t = E_t \cup F_t$, the \emph{provisionally extinguished nodes}.
A simple example of this extinguishing and fire propagation process is shown in Figure \ref{fig:SimpleExample}.
An $m$-strategy for $G$ is called a \emph{$T$-winning $m$-strategy} if $B_T= \emptyset$ (or simply a \emph{winning $m$-strategy} if $T$ does not need to be specified).
If there exists such a $T$-winning $m$-strategy for $G$, the graph $G$ is called \emph{$m$-winning} or, more precisely, \emph{$m$-winning in time $T$}.

We denote the shortest possible length of a winning $m$-strategy for $G$ by $T_m(G)$.
If $\ffn(G) > m$, we set $T_m(G) = \infty$.
We set $T(G) \coloneqq T_{\ffn(G)}(G)$ to be the length of the shortest possible winning strategy when using the smallest possible number of firefighters.

\section{Basic Properties and Bounds}\label{ch:basicProps}
We characterize the classes of graphs with firefighter number one and two, and give some results for specific graph classes.
Furthermore, Table \ref{tab:basicProps} gives a short overview of some common graph parameters that can be used to find bounds on the firefighter number.

\begin{restatable}[Characterization: $\ffn(G)=1$ and $\ffn(G)=2$]{corollary}{lemmaChar}\label{prop:char}
\noindent For a graph $G = (V, E)$, it holds that $\ffn(G)=1$ if and only if $|E|=0$ and $\ffn(G)=2$ if and only if $|E|>0$ and any connected component of $G$ is a caterpillar graph, i.e., a tree in which all the nodes are within distance 1 of a central path.
\end{restatable}

\begin{restatable}[Firefighter Number of $K_n$, $C_n$ and $K_{n, m}$]{corollary}{corExactFfn}
\label{prop:graphs}
   $\ffn(K_n)=n$, $\ffn(K_{n,m})=\min\{n,m\}+1$ and $\ffn(C_n)=3$ for any $n,m\in\N_{>0}$, where $K_n$ is a complete graph, $K_{n,m}$ is a complete bipartite graph and $C_n$ is a circular graph.
\end{restatable}
\begin{restatable}[$d$-regular Graphs]{corollary}{corDRegular}
\label{prop:dRegular}
    Let $d\in \N_{>0}$. Every $d$-regular graph $G$ fulfills $\ffn(G)\geq d+1$. This bound is tight. For $d\in \{1,2\}$ we have $\ffn(G)=d+1$. For any $d\geq 3$, the firefighter number can reach arbitrarily high values.
\end{restatable}

\begin{restatable}[Order of a Forest]{corollary}{forst}
\label{prop:forst}
     For every forest $F = (V, E)$, we have $\ffn(F)\leq \log_3(2|V| + 1)+2$.
\end{restatable}

\noindent These results are direct consequences of the following useful lemmata providing bounds on the firefighter number.
In particular, Lemma \ref{lem:lowerbounds}.1 will be an essential tool for proving theorems in later sections. It is a strictly stronger version of a criterion given in \cite{Bernshteyn2021SearchingFA}.
In their version, it is necessary to prove that for some $i \in [|V|]$, any $W \subseteq V$ with $i - m + 1 < |W| < i$ fulfills $|N(W)| \geq m - 1$, in order to prove $\ffn(G)\geq m$.
In our version, one only has to consider subsets $W$ with $|W| = i$ instead of subsets $W$ with an entire range of sizes, which makes applying the lemma significantly easier.
As this Lemma is one of the only known techniques to show lower bounds, this is a valuable improvement.
Note that Lemma \ref{lem:lowerbounds}.1 can give arbitrarily bad lower bounds, as shown in \ref{app:basicProps}.

\begin{restatable}[Lower Bounds]{lemma}{lemmaLowerBounds}
\label{lem:lowerbounds}
    Let $G=(V,E)$ be a graph and $m\in\N_{>0}$. Then any of the following conditions imply that $\ffn(G)\geq m$:
    \begin{enumerate}
        \item There exists an $i \in [|V| - m + 1]$ such that any $W \subseteq V$ with $|W| = i$ fulfills $|N(W)| \geq m - 1$.
        \item $\delta_{\min}(G) \geq m-1$, i.e., each node in $G$ has at least $m-1$ neighbours.
        \item There exists a $G'\subseteq G$ with $\ffn(G')\geq m$.
        \item $|E| \geq m \cdot (|V| - \frac{m+1}{2})$.
    \end{enumerate}
\end{restatable}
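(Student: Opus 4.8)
The plan is to handle the four conditions in order, arguing each contrapositively: I assume a winning $(m-1)$-strategy $S=(F_1,\dots,F_T)$ exists (so $|F_t|\le m-1$ for all $t$) and track the sizes of the burning sets $B_t$ until I reach a contradiction with $B_T=\emptyset$. Conditions~2--4 will be reduced to Condition~1 and to each other, so the bulk of the work lies in Condition~1.

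For Condition~1 I would maintain the invariant $|B_t|\ge i+(m-1)$ for all $t\ge 0$. The base case is exactly the hypothesis $i\le |V|-m+1$, since $|B_0|=|V|$. For the step, assume $|B_{t-1}|\ge i+(m-1)$; as at most $m-1$ nodes are extinguished, the provisionally burning set $\tilde B_t=B_{t-1}\setminus F_t$ has $|\tilde B_t|\ge i$, so I may choose $W\subseteq\tilde B_t$ with $|W|=i$ and invoke the hypothesis to get $|N(W)|\ge m-1$. I then need two facts: (a) $N(W)$ is disjoint from $W$ by definition, whence $|W\cup N(W)|=|W|+|N(W)|\ge i+(m-1)$; and (b) the closed-neighbourhood map $W\mapsto W\cup N(W)$ is monotone, so $B_t=\tilde B_t\cup N(\tilde B_t)\supseteq W\cup N(W)$ and therefore $|B_t|\ge i+(m-1)$. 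Since the invariant forces $|B_t|>0$ for all $t$, no $(m-1)$-strategy can win, i.e.\ $\ffn(G)\ge m$. Condition~2 is then the special case $i=1$: minimum degree $m-1$ makes every singleton expand to $\ge m-1$ neighbours, and it forces $|V|\ge m$ so that $1\le |V|-m+1$. The main obstacle is subtlety~(b): because $N(W)$ itself is \emph{not} monotone in $W$, I cannot apply the hypothesis to $\tilde B_t$ directly when $|\tilde B_t|>i$; passing to a size-$i$ subset and using monotonicity of the \emph{closed} neighbourhood is exactly what rescues the argument.

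For Condition~3, given a winning $(m-1)$-strategy $S$ on $G$ I would restrict it to $G'$ by setting $F_t'=F_t\cap V(G')$, which is again an $(m-1)$-strategy, and prove by induction that the burning sets $B_t'$ of $G'$ under $S'$ satisfy $B_t'\subseteq B_t$. The inductive step uses that deleting firefighters lying outside $G'$ changes nothing (as $B_{t-1}'\subseteq V(G')$) and that every edge of $G'$ is an edge of $G$, so propagation in $G'$ is dominated by propagation in $G$. Then $B_T'\subseteq B_T=\emptyset$ produces a winning $(m-1)$-strategy on $G'$, contradicting $\ffn(G')\ge m$.

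Finally, I would derive Condition~4 from Conditions~2 and~3 by a degeneracy dichotomy. Repeatedly delete vertices of degree at most $m-2$. If a nonempty subgraph $H$ survives, then $\delta_{\min}(H)\ge m-1$, so $\ffn(H)\ge m$ (Condition~2) and hence $\ffn(G)\ge m$ (Condition~3). Otherwise $G$ is $(m-2)$-degenerate, so the standard edge bound for $k$-degenerate graphs gives $|E|\le (m-2)|V|-\binom{m-1}{2}$; a short computation shows this is strictly less than $m\bigl(|V|-\tfrac{m+1}{2}\bigr)$ once $|V|\ge m$, contradicting the hypothesis. Since $\ffn(G)\le|V|$ always (extinguish everything at once via $F_1=V$), the range $|V|\ge m$ is the only one in which the conclusion can hold. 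I expect the only friction here to be the bookkeeping in the degeneracy edge-count and verifying the closing inequality.
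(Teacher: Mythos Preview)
Your arguments for Conditions~1--3 are correct and match the paper's (the paper phrases Condition~1 as ``take the first $t$ with $|B_t|<i+(m-1)$'' rather than maintaining the invariant inductively, but the content is identical, including the key step of passing to a size-$i$ subset and using monotonicity of the \emph{closed} neighbourhood).

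For Condition~4 the paper argues the contrapositive by induction on $|V|$: assuming $\ffn(G)<m$, Condition~2 supplies a vertex of degree $\le m-2$; delete it and apply the inductive hypothesis to bound $|E|$. Your degeneracy dichotomy is this same induction repackaged --- the greedy deletion of low-degree vertices is precisely the degeneracy process, and your quoted bound $|E|\le(m-2)|V|-\binom{m-1}{2}$ is exactly what that induction produces --- so the two approaches coincide in substance.

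One logical slip in your last sentence: to discharge the range $|V|<m$ you observe that the \emph{conclusion} $\ffn(G)\ge m$ cannot hold there, but what a proof of ``hypothesis $\Rightarrow$ conclusion'' requires is that the \emph{hypothesis} $|E|\ge m\bigl(|V|-\tfrac{m+1}{2}\bigr)$ fails on that range. In fact it need not fail (e.g.\ $G=K_3$, $m=4$ gives $3\ge 2$ while $\ffn(K_3)=3$), so Condition~4 as literally stated is false for $|V|<m$; the paper's own treatment of this edge case also appears to be in error. Your argument is complete and correct on the intended range $|V|\ge m$, which is the only regime in which the bound is ever invoked.
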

\noindent\begin{proof}[Proof of 1]
    Assume that there is a winning $(m-1)$-strategy $(F_1, \dots, F_T)$ for $G$, while an $i$ as in the statement of the lemma exists.
    Then, there has to be a smallest $t$ such that $|B_t| < i + (m-1)$.
    Note that $|B_0| = |V| \geq i + (m-1)$, hence $t \geq 1$.
    By the definition of $t$, we have $|B_{t - 1}| \geq i + (m-1)$.
    Since $|F_t| \leq m-1$, it follows that $|\tilde{B}_t| \geq i$.
    In particular, this implies the existence of a set $\tilde{W} \subseteq \tilde{B}_t$ with $|\tilde{W}| = i$.
    Let $W \coloneqq \tilde{W} \cup N(\tilde{W})$, i.e., the set of burning nodes after the fire spreads from the provisionally burning nodes $\tilde{W}$.
    Since $|N(\tilde{W})| \geq m-1$ by the definition of $i$, we have $|W| \geq |\tilde{W}| + (m-1) = i + (m-1)$. $\tilde{W}$ being a subset of $\tilde{B}_t$ implies $W \subseteq B_t$ and hence, we have $|B_t| \geq i + (m-1)$, a contradiction.
\end{proof}

\begin{restatable}[Upper Bounds]{lemma}{lemmaUpperBounds}
\label{lem:upperbounds}
    Let $G=(V,E)$ be a graph and $m\in\N_{>0}$. Then any of the following conditions imply that $\ffn(G)\leq m$:
    \begin{enumerate}
        \item $\pw(G) +1\leq m$, where $\pw(G)$ denotes the pathwidth of $G$.\ifthenelse{\boolean{mainpart}}{\footnote{Included just for the sake of completeness, has already been proven in \cite{Bernshteyn2021SearchingFA}. }}{} 
        \item $G$ is a tree with $\diam(G) \leq 2m-2$.
        \item There exists a graph $G_k=(V_k,E_k)$ for $k\in \N_{\geq 0}$ with $\ffn(G_k)\leq m-k$, $G_k\subseteq G$ and $|V_k|\geq |V|-k$.
    \end{enumerate}
\end{restatable}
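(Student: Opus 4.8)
The unifying idea for all three parts is to station firefighters permanently on a separating set, so that an already-cleaned region cannot be re-ignited, and to sweep or recurse on what remains. Part 1 is the standard node-search sweep along a path decomposition (and is attributed to \cite{Bernshteyn2021SearchingFA}); I sketch it for completeness. Since $\pw(G)+1\le m$, fix a path decomposition $X_1,\dots,X_r$ of $G$ with $|X_j|\le m$ for every $j$, and play $F_t=X_t$ for $t=1,\dots,r$ (so $|F_t|\le m$). For a vertex $v$ write $\mathrm{last}(v)$ for the largest index of a bag containing $v$ and $\mathrm{first}(v)$ for the smallest. The plan is to prove by induction on $t$ the invariant $B_t\subseteq\{v:\mathrm{last}(v)>t\}$, i.e.\ every vertex whose bags all lie at or before time $t$ is extinguished and stays so.

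For the inductive step I would argue that, after removing $X_t$, the provisionally burning set lies entirely in $C_t:=\{v:\mathrm{first}(v)>t\}$: the hypothesis confines $B_{t-1}$ to vertices with $\mathrm{last}(v)\ge t$, and deleting $X_t$ kills exactly those that also meet bag $t$, leaving only vertices with $\mathrm{first}(v)>t$. The key point is the separator property of path decompositions: any neighbour $w$ of a vertex $v\in C_t$ shares a bag with $v$, and since $v$ appears only in bags of index $>t$, that common bag also has index $>t$, whence $\mathrm{last}(w)>t$. Thus both $\tilde B_t$ and $N(\tilde B_t)$ sit inside $\{v:\mathrm{last}(v)>t\}$, re-establishing the invariant. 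At $t=r$ the set $\{v:\mathrm{last}(v)>r\}$ is empty, so $B_r=\emptyset$ and the $m$-strategy wins.

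For Part 2 I would induct on $m$. The base $m=1$ forces $\diam(G)=0$, a single vertex, where $\ffn=1$. For $m\ge 2$, root the tree at a center $c$; since the radius of a tree equals $\lceil\diam(G)/2\rceil\le m-1$, every vertex has depth $\le m-1$, so each subtree $T_i$ hanging off a child of $c$ has height $\le m-2$ and hence $\diam(T_i)\le 2(m-1)-2$. The strategy keeps one firefighter on $c$ at every step and spends the remaining $m-1$ firefighters cleaning $T_1,\dots,T_k$ one after another, each via its winning $(m-1)$-strategy supplied by the induction hypothesis. Because $c$ is always extinguished it never spreads fire, so during phase $i$ the dynamics on $T_i$ coincide with those of $T_i$ in isolation (its only outside neighbour is $c$), each $T_i$ is still fully burning when its phase begins, and cleaned subtrees stay clean since they too are sealed off by $c$; when the last phase ends every subtree is clean and $c$ still carries a firefighter, so $B_T=\emptyset$. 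At most $1+(m-1)=m$ firefighters are active at any time.

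Part 3 is the same blocking idea in its purest form: with $D:=V\setminus V_k$ and $|D|\le k$, station $|D|$ firefighters permanently on $D$ and run a winning $(m-k)$-strategy for $G_k$ with the other $m-k$, using at most $k+(m-k)=m$ firefighters; keeping $D$ extinguished stops all spread out of $D$, so the fire on $V_k$ is governed by $G_k$ and is cleaned as prescribed, after which $D$ is still blocked and $B_T=\emptyset$. The one point I would be careful about here—and where a naive reading breaks—is that this domination needs the edges of $G$ inside $V_k$ to be \emph{exactly} those of $G_k$, i.e.\ $G_k$ should be the graph obtained by deleting the vertices of $D$; if $G_k$ were allowed to be a strictly sparser subgraph the claim would fail (take $G=K_n$ with $G_k$ edgeless and $k=0$). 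This same technical obligation, verifying that a stationed firefighter genuinely blocks recontamination across the separator, is the main thing to nail down in all three parts, and is the crux of the argument rather than any heavy computation.
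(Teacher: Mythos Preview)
Your arguments for all three parts mirror the paper's proofs closely: Part~1 is the standard path-decomposition sweep (the paper defers to \cite{Bernshteyn2021SearchingFA}); Part~2 is the same center-rooting induction (the paper picks the middle vertex of a longest path, which is a center); Part~3 is the same ``block $V\setminus V_k$ permanently and run the $(m-k)$-strategy on $G_k$'' argument.

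Your caveat on Part~3 is well taken and in fact sharper than the paper. As literally stated, the hypothesis $G_k\subseteq G$ permits $G_k$ to be a non-induced subgraph, and then the conclusion is false: your example $G=K_n$, $G_k=(V,\emptyset)$, $k=0$ gives $\ffn(G_k)=1$ yet $\ffn(G)=n$. The paper's own proof (set $F_i'=F_i\cup(V\setminus V_k)$) silently assumes that the dynamics on $V_k$ inside $G$ are governed by $G_k$, which holds only when $G_k=G[V_k]$. All applications of the lemma in the paper do use it this way (deleting vertices from $K_{n,m}$ and $C_n$), so the intended reading is clearly ``$G_k$ is obtained from $G$ by deleting at most $k$ vertices''; you are right to flag that the statement should say so.
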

\begin{table}[htb]
    \centering
    \begin{tabular}{c||c|c}
        Graph parameter &  \;Upper bound on $\ffn$ \;&  \;Lower bound on $\ffn$  \;\\
        \hline \hline
        \;Minimum degree \;$\delta_{\min}$& - & $\delta_{\min} + 1\leq \ffn(G)$
        \\
        \hline
         \;Maximum degree  \;$\delta_{\max}$ & - & - \\
        \hline
        Order $|V|$ & $\ffn(G) \leq |V|$ & - \\
        \hline
        Depth $d$ of a tree $T$ & $\ffn(T) \leq d + 1$ & - \\
        \hline
        Beta index $|E|/|V|$ & - & \;      
        $|E| \leq (\ffn(G) - 1) \cdot (|V| - \ffn(G)/2)$  \;\\
        \hline
        Pathwidth $\pw(G)$ & $\ffn(G) \leq \pw(G) + 1$ & - \\
        \hline
        Treewidth $\tw(G)$ & - & ? \\
        \hline
        Vertex cover number& \multirow{2}{*}{$\ffn(G) \leq \vcn(G) + 1$} & \multirow{2}{*}{-} \\
        $\vcn(G)$& & \\
    \end{tabular}
    \vspace{10pt}
    \caption{Tight upper and lower bounds on the firefighter number in terms of some graph parameters.
    For every missing entry except for the question mark, we prove that a bound based on this parameter is not possible.
    Proofs based on Lemma \ref{lem:lowerbounds}, Lemma \ref{lem:upperbounds}, and~\cite{Bernshteyn2021SearchingFA} for the entries are given in \ref{app:basicProps}.
    }
    \label{tab:basicProps}
\end{table}

\section{Firefighting on Complete Binary Trees}\label{ch:binaryTrees}

A complete binary tree is a rooted tree in which the distance of any leaf node to the root is the same and every non-leaf node has exactly two children.
In~\cite{Bernshteyn2021SearchingFA} it was shown that for every $k$ there exists a complete binary tree $T=(V,E)$ such that $\ffn(T)>k$.
We extend this result by giving almost tight bounds on the firefighter number of complete binary trees.\footnote{For a complete binary tree $\B = (V, E)$, the difference between our upper and lower bound is in $\mathcal{O}(\log(\log(|V|)))$.}

\begin{restatable}[Bounds on the Firefighter Number of Binary Trees]{theorem}{thmBinTree}
\label{thm:binTree}
Let $\B_d$ be the complete binary tree of depth $d$.
We have $\ffn(\B_0)=1$, $\ffn(\B_{1})=\ffn(\B_{2})=2$, $\ffn(\B_3) = \ffn(\B_4) = 3$ and $\ffn(\B_5), \ffn(\B_6) \in \{3, 4\}$.
For all $d\in\N_{\geq 7}$ it holds that
\[
    \left\lfloor \frac{d - 1}{2}\right\rfloor - \frac12 \log\left( \left\lfloor \frac{d - 5}{2} \right\rfloor\right) - 2 < \ffn(\B_d) \leq \left\lceil \frac{d}{2} \right\rceil + 1.
\]
\end{restatable}
\begin{proof}[Proof sketch]
To prove the upper bound, we provide a construction to extend a given winning $k$-strategy for $\B_{2d}$ to a winning $(k + 1)$-strategy for $\B_{2d + 1}$ that extinguishes the root node of $\B_{2d + 1}$ in each step.
Based on this specific strategy for $\B_{2d + 1}$, we then construct a winning $(k + 1)$-strategy for $\B_{2d + 2}$.

We prove the lower bound by showing that we can apply Lemma \ref{lem:lowerbounds}.1 for some adequately chosen $i$, i.e., that any subset $W\subseteq \B_d$ with $|W| = i$ has at least $\left\lfloor \frac{d-1}{2} \right\rfloor -   \frac12 \log_{2}\left(\left\lfloor\frac{d-5}{2}\right\rfloor \right) - 2$ neighbours.
To that end, we observe that any subset $W$ of nodes in $\B_d$ can be constructed by iteratively adding or removing complete binary trees of decreasing size, as shown in Figure~\ref{fig:treeDecomposition}.

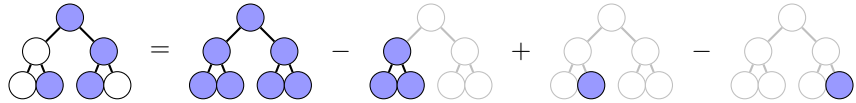
\begin{figure}[htb]
    \centering
    \begin{tikzpicture}[scale=0.3, every node/.style={font=\small}]
			\node[draw, circle, fill=blue!40] (v1) at (0,0) {};
			
			\node[draw, circle, fill=white] (v2) at (-1.5,-1.5) {};
			\node[draw, circle, fill=blue!40] (v3) at (1.5,-1.5) {};
			
			\node[draw, circle, fill=white] (v4) at (-2.1,-3) {};
			\node[draw, circle, fill=blue!40] (v5) at (-0.9,-3) {};
			\node[draw, circle, fill=blue!40] (v6) at (0.9,-3) {};
			\node[draw, circle, fill=white] (v7) at (2.1,-3) {};
			\draw[thick] (v1) -- (v2);
			\draw[thick] (v1) -- (v3);
			
			\draw[thick] (v2) -- (v4);
			\draw[thick] (v2) -- (v5);
			
			\draw[thick] (v3) -- (v6);
			\draw[thick] (v3) -- (v7);

                \node[] (eq) at (4,-1.5) {$=$};

                \node[draw, circle, fill=blue!40] (vv1) at (8,0) {};
			
			\node[draw, circle, fill=blue!40] (vv2) at (6.5,-1.5) {};
			\node[draw, circle, fill=blue!40] (vv3) at (9.5,-1.5) {};
			
			\node[draw, circle, fill=blue!40] (vv4) at (5.9,-3) {};
			\node[draw, circle, fill=blue!40] (vv5) at (7.1,-3) {};
			\node[draw, circle, fill=blue!40] (vv6) at (8.9,-3) {};
			\node[draw, circle, fill=blue!40] (vv7) at (10.1,-3) {};
                \draw[thick] (vv1) -- (vv2);
			\draw[thick] (vv1) -- (vv3);
			\draw[thick] (vv2) -- (vv4);
			\draw[thick] (vv2) -- (vv5);
			\draw[thick] (vv3) -- (vv6);
			\draw[thick] (vv3) -- (vv7);

                \node[] (eq) at (12,-1.5) {$-$};

                \node[draw, gray!50, circle, fill=white] (vv1) at (16,0) {};
			
			\node[draw, circle, fill=blue!40] (vv2) at (14.5,-1.5) {};
			\node[draw, gray!50, circle, fill=white] (vv3) at (17.5,-1.5) {};
			
			\node[draw, circle, fill=blue!40] (vv4) at (13.9,-3) {};
			\node[draw, circle, fill=blue!40] (vv5) at (15.1,-3) {};
			\node[draw, gray!50, circle, fill=white] (vv6) at (16.9,-3) {};
			\node[draw, gray!50, circle, fill=white] (vv7) at (18.1,-3) {};
                \draw[thick, gray!50] (vv1) -- (vv2);
			\draw[thick, gray!50] (vv1) -- (vv3);
			\draw[thick] (vv2) -- (vv4);
			\draw[thick] (vv2) -- (vv5);
			\draw[thick, gray!50] (vv3) -- (vv6);
			\draw[thick, gray!50] (vv3) -- (vv7);

                \node[] (eq) at (20,-1.5) {$+$};

                \node[draw, gray!50, circle, fill=white] (vv1) at (24,0) {};
			
			\node[draw, gray!50, circle, fill=white] (vv2) at (22.5,-1.5) {};
			\node[draw, gray!50, circle, fill=white] (vv3) at (25.5,-1.5) {};
			
			\node[draw, gray!50, circle, fill=white] (vv4) at (21.9,-3) {};
			\node[draw, circle, fill=blue!40] (vv5) at (23.1,-3) {};
			\node[draw, gray!50, circle, fill=white] (vv6) at (24.9,-3) {};
			\node[draw, gray!50, circle, fill=white] (vv7) at (26.1,-3) {};
                \draw[thick, gray!50] (vv1) -- (vv2);
			\draw[thick, gray!50] (vv1) -- (vv3);
			\draw[thick, gray!50] (vv2) -- (vv4);
			\draw[thick, gray!50] (vv2) -- (vv5);
			\draw[thick, gray!50] (vv3) -- (vv6);
			\draw[thick, gray!50] (vv3) -- (vv7);

                \node[] (eq) at (28,-1.5) {$-$};

                \node[draw, gray!50, circle, fill=white] (vv1) at (32,0) {};
			
			\node[draw, gray!50, circle, fill=white] (vv2) at (30.5,-1.5) {};
			\node[draw, gray!50, circle, fill=white] (vv3) at (33.5,-1.5) {};
			
			\node[draw, gray!50, circle, fill=white] (vv4) at (29.9,-3) {};
			\node[draw, gray!50,  circle, fill=white] (vv5) at (31.1,-3) {};
			\node[draw, gray!50, circle, fill=white] (vv6) at (32.9,-3) {};
			\node[draw, circle, fill=blue!40] (vv7) at (34.1,-3) {};
                \draw[gray!50, thick] (vv1) -- (vv2);
			\draw[gray!50, thick] (vv1) -- (vv3);
			\draw[gray!50, thick] (vv2) -- (vv4);
			\draw[gray!50, thick] (vv2) -- (vv5);
			\draw[gray!50, thick] (vv3) -- (vv6);
			\draw[gray!50, thick] (vv3) -- (vv7);
    \end{tikzpicture}		
    \caption[]{Decomposing a subset of a complete binary tree into multiple complete binary trees of varying depths.}
    \label{fig:treeDecomposition}
\end{figure}

The number of binary trees that is used in such a decomposition corresponds to the number of edges between $W$ and the rest of the graph, which can be bounded in terms of the number of neighbours of $W$.
Since the number of nodes in a complete binary tree equals $2^x - 1$ for some $x \in \N_{>0}$, we can use this decomposition to express the size of $W$ as a sum of powers of two and their negatives, where the total number of summands is again bounded in terms of the number of neighbours of $W$.
By applying some concepts from information theory (in particular the Hamming weight of the binary representation of a number), we show that representing $i$ as such a sum requires a certain number of summands, which then implies that any $W \subseteq \B_d$ with $|W| = i$ has to have at least $\left\lfloor \frac{d-1}{2} \right\rfloor -   \frac12 \log_{2}\left(\left\lfloor\frac{d-5}{2}\right\rfloor \right) - 2$ neighbours, finishing the proof.
\end{proof}
    
\section{NP-Hardness}\label{ch:npHardness}
Ben-Ameur and Maddaloni recently proved that \mfft ~is \texttt{NP-hard} (via a reduction from the \textsc{3-Partition} problem) in~\cite{https://doi.org/10.1002/net.22284}.
Moreover, they proved that a directed variant of the Hunter and Rabbit game where the rabbit moves on a digraph is also \texttt{NP-hard}.
However, their proof cannot be generalized to the undirected setting.
Nevertheless, in this section, we use a novel construction to show that both $\mff$ and $\mfft$ are \texttt{NP-hard} on undirected graphs as well.
Proving that \mff ~is \texttt{NP-hard} also implies that \mff ~on digraphs is \texttt{NP-hard}, since each undirected graph can be interpreted as a digraph where each edge has a reverse counterpart. We also improve on their results by showing that \mfft ~is \texttt{NP-hard} even on trees with diameter at most 4 and on spiders (i.e. trees where at most one node has a degree greater than 2).

In order to show \texttt{NP-hardness} of $\mff$, we will build a gadget $H(G,T)$ for an arbitrary graph $G$ such that there is a $T$-winning $m$-strategy for $G$ if and only if $\ffn(H(G,T))\leq 4m$.\footnote{Note that this construction cannot serve as a polynomial reduction from $\mfft$ to $\mff$, since the size of $H(G, T)$ will be quadratic in $T$ and therefore not polynomial in the encoding size of the $\mfft$-instance $(G, m, T)$.}
The main idea of this gadget is to attach the $2$-blowup $\mathbb{G}$ of $G$ (i.e. the graph that arises by replacing every node of $G$ with a $2$-clique and adding edges between all nodes of two cliques if the original two nodes of $G$ were connected) to a circular structure consisting of a lower and an upper part, which serve as a timed fuse and as an interface between $\mathbb{G}$ and the fuse.
As a result, the only way to extinguish the whole graph with $4m$ firefighters is to first extinguish the entire lower part, then move along the circle, extinguish $\mathbb{G}$ as fast as possible, and finally catch the fire that spreads in the lower part, right in time before it spreads to the upper part again.
This is exactly possible if there is a $T$-winning $m$-strategy for $G$.
To utilize this gadget we can use the \texttt{strongly NP-complete} \textsc{BinPacking} problem~\cite{10.5555/574848} to build a graph such that there is a $T$-winning $m$-strategy if and only if the \textsc{BinPacking} instance is a yes-instance, which proves that \textsc{BinPacking} can be reduced to $\mff$. Thus, we can conclude that \mff ~is \texttt{NP-hard}.

\begin{definition}[Gadget $H(G,T)$]
    Let $G$ be an arbitrary graph, $m\in \N_{>0}$ and $T\in \N_{\geq 2}$.
    The graph $H(G,T)$ is defined by a block structure, where the blocks $X,Y,Z$ are $2m$-cliques, $\mathbb{G}$ is the $2$-blowup of $G$, and all other blocks are $m$-cliques.
    Finally, we add an edge between every pair of nodes from two different blocks from the following list of block combinations: $(\mathbb{G}, Y), (A,X), (X,Y), (Y,Z), (Z,B)$, $(A,P_i^1)$ for all $i\in [2T+2]$, $(B,P_i^{T+1})$ for all $i\in [2T+2]$ and $(P_i^{j},P_i^{j+1})$ for all $i\in [2T+2]$ and $j\in [T]$.
\end{definition}

\noindent See Figure \ref{fig:TimeGadget} for a visualization of this gadget.
We further define the $i$-th path $P_i \coloneqq \bigcup_{j \in [T + 1]} P_i^{j}$ and the set of all paths as $\mathcal{P}\coloneqq \bigcup_{i\in[2T+2]} P_i$.

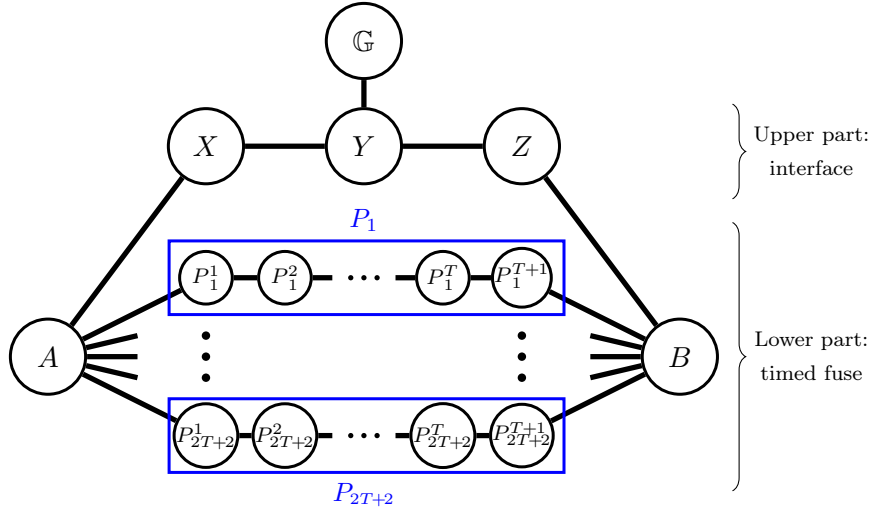
\begin{figure}[]
    \centering
    \begin{tikzpicture}[scale=0.7, every node/.style={font=\scriptsize}]
        \node[draw, line width=0.4mm, circle, minimum size=1.0cm, inner sep=0pt] (A) at (-6,0) {\normalsize $A$};
        \node[draw, line width=0.4mm, circle, minimum size=1.0cm, inner sep=0pt] (B) at (6,0) {\normalsize$B$};
        \node[draw, line width=0.4mm, circle, minimum size=1.0cm, inner sep=0pt] (X) at (-3,4) {\normalsize$X$};
        \node[draw, line width=0.4mm, circle, minimum size=1.0cm, inner sep=0pt] (Y) at (0,4) {\normalsize$Y$};
        \node[draw, line width=0.4mm, circle, minimum size=1.0cm, inner sep=0pt] (Z) at (3,4) {\normalsize$Z$};
        \node[draw, line width=0.4mm, circle, minimum size=1.0cm, inner sep=0pt] (G) at (0,6) {\normalsize$\mathbb{G}$};
        \draw[line width=0.7mm, shorten >=-0.0cm, shorten <=-0.0cm] (A) -- (X);
        \draw[line width=0.7mm, shorten >=-0.0cm, shorten <=-0.0cm] (X) -- (Y);
        \draw[line width=0.7mm, shorten >=-0.0cm, shorten <=-0.0cm] (Y) -- (Z);
        \draw[line width=0.7mm, shorten >=-0.0cm, shorten <=-0.0cm] (Z) -- (B);
        \draw[line width=0.7mm, shorten >=-0.0cm, shorten <=-0.0cm] (Y) -- (G);

        \node[draw, line width=0.4mm, circle, minimum size=0.7cm, inner sep=0pt] (P11) at (-3,1.5) {$P\mathrlap{}^{1}_{1}$};
        \node[draw, line width=0.4mm, circle, minimum size=0.7cm, inner sep=0pt] (P12) at (-1.5,1.5) {$P\mathrlap{}^{2}_{1}$};
        
        \node[draw, line width=0.4mm, circle, minimum size=0.7cm, inner sep=0pt] (P1t) at (1.5,1.5) {$P\mathrlap{}^{T}_{1}$};
        \node[draw, line width=0.4mm, circle, minimum size=0.7cm, inner sep=0pt] (P1t+1) at (3,1.5) {$P\mathrlap{}^{\text{\tiny \smash{$T\mkern-5mu+\mkern-5mu1$}}}_{1}$};
        \draw[line width=0.7mm, shorten >=-0.0cm, shorten <=-0.0cm] (A) -- (P11);
        \draw[line width=0.7mm, shorten >=-0.0cm, shorten <=-0.0cm] (P11) -- (P12);
        \draw[line width=0.7mm, shorten >=-0.0cm, shorten <=-0.0cm] (P12) -- (-0.6,1.5);
        \fill (-0.25,1.5) circle (0.05cm);
        \fill (0,1.5) circle (0.05cm);
        \fill (0.25,1.5) circle (0.05cm);
        \draw[line width=0.7mm, shorten >=-0.0cm, shorten <=-0.0cm] (0.6,1.5) -- (P1t);
        \draw[line width=0.7mm, shorten >=-0.0cm, shorten <=-0.0cm] (P1t) -- (P1t+1);
        \draw[line width=0.7mm, shorten >=-0.0cm, shorten <=-0.0cm] (P1t+1) -- (B);
        \draw[draw=\pathcolor, line width=0.4mm] (-3.7,2.2) rectangle (3.8,0.8);

        \node at (0, 2.6) {\textcolor{\pathcolor}{\small$P_1$}};
        \fill (-3,0.4) circle (0.08cm);
        \fill (-3,0) circle (0.08cm);
        \fill (-3,-0.4) circle (0.08cm);
        \draw[line width=0.7mm, shorten >=-0.0cm, shorten <=-0.0cm] (A) -- (-4.3,0.4);
        \draw[line width=0.7mm, shorten >=-0.0cm, shorten <=-0.0cm] (A) -- (-4.3,0.0);
        \draw[line width=0.7mm, shorten >=-0.0cm, shorten <=-0.0cm] (A) -- (-4.3,-0.4);
        
        \fill (3,0.4) circle (0.08cm);
        \fill (3,0) circle (0.08cm);
        \fill (3,-0.4) circle (0.08cm);

        \draw[line width=0.7mm, shorten >=-0.0cm, shorten <=-0.0cm] (B) -- (4.3,0.4);
        \draw[line width=0.7mm, shorten >=-0.0cm, shorten <=-0.0cm] (B) -- (4.3,0.0);
        \draw[line width=0.7mm, shorten >=-0.0cm, shorten <=-0.0cm] (B) -- (4.3,-0.4);

        \node[draw, line width=0.4mm, circle, minimum size=0.7cm, inner sep=0pt] (Pm1) at (-3,-1.5) {$P^{1}_{\text{\tiny \smash{$2T\mkern-6mu+\mkern-6mu2$}}}$};
        \node[draw, line width=0.4mm, circle, minimum size=0.7cm, inner sep=0pt] (Pm2) at (-1.5,-1.5) {$P^{2}_{\text{\tiny \smash{$2T\mkern-6mu+\mkern-6mu2$}}}$};
        
        \node[draw, line width=0.4mm, circle, minimum size=0.7cm, inner sep=0pt] (Pmt) at (1.5,-1.5) {$P^{T}_{\text{\tiny \smash{$2T\mkern-6mu+\mkern-6mu2$}}}$};
        \node[draw, line width=0.4mm, circle, minimum size=0.7cm, inner sep=0pt] (Pmt+1) at (3,-1.5) {$P^{\text{\tiny \smash{$T\mkern-5mu+\mkern-5mu1$}}}_{\text{\tiny \smash{$2T\mkern-6mu+\mkern-6mu2$}}}$};
        \draw[line width=0.7mm, shorten >=-0.0cm, shorten <=-0.0cm] (A) -- (Pm1);
        \draw[line width=0.7mm, shorten >=-0.0cm, shorten <=-0.0cm] (Pm1) -- (Pm2);
        \draw[line width=0.7mm, shorten >=-0.0cm, shorten <=-0.0cm] (Pm2) -- (-0.6,-1.5);
        \fill (-0.25,-1.5) circle (0.05cm);
        \fill (0,-1.5) circle (0.05cm);
        \fill (0.25,-1.5) circle (0.05cm);
        \draw[line width=0.7mm, shorten >=-0.0cm, shorten <=-0.0cm] (0.6,-1.5) -- (Pmt);
        \draw[line width=0.7mm, shorten >=-0.0cm, shorten <=-0.0cm] (Pmt) -- (Pmt+1);
        \draw[line width=0.7mm, shorten >=-0.0cm, shorten <=-0.0cm] (Pmt+1) -- (B);
        \draw[draw=\pathcolor, line width=0.4mm] (-3.7,-2.2) rectangle (3.8,-0.8);

        \node at (0, -2.6) {\textcolor{\pathcolor}{\small$P_{\text{\tiny \smash{$2T\mkern-6mu+\mkern-6mu2$}}}$}};

        \draw [decorate,decoration={brace,amplitude=5pt}] (7,4.8) -- (7,3);
        \node at (8.5,4.2) {Upper part:};
        \node at (8.5,3.6) {interface};

        \draw [decorate,decoration={brace,amplitude=5pt}] (7,2.55) -- (7,-2.55);
        \node at (8.5,0.3) {Lower part:};
        \node at (8.5,-0.3) {timed fuse};
        
    \end{tikzpicture}		
    \caption[]{Gadget graph $H(G,T)$ with the property that there is a $T$-winning $m$-strategy for $G$ if and only if $\ffn(H(G,T))\leq 4m$.
    The blocks $X,Y,Z$ are $2m$-cliques and every $P_i^j$ and the blocks $A,B$ are $m$-cliques.
    An edge between two blocks $\mathcal{B}_1$ and $\mathcal{B}_2$ in this image corresponds to connecting every node in $\mathcal{B}_1$ to every node in $\mathcal{B}_2$.}
    \label{fig:TimeGadget}
\end{figure}

\noindent Due to the following helpful lemma, we can restrict our analysis to strategies with firefighter sets such that each firefighter set either contains all nodes of a certain clique or none.

\begin{restatable}[Cliques]{lemma}{lemCliques}
\label{lem:cliques}
    Let $G$ be an arbitrary graph containing a clique $K$ with $N(v) \setminus K = N(w) \setminus K$ for all $v,w\in K$, and let $S$ be a $T$-winning $m$-strategy.
    Then there exists a $T$-winning $m$-strategy $S' = (F'_1, \dots, F'_T)$ such that for all $i\in [T]$ and $v\in K$, we have $v \in F'_i$ if and only if $ K\subseteq F'_i$.
\end{restatable}

\begin{restatable}[Time Gadget]{proposition}{propTimeGadget}
\label{prop:timegadget}
    Let $G$ be a graph. There is a $T$-winning $m$-strategy for $G$ if and only if $\ffn(H(G,T))\leq 4m$. 
\end{restatable}
\begin{proof}[Proof sketch]
    If there is a $T$-winning $m$-strategy for $G$, we can explicitly state a winning $4m$-strategy for $H(G, T)$ (see \ref{app:npHardness}), which implies $\ffn(H(G,T))\leq 4m$.

    Next, we assume that there is no $T$-winning $m$-strategy for $G$.
    Using Lemma \ref{lem:cliques}, it is easy to see that this is equivalent to there not being a $T$-winning $2m$-strategy for $\mathbb{G}$.
    Before giving the technical details of the actual rigorous proof, let us first give some intuition on why this means that no $4m$-winning strategy for $H(G, T)$ exists.
    Every node in $X$, $Y$, $Z$, $A$ or $B$ has degree at least $4m$, so the first nodes to be extinguished have to be either in $\mathbb{G}$ or in $\mathcal{P}$.

    If we start by extinguishing nodes in $\mathbb{G}$, we either do not extinguish all nodes in $Y$ afterwards, which lets all of the nodes in $\mathbb{G}$ reignite, or we keep extinguishing all nodes in $Y$.
    In the second case, we use $2m$ firefighters on $Y$, leaving only $2m$ firefighters for the rest of the graph.
    As each node in the graph outside of $\mathbb{G}$ has degree greater than $2m$, it is not possible to make any further progress.

    Let us instead start by extinguishing nodes in $\mathcal{P}$.
    We can initially reach the point where $\mathcal{P}$ and $A$ are completely extinguished.
    Trying to extinguish $B$ next only mirrors the position, as we would lose $A$ in turn.
    Extinguishing $X$ requires all $4m$ firefighters (positioned in $X$ and $Y$) and hence the last block of every path $P_i$ catches on fire again.
    We can now stop the fire from spreading back to $X$ and start working on extinguishing $\mathbb{G}$ by placing $2m$ firefighters in $Y$.
    If we try to extinguish $\mathbb{G}$ with the remaining $2m$ firefighters, it will take us at least $T + 1$ steps.
    By Lemma \ref{lem:cliques}, we can assume that we either use no firefighters or at least $m$ firefighters in the path $P_i$ for any $i \in [2T + 2]$.
    Hence, with $2m$ firefighters in $T$ steps, we can influence at most $2T$ of the $2T + 2$ paths, meaning that at least $2$ of the paths are fully burning again after $T$ steps.
    Therefore, when we finish extinguishing $\mathbb{G}$, the block $A$ will already be reignited.
    If we do not want the entire block $\mathbb{G}$ to reignite, we have to extinguish $Y$ again.
    The only possible progress with the remaining $2m$ firefighters at this stage is to position them at $Z$, so that the nodes in $Y$ actually stay extinguished for the first time.
    However, then only $Y$ and $\mathbb{G}$ are extinguished, and one can see that it is not possible to extinguish any additional block without reverting to a (possibly mirrored) previous state of the game.
    As we have exhausted all reasonable solution approaches, there can be no $4m$-winning strategy for $H(G, T)$.

    Let us now give a brief overview of the technicalities of our proof.
    In order to avoid nested case distinctions, we instead analyze the subsets between which the set of burning nodes in the graph can transition under a $4m$-strategy.
    In particular, we consider the following node subsets:
    $\Omega_1 = H(G,T) \setminus ( A \cup \mathcal{P} )$, \;
    $\Omega_2 = H(G,T) \setminus (\mathbb{G} \cup \mathcal{P})$, \;
    $\Omega_3 = \mathbb{G} \cup B \cup Y \cup Z \cup \bigcup_{i \in [2T + 2]} P_i^{T+1}$, \;
    $\Omega_4 = B \cup Y \cup Z \cup P_k \cup P_\ell \cup \{v\}$, \;
    $\Omega_5 = A \cup B \cup Y \cup Z \cup P_k \cup P_\ell$, \; $\Omega_6 = A \cup B \cup X \cup Z \cup P_k \cup P_\ell$, and $\Omega_7 = A \cup B \cup Y \cup Z \cup P_k \cup (P_\ell \setminus P_\ell^{1}) \cup \{v\}$ where $v$ is any node from $ \mathbb{G}$ and $k,\ell\in [2T+2]$ with $k\neq \ell$.
    We call a subset of burning nodes \emph{$\Omega_\ell$-blocked}, if it contains $\Omega_\ell$ or one of its symmetric variants, regarding the following symmetries:
    Switching $A$ and $B$, $X$ and $Z$ as well as $P_i^j$ with $P_i^{T + 2 - j}$ for all $i \in [2T + 2], j \in [T + 1]$ (i.e., mirroring the graph as shown in Figure \ref{fig:TimeGadget} horizontally), switching the complete paths $\{P_1, \ldots, P_{2T + 2}\}$ according to any permutation, or replacing $v$ by any other node in $\mathbb{G}$.

    In \ref{app:npHardness} we prove that for any $4m$-strategy and any subset of burning nodes that is $\Omega_n$-blocked for some $n \in [7]$, after finitely many steps, the subset of burning nodes will be $\Omega_{n'}$-blocked for some $n' \in [7]$.
    Since the initial state of a fully burning graph is $\Omega_1$-blocked, this means that there is no winning $4m$-strategy for $H(G,T)$, as the empty set is not $\Omega_{n}$-blocked for any $n \in [7]$.
\end{proof}

\begin{restatable}[Hardness of \mff]{theorem}{mffHard}
\label{thm:mffHard}
    $\mff$ is \texttt{NP-hard}.
\end{restatable}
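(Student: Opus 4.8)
The plan is to reduce from \textsc{3-Partition}, which is strongly \texttt{NP-hard}, so that all gadget sizes may be taken polynomial in the input. Recall that an instance consists of positive integers $a_1,\dots,a_{3n}$ and a bound $B$ with $\sum_i a_i = nB$ and $B/4 < a_i < B/2$, and asks whether the $a_i$ split into $n$ triples each summing to $B$. Given such an instance I would build a graph $G$ and a threshold $m$ of the form $m = B + c$ for a fixed overhead $c$, so that $\ffn(G) \le m$ holds exactly when a valid partition exists. The governing idea is that clearing $G$ should correspond to packing the numbers $a_i$ into $n$ ``bins'', where a bin is a single round of the game and its capacity $B$ is the number of firefighters left free after the fixed overhead $c$.

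The building blocks would exploit two features of the dynamics $S_{t+1}=N[S_t\setminus F_t]$. First, a clique $K_s$ satisfies $\ffn(K_s)=s$ and is moreover \emph{all-or-nothing}: unless all $s$ of its still-burning vertices are extinguished in a single round, its closed neighbourhood regenerates the whole clique, so any partial effort is wasted. I would therefore encode each number $a_i$ as an item-clique on $a_i$ vertices, whose removal costs exactly $a_i$ firefighters in one round. Second, a path is cleared by a sweeping ``firewall'' that advances one vertex per round using a constant number of firefighters, which furnishes a natural internal clock. The technical core is a backbone gadget, built from such a swept structure, that (i) occupies the fixed overhead $c$ in every active round, (ii) forces the game to have only $n$ rounds in which item-cliques are vulnerable, and (iii) makes the clearing of an item permanent only if it is carried out during one of these $n$ rounds. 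Attaching the $3n$ item-cliques to this backbone so that each may be assigned to any of the $n$ bins, the leftover capacity per round is exactly $m-c = B$; since each $a_i \in (B/4,B/2)$, a feasible assignment necessarily places three items per bin, which is precisely a \textsc{3-Partition}.

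For correctness I would argue both directions. If a valid partition exists, the corresponding schedule --- sweep the backbone and, in the bin-round of each triple, simultaneously extinguish the three item-cliques of that triple together with the overhead --- clears $G$ with $m$ firefighters. Conversely, from any strategy using at most $m$ firefighters I would extract a partition by reading off, for each item-clique, the unique round in which it is finally cleared; the all-or-nothing property forces this round into the window, and the per-round budget forces the items cleared together to sum to at most $B$, hence, by the size restriction on the $a_i$, to exactly $B$ in triples. The main obstacle is points (ii)--(iii) of the backbone: since no time horizon appears in the input, there is a priori unlimited time, and if items could be cleared permanently at independent moments the packing pressure would vanish entirely. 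The crux is thus to design the backbone so that item-clearing is genuinely confined to a short synchronized window and cannot be spread out, and to prove --- against an omniscient adversary that may exploit recontamination --- that no clever interleaving, delayed attack, or partial re-spreading circumvents the per-round capacity $B$. Once $\mff$ is shown \texttt{NP-hard}, hardness on digraphs is immediate by replacing each edge with two oppositely directed arcs.
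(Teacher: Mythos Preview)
Your high-level plan is exactly the paper's: reduce from a strongly \texttt{NP-hard} packing problem, encode each number $a_i$ as a clique (so that clearing it is all-or-nothing and costs $a_i$ firefighters in the round it is cleared), and force the packing constraint by means of a time-limiting gadget that leaves only the bin capacity free in each of a bounded number of rounds. The paper uses \textsc{BinPacking} rather than \textsc{3-Partition}, but this difference is cosmetic.

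The genuine gap is that you have not constructed the gadget that does the work, and you explicitly acknowledge this: you write that ``the crux is thus to design the backbone so that item-clearing is genuinely confined to a short synchronized window and cannot be spread out, and to prove \dots\ that no clever interleaving, delayed attack, or partial re-spreading circumvents the per-round capacity''. That crux is the entire content of the theorem. Without a time horizon in the input of $\mff$, there is no a priori reason the items cannot be cleared one by one over many rounds using far fewer than $B$ free firefighters per round; your sketch gives no mechanism that prevents this. A swept path does furnish a clock, but the difficulty is two-sided: you must argue both that the overhead $c$ is \emph{unavoidably} paid every round during the window (the firefighters could instead abandon the sweep and let it regrow) and that after the window closes the items are no longer reachable in any useful way. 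Nothing in the proposal pins this down.

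For comparison, the paper's gadget $H(G,T)$ is the concrete realisation of your ``backbone''. It wraps the item graph $\mathbb{G}$ inside a cycle whose lower half is $2T+2$ parallel fuse-paths of length $T+1$ and whose upper half is a short interface $A$--$X$--$Y$--$Z$--$B$ made of $m$- and $2m$-cliques. The proof that $\ffn(H(G,T))\le 4m$ forces a $T$-step clearing of $\mathbb{G}$ with $2m$ firefighters is a nontrivial case analysis: one identifies seven families $\Omega_1,\dots,\Omega_7$ of ``blocked'' burning configurations, shows the fully burning graph is $\Omega_1$-blocked, and proves that from any $\Omega_n$-blocked state every $4m$-strategy leads in finitely many steps to another $\Omega_{n'}$-blocked state. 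The parallel fuses are what defeat the ``spread it out over many rounds'' attack: with only $2m$ spare firefighters and $2T+2$ fuses, at least two fuses are untouched over any $T$ steps, so the fire inevitably races back around the cycle. This is precisely the missing ingredient in your proposal.
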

\begin{proof}
Let an instance of the following decision problem be given.
\begin{framed}
\begin{tabular}{l}
\textbf{(\textsc{BinPacking}):} \\
\hspace{0.5cm}\textbf{Input:} Items $i_1, \ldots, i_n$, sizes $s_1, \ldots, s_n \in \N_{>0}$ and $b \in \N_{>0}$ bins with capacity $c \in \N_{>0}$\\ 
\hspace{0.5cm} each, such that $\sum_{k \in [n]} s_k \leq b \cdot c$.\\
\hspace{0.5cm}\textbf{Output:} Is there a packing of all items into the bins fulfilling the capacity constraints?
\end{tabular}
\end{framed}
    \noindent We construct a graph $G$ as $\bigcup_{k \in [n]} K_{s_k}$, i.e., the graph $G$ consists of $n$ connected components, which are complete graphs of varying sizes.
    This graph $G$ can be extinguished by $c$ firefighters in at most $b$ steps if and only if the \textsc{BinPacking} instance is a yes-instance, since we can assume that only full cliques are extinguished in each step due to Lemma \ref{lem:cliques}. By using the construction from Proposition \ref{prop:timegadget} with this $G$ and $T = b$, we reduce this instance of \textsc{BinPacking} to determining if $\ffn(H(G,T)) \leq 4c$. Note that the size of this instance of $\mff$ is in $\mathcal{O}(c^4 \cdot b^2)$ and therefore polynomial in the unary encoding size of the \textsc{BinPacking} problem.
    Since \textsc{BinPacking} is \texttt{strongly NP-complete}~\cite{10.5555/574848}, it follows that in general, determining whether a given graph can be extinguished by $m$ firefighters is \texttt{NP-hard}.
\end{proof}

\noindent Note that this Theorem implies that $\mff$ is even \texttt{strongly NP-hard}, since for any non-trivial instance, $m$ is bounded by $|V|$, so the input size of the problem remains in $\mathcal{O}(\text{size}(G))$ even if $m$ is encoded in unary.

While the proof strategy for Theorem \ref{thm:mffHard} is based on first showing that an \texttt{NP-hard} problem can be reduced to solving an instance of $\mfft$ and this specific instance is afterwards shown to be equivalent to an instance of $\mff$, we cannot generally reduce $\mfft$ to $\mff$ with our construction:
Since the number of vertices of $H(G, T)$ is greater than $T^2$, the size of the graph $H(G, T)$ is in general not polynomial in the encoding size of $G$, $T$ and $m$.
However, a straightforward polynomial reduction for the converse direction exists:

\begin{proposition}[Reducing $\mff$ to \mfft]\label{prop:Reduction}
    \mff ~can be polynomially reduced to \mfft.
\end{proposition}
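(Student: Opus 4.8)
The plan is to exploit the fact that, although a winning $m$-strategy may require superpolynomially many steps (as witnessed by Theorem \ref{thm:length}), the length of a \emph{shortest} winning $m$-strategy is nevertheless bounded by $2^{|V|}$, a number whose binary encoding has size only $|V|+1$. Concretely, given an instance $(G,m)$ of $\mff$ with $G=(V,E)$, I would output the $\mfft$-instance $(G,m,2^{|V|})$. Writing $2^{|V|}$ in binary takes polynomial time, and the resulting instance has size polynomial in that of $(G,m)$, so this map is a polynomial-time reduction.

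For correctness I need the key claim that $\ffn(G)\leq m$ implies $T_m(G)\leq 2^{|V|}$. First I would fix a shortest winning $m$-strategy $(F_1,\dots,F_T)$ and consider the associated sequence of burning sets $B_0=V, B_1,\dots,B_T=\emptyset$. The crucial observation is that these sets must be pairwise distinct: since $B_t$ is a deterministic function of $B_{t-1}$ and $F_t$, a repetition $B_i=B_j$ with $i<j$ would make the truncated strategy $(F_1,\dots,F_i,F_{j+1},\dots,F_T)$ produce exactly the burning sets $B_0,\dots,B_i,B_{j+1},\dots,B_T=\emptyset$, hence a strictly shorter winning strategy, contradicting minimality. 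As there are only $2^{|V|}$ subsets of $V$, the distinctness of $B_0,\dots,B_T$ forces $T+1\leq 2^{|V|}$, i.e. $T_m(G)=T\leq 2^{|V|}-1$.

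Given this claim, both directions of the reduction are immediate. If $(G,m)$ is a yes-instance of $\mff$, then $G$ is $m$-winning, so $T_m(G)\leq 2^{|V|}$ by the claim, and hence $G$ is $m$-winning in time $2^{|V|}$, making $(G,m,2^{|V|})$ a yes-instance of $\mfft$. Conversely, any $2^{|V|}$-winning $m$-strategy is in particular a winning $m$-strategy, so a yes-instance of $\mfft$ yields $\ffn(G)\leq m$ and thus a yes-instance of $\mff$.

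The only point requiring care — and what I expect to be the main conceptual obstacle — is the interaction between the superpolynomial lower bound on strategy length from Theorem \ref{thm:length} and the requirement that the reduction be polynomial. One cannot use a polynomially bounded time horizon here; the reduction works precisely because the time horizon, though exponential in $|V|$, is encoded in binary and therefore contributes only polynomially to the instance size. I would state this dependence on the binary encoding of $T$ explicitly, since under a unary encoding of $T$ the reduction would fail to be polynomial.
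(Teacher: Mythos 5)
Your proposal is correct and follows essentially the same route as the paper: reduce $(G,m)$ to $(G,m,2^{|V|})$, observe that a shortest winning $m$-strategy never revisits a burning set (so its length is at most $2^{|V|}$), and note that $2^{|V|}$ has encoding length $\mathcal{O}(|V|)$ so the reduction is polynomial. Your explicit justification of why the burning sets are pairwise distinct (via truncating at a repeat) is a slightly more detailed version of the paper's one-line remark, but the argument is the same.
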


\begin{proof}
    Let $G$ be the given graph and $m$ the given number of firefighters. $G$ can be solved by $m$ firefighters if and only if it can be solved by $m$ firefighters in at most $2^{|V(G)|}$ steps, since any shortest winning $m$-strategy will not reach any burning set more than once. As the encoding length of $2^{|V(G)|}$ is in $\mathcal{O}(|V(G)|)$, this reduction is polynomial in the encoding size of the initial question.
\end{proof}

\noindent In particular, this immediately implies that $\mfft$ is \texttt{NP-hard} as well.
However, we can go even further:

\begin{restatable}[Hardness of \mfft]{theorem}{thmMffinTIMEHard}
\label{thm:mffinTIMEHard}
    The problem $\mfft$ is \texttt{NP-hard} even on trees.
    Moreover, it is \texttt{NP-hard} even on trees with diameter at most $4$ and on spiders (trees where at most one node has a degree greater than $2$). 
\end{restatable}
\begin{proof}[Proof sketch]
    We prove this via a reduction from \textsc{3-Partition}, which is \texttt{strongly NP-hard}, see \cite{10.5555/574848}.
    For some $k \in \N_{>0}$, let $a_1, \dots, a_{3k} \in \N_{>0}$ be the positive integer numbers in a given instance of \textsc{3-Partition}, and set $s = \sum_{i = 1}^{3k} a_i$.
    Without loss of generality, we may assume $\tfrac{s}{k} \in \N_{>0}$. Otherwise, a 3-partition of the numbers trivially cannot exist.
    
    We now construct a graph $G$ which we claim is $(\tfrac{s}{k} + 3s + 1)$-winning in time $k$ if and only if there exists a 3-partition of $a_1, \dots, a_{3k}$ (this claim is proven in \ref{app:npHardness}).
    Let $T_i$ be an arbitrary tree with $a_i + s$ nodes for each $i \in [3k]$.
    Then the graph $G$ arises by adding a new node $c$ and, for each $i \in [3k]$, adding an edge between $c$ and an arbitrary node from $T_i$ as visualized in Figure \ref{fig:TreeConstruction2}.
    Note that $|G| = s + 3sk + 1 = k \cdot (\tfrac{s}{k} + 3s + 1) - (k - 1)$, which implies that at most $k - 1$ nodes can be reignited during a successful strategy in time $k$ with $\tfrac{s}{k} + 3s + 1$ firefighters.
    By choosing a star graph (resp. a path graph) for each $T_i$ and attaching $c$ to the internal node (resp. to an end of the path), we get the result for trees with diameter $\leq 4$ (resp. for spiders).
\end{proof}
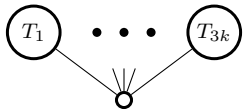
\begin{figure}[H]
    \centering
\begin{tikzpicture}[scale=0.6, every node/.style={font=\scriptsize}]

        \node[draw, line width=0.4mm, circle, minimum size=0.2cm, inner sep=0pt, fill=white] (v1) at (0,-0.5) {};

        \draw[] (v1) -- (-0.25,0.2);
        \draw[] (v1) -- (0,0.2);
        \draw[] (v1) -- (0.25,0.2);

        \node[draw, line width=0.4mm, circle, minimum size=0.7cm, inner sep=0pt] (K1) at (-2,1) {$T_1$};
        \node[draw, line width=0.4mm, circle, minimum size=0.7cm, inner sep=0pt] (K4) at (2,1) {$T_{3k}$};
        \fill (-0.6,1) circle (0.1cm);
        \fill (0.0,1) circle (0.1cm);
        \fill (0.6,1) circle (0.1cm);
        \draw[] (v1) -- (K1);
        \draw[] (v1) -- (K4);

\end{tikzpicture}
    \caption{Construction of $G$. Every $T_i$ is an arbitrary tree with $a_i + s$ nodes.} 
    \label{fig:TreeConstruction2}
\end{figure}

\section{Graphs with Long Shortest Strategies}\label{sec:shortStrat}

After proving that the problems $\mff$ and $\mfft$ are \texttt{NP-hard}, a naturally arising question is whether these problems are in \texttt{NP}, i.e., whether there exists a polynomial certificate for yes-instances.
A natural candidate for such a certificate would be a winning $m$-strategy (resp. winning $m$-strategy in time $T$), since a strategy can be verified in polytime with respect to its size and the size of the graph.
In this section, we will show that such a straightforward approach to attain a polynomial certificate does not work, by giving a class of graphs where the shortest possible winning $m$-strategy takes superpolynomially (in the size of the graph) many steps.

To this end, we first define the auxiliary graph $H_m$ (see Figure \ref{fig:LongStrategies} (a)), which has the useful property that any winning $m$-strategy on $H_m$ has to use at least $m - 1$ firefighters for a certain number of consecutive steps.
Here, $\alpha$ and $\beta$ (which will be used in the upcoming definitions) denote some fixed values in $\N_{>0}$ that fulfill $2\beta + 2 \geq \alpha \geq \beta + 3$, e.g., $\beta = 1, \alpha = 4$.
In particular, they do not depend on $m$ or $X$.

\begin{definition}[Auxiliary Graph $H_m$]\label{def:aux}
    For $m \geq 2$, we set $H_m = (V, E)$ with $V = \{v_1, \dots, v_{m - 1}\} \cup \{w_1, \dots, w_\alpha\}$ and $E = \{\{v_i, v_j\} : i, j \in [m - 1], i \neq j\} \cup \{\{v_i, w_j\}: i \in [m - 1], j \in [\alpha]\}$.
\end{definition}

\noindent
Next, we construct a graph $G(m, X)$ (see Figure \ref{fig:LongStrategies} (b)) with $\ffn(G(m, X)) = m$ that contains a subgraph $X$ with $\ffn(X) = m - 1$.
We will prove that any $m$-winning strategy for $G(m, X)$ needs to fully extinguish the graph $X$ with $m - 1$ firefighters at least $m - 1$ times.

\begin{definition}[$G(m, X)$]\label{def:GkX}
    For any $m \in \N_{\geq 2}$ and graph $X$ with $\ffn(X) = m - 1$, the graph $G(m, X)$ arises in the following way:
    First, add an additional node $c$ to $X$, which shares an edge with every node in $X$.
    Next, add $m$ paths with $\beta$ nodes $(v^i_1, \dots, v^i_\beta)$ for $i \in [m]$, and for each $i \in [m]$, connect $c$ to $v^i_1$.
    Finally, add $m$ auxiliary graphs $H_m$ as defined previously, say $H_m^1, \dots, H_m^m$, and for each $i \in [m]$, connect $v^i_\beta$ to one arbitrary node $u_i$ of the $(m - 1)$-clique contained in $H^i_m$.
\end{definition}

\begin{figure}[H]
    \centering
    \subfloat[$H_m$]{
\begin{tikzpicture}[scale=0.6, every node/.style={font=\scriptsize}]
        \node[draw, line width=0.4mm, circle, minimum size=1.0cm, inner sep=0pt] (K1) at (-3,3) {$K_{m-1}$};
    
        \node[draw, circle, minimum size=0.2cm, inner sep=0pt, fill=white] (v2) at (-3.8,4.7) {};
        \node[draw, circle, minimum size=0.2cm, inner sep=0pt, fill=white] (v3) at (-2.2,4.7) {};
        \draw[line width=1mm, shorten >=-0.0cm, shorten <=-0.0cm] (K1) -- (v2);
        \draw[line width=1mm, shorten >=-0.0cm, shorten <=-0.0cm] (K1) -- (v3);
        
        \draw [decorate,decoration={brace,amplitude=5pt}] (-3.9,5) -- (-2.1,5);
        \fill (-2.7,4.7) circle (0.05cm);
        \fill (-3.0,4.7) circle (0.05cm);
        \fill (-3.3,4.7) circle (0.05cm);
        \node at (-3, 5.5) {$\alpha$-times};
        \node at (-2.2, 0.75) {\;};	
    \end{tikzpicture}
}
\hspace{3cm}
\subfloat[$G(m,X)$]{
\begin{tikzpicture}[scale=0.5, every node/.style={font=\scriptsize}]
        \tikzset{decoration={snake,amplitude=.4mm,segment length=2mm,
                       post length=0mm,pre length=0mm}}
        \node[draw, circle, minimum size=0.2cm, inner sep=0pt, fill=white] (v1) at (-0.5,0) {};
        \node[draw, line width=0.4mm, circle, minimum size=1.0cm, inner sep=0pt] (G1) at (-2.5,0) {$X$};
        \draw[line width=1mm, shorten >=-0.0cm, shorten <=-0.0cm] (v1) -- (G1);

        \draw[] (v1) -- (0.2,-0.5);
        \draw[] (v1) -- (0.2,0);
        \draw[] (v1) -- (0.2,0.5);
 
        \node[draw, line width=0.4mm, circle, minimum size=1cm, inner sep=0pt] (K1) at (6.5,-2) {$H_m$};
        \node[draw, line width=0.4mm, circle, minimum size=1cm, inner sep=0pt] (K4) at (6.5,2) {$H_m$};
        \fill (6.5,-0.6) circle (0.1cm);
        \fill (6.5,0.0) circle (0.1cm);
        \fill (6.5,0.6) circle (0.1cm);

        \node[draw, circle, minimum size=0.2cm, inner sep=0pt] (p11) at (0.2,-2) {};
        \node[draw, circle, minimum size=0.2cm, inner sep=0pt] (unten1) at (1.2,-2) {};
        \node[draw, circle, minimum size=0.2cm, inner sep=0pt] (unten2) at (3.8,-2) {};
        \node[draw, circle, minimum size=0.2cm, inner sep=0pt] (p12) at (4.8,-2) {};

        \draw[] (v1) -- (p11);
        \draw[] (p11) -- (unten1);
        \draw[] (unten1) -- (1.7,-2);
        \fill (2.2,-2) circle (0.05cm);
        \fill (2.5,-2) circle (0.05cm);
        \fill (2.8,-2) circle (0.05cm);
        \draw[] (3.3,-2) -- (unten2);
        \draw[] (unten2) -- (p12);

        \draw [decorate,decoration={brace,amplitude=5pt}] (4.9,-2.5) -- (0.1,-2.5);
        \node at (2.5,-3.2) {path of length $\beta$};

        \node[draw, circle, minimum size=0.2cm, inner sep=0pt] (p41) at (0.2,2) {};
        \node[draw, circle, minimum size=0.2cm, inner sep=0pt] (oben1) at (1.2,2) {};
        \node[draw, circle, minimum size=0.2cm, inner sep=0pt] (oben2) at (3.8,2) {};
        \node[draw,circle, minimum size=0.2cm, inner sep=0pt] (p42) at (4.8,2) {};

        \draw[] (v1) -- (p41);
        \draw[] (p41) -- (oben1);
        \draw[] (oben1) -- (1.7,2);
        \fill (2.2,2) circle (0.05cm);
        \fill (2.5,2) circle (0.05cm);
        \fill (2.8,2) circle (0.05cm);
        \draw[] (3.3,2) -- (oben2);
        \draw[] (oben2) -- (p42);

        \draw[] (p12) -- (K1);
        \draw[] (p42) -- (K4);

        \draw [decorate,decoration={brace,amplitude=5pt}] (7.6,3) -- (7.6,-3);
        \node at (9.2,0) {$m$-times};

        \draw [decorate,decoration={brace,amplitude=5pt}] (0.1,2.5) -- (4.9,2.5);
        \node at (2.5,3.2) {path of length $\beta$};
\end{tikzpicture}
}
    \caption{When a thick edge connects two subgraphs A and B, then every node in A is connected to every node in B.
    The rightmost node of each path of length $\beta$ is connected to exactly one (arbitrary) node of $K_{m - 1}$ of the corresponding $H_m$.} 
    \label{fig:LongStrategies}
\end{figure}
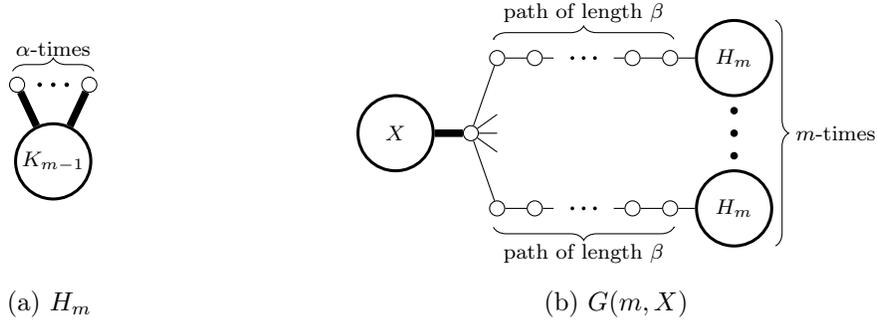

\begin{restatable}[Firefighter Number of $G(m, X)$]{lemma}{lemFfnofGkX}
\label{lem:ffnofGkX}
    $\ffn(G(m, X)) = m$.
\end{restatable}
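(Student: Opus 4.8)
The plan is to prove the two inequalities $\ffn(G(m,X)) \leq m$ and $\ffn(G(m,X)) \geq m$ separately, using the upper- and lower-bound toolkits established earlier.

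\textbf{Upper bound.} First I would exhibit an explicit winning $m$-strategy. Recall that $X$ satisfies $\ffn(X) = m-1$, so there is a winning $(m-1)$-strategy for $X$. The idea is to use one firefighter to keep the central node $c$ permanently extinguished, and then cleverly schedule the remaining $m-1$ firefighters. With $c$ held, the graph $X$ becomes disconnected from the paths, and the auxiliary graphs $H_m^i$ (each attached through a path of length $\beta$) are also separated from one another and from $X$ via $c$. However, one firefighter on $c$ plus $m-1$ firefighters is not obviously enough to handle everything simultaneously, so the honest approach is to extinguish the components in phases. I would first hold $c$, extinguish $X$ using the $m-1$ available firefighters (invoking $\ffn(X)=m-1$), and then sweep outward along each path into each $H_m^i$. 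Since each $H_m^i$ contains a $(m-1)$-clique plus $\alpha$ pendant-type nodes, and $\alpha, \beta$ are constants, each $H_m^i$ has a bounded pathwidth; alternatively I can appeal to Lemma~\ref{lem:upperbounds}.1 or directly bound $\ffn(H_m) \leq m$ via its structure. The cleanest route is probably to show $\pw(G(m,X)) \leq m-1$ or to build the strategy component-by-component while keeping $c$ extinguished throughout, so the fire cannot leak back across $c$.

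\textbf{Lower bound.} For $\ffn(G(m,X)) \geq m$, I would apply Lemma~\ref{lem:lowerbounds}.3: it suffices to exhibit a subgraph $G' \subseteq G(m,X)$ with $\ffn(G') \geq m$. The natural candidate is one copy of $H_m$ itself, so the heart of the matter is showing $\ffn(H_m) \geq m$. For this I would use Lemma~\ref{lem:lowerbounds}.1 with a suitable choice of $i$: in $H_m$ the $(m-1)$-clique $\{v_1,\dots,v_{m-1}\}$ is a dominating set whose every node is adjacent to all $\alpha$ of the $w_j$, and the $w_j$ are adjacent to all of the $v_i$. Taking, say, $i$ to be a small constant and checking that every vertex subset $W$ of that size has $|N(W)| \geq m-1$ should force $\ffn(H_m) \geq m$. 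Concretely, because each $w_j$ has all $m-1$ clique nodes as neighbours, a singleton $W = \{w_j\}$ already has $|N(W)| = m-1$; combined with $\delta_{\min}$ considerations (Lemma~\ref{lem:lowerbounds}.2 gives $\delta_{\min}(H_m) \geq m-1$ only, which yields $\ffn \geq m$ directly if every node has degree $\geq m-1$). In fact the minimum degree of $H_m$ is $\min\{\alpha + m - 2, m-1\}$, and since $\alpha \geq \beta + 3 \geq 4 > 1$ the $w_j$ nodes have degree exactly $m-1$, so $\delta_{\min}(H_m) = m-1$ and Lemma~\ref{lem:lowerbounds}.2 immediately gives $\ffn(H_m) \geq m$.

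\textbf{Main obstacle.} The lower bound is then essentially immediate via $\delta_{\min}$, so the real work is the upper bound: constructing a valid $m$-strategy that simultaneously respects all $m$ copies $H_m^1,\dots,H_m^m$ hanging off the paths while never letting the fire cross $c$ back into an already-cleared region. The delicate point is timing — while the firefighters are busy deep inside one $H_m^i$, the other arms and $X$ must not reignite. I expect the clean resolution is to permanently occupy $c$ with one firefighter (severing $X$ and all arms from each other) and argue that each resulting component is individually $(m-1)$-winning or that the whole thing has pathwidth at most $m-1$, letting Lemma~\ref{lem:upperbounds}.1 finish the bound. Verifying the pathwidth bound, or equivalently scheduling the phased sweep so that cleared components stay cleared, is the step I would treat most carefully.
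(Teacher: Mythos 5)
Your lower bound is fine: each $w_j$ in $H_m$ is adjacent to exactly the $m-1$ clique nodes while each $v_i$ has degree $m-2+\alpha \geq m-1$, so $\delta_{\min}(H_m)=m-1$, Lemma \ref{lem:lowerbounds}.2 gives $\ffn(H_m)\geq m$, and Lemma \ref{lem:lowerbounds}.3 transfers this to $G(m,X)$. The paper instead notes that $H_m$ contains an $m$-clique (the $(m-1)$-clique together with any single $w_j$); the two observations are interchangeable here.

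The upper bound, however, has a genuine gap, and your own lower-bound argument exposes it. You propose to occupy $c$ permanently and handle the rest with the remaining $m-1$ firefighters, or alternatively to show $\pw(G(m,X))\leq m-1$. Neither can work. Since $\ffn(H_m)\geq m$, no schedule using only $m-1$ firefighters can ever extinguish a copy $H_m^i$: the $(m-1)$-clique must be fully covered in every step (otherwise all of $H_m^i$ reignites at once), leaving no firefighter for the $\alpha$ nodes $w_j$. Hence the firefighter parked at $c$ must be withdrawn for the $\alpha$ consecutive steps needed to clear each $H_m^i$, during which $c$, $X$ and the paths all reignite. The pathwidth fallback also fails: $X$ is an arbitrary graph with $\ffn(X)=m-1$, and $\ffn$ does not upper-bound $\pw$ (the paper notes subdivided trees are $3$-winning with arbitrarily large pathwidth), so $\pw(G(m,X))\geq\pw(X)$ can far exceed $m-1$. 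The paper's strategy is necessarily cyclic: clear $H_m^1\setminus\{u_1\}$ with all $m$ firefighters in $\alpha$ steps, then clear the path to $c$, hold $c$ while re-extinguishing $X$ with $m-1$ firefighters, clear the remaining paths, then abandon $c$ to clear $H_m^2$ --- the already-cleared $H_m^1$ survives only because the fire needs $2\beta+2\geq\alpha$ steps to travel between two of the $u_i$. This forced re-extinguishing of $X$ with $m-1$ firefighters, $m-1$ times over, is precisely what Lemma \ref{lem:GkXtakeslong} exploits, so a correct proof of the upper bound must build the strategy around it rather than design it away.
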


\begin{proof}[Proof sketch]
    By construction, the graph $G(m, X)$ contains the subgraph $H_m$, which in turn contains a $m$-clique.
    This shows $\ffn(G(m, X)) \geq m$.

    To give some intuition for why we have $\ffn(G(m, X)) \leq m$, let us now sketch an outline of a winning $m$-strategy.
    Start by extinguishing $H^1_m \setminus \{u_1\}$, which takes $\alpha$ steps.
    Next, extinguish the path from $u_1$ to $c$ and position one firefighter in $c$ while extinguishing $X$ with the remaining $m - 1$ firefighters.
    Afterwards, it is possible to extinguish all other paths up to the $u_i$ nodes without letting the fire spread back, so that only $H^2_m, \dots, H^m_m$ still contain burning nodes.
    Then, continue by extinguishing the next auxiliary subgraph $H^2_m$.
    During this process, the fire barely does not reach $H^1_m$, which allows us to again extinguish the entire rest of the graph apart from $H^3_m, \dots, H^m_m$.
    By repeating this process $m - 2$ more times, the entire graph is extinguished.
\end{proof}

\noindent After determining that $\ffn(G(m, X)) = m$, we shall now find a lower bound to the length of a winning $m$-strategy for $G(m, X)$ by showing that such a strategy needs to be similar to the strategy described in the proof sketch of the previous lemma, and therefore needs to repeatedly extinguish $X$ with $m - 1$ firefighters.

\begin{restatable}[Lower Bound on $T(G(m, X))$]{lemma}{lemGkXtakeslong}
\label{lem:GkXtakeslong}
    $T(G(m, X)) \geq (m - 1) \cdot T(X)$.
\end{restatable}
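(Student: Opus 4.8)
\emph{The plan.} I aim to show that any winning $m$-strategy for $G(m,X)$ is forced to completely extinguish the copy $X$ at least $m-1$ times, where each of these extinguishing phases starts from a completely burning $X$, uses at most $m-1=\ffn(X)$ firefighters inside $X$, and therefore lasts at least $T(X)=T_{m-1}(X)$ rounds; since these phases occupy pairwise disjoint time windows inside $\{1,\dots,T\}$, summing their lengths yields $T(G(m,X))\ge (m-1)\,T(X)$. Fix a shortest winning $m$-strategy $S=(F_1,\dots,F_T)$ with $T=T(G(m,X))=T_m(G(m,X))$.

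\emph{Sequential clearing of the copies.} First I would exploit the defining property of the auxiliary graphs: because each $H^i_m$ contains the clique $K_{m-1}$ together with $\alpha\ge\beta+3$ outer nodes, permanently extinguishing one copy requires a block of consecutive rounds in which $m-1$ firefighters are placed inside that copy. As $2(m-1)>m$ for $m\ge 3$, no two copies can occupy such a clearing block simultaneously, so the $m$ copies get cleared in $m$ pairwise time-disjoint, linearly ordered blocks $J_1,\dots,J_m$. During any $J_i$ at most one firefighter is free, and since $\ffn(X)=m-1\ge 2$ this single firefighter can make no progress inside $X$.

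\emph{The crux: $X$ is re-ignited and re-extinguished between clearings.} The heart of the proof is to rule out a strategy that ``banks'' already-cleared copies while permanently shielding $X$. The obstruction is the interplay of the cut vertex $c$ and the length-$\beta$ paths: a copy $H^i_m$ stays extinguished only if its path $P^i$ is also kept clear, but clearing $P^i$ needs firefighters beyond the $m-1$ spent inside the copy and is therefore incompatible with simultaneously keeping a guard on $c$. Hence, to permanently clear each of the first $m-1$ copies, the strategy must at some point release $c$ while at least one other copy is still burning; the fire then traverses that copy's length-$\beta$ path to $c$ and floods all of $X$ within $\beta+1$ rounds (the bound $\alpha\ge\beta+3$ guarantees a clearing block is long enough for this to complete). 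I would make this precise by showing that, for each $i\in[m-1]$, there is a round at which $X$ is completely burning that is separated from the corresponding round for $i-1$ by a round at which $X$ is completely clear; this produces $m-1$ genuine ``burning $\to$ clear'' transitions of $X$.

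\emph{Counting.} Finally I would collect these $m-1$ transitions. Each one drives $X$ from completely burning to completely clear; throughout such a transition at least one copy is still burning, so one firefighter must continuously guard $c$ (otherwise $c$ reignites and re-floods all of $X$), leaving at most $m-1$ firefighters acting inside $X$. By the definition of $T_{m-1}(X)=T(X)$ each transition therefore costs at least $T(X)$ rounds, and the $m-1$ transitions lie in pairwise disjoint windows of $\{1,\dots,T\}$. Summing gives $T(G(m,X))\ge(m-1)\,T(X)$. The base case $m=2$ (where $X$ is edgeless and $T(X)=1$) is immediate. \textbf{The main obstacle} is the crux step above: rigorously excluding the ``hold $c$ forever'' strategy and proving that the firefighter budget forces $X$ to be sacrificed and rebuilt $m-1$ separate times, which is exactly where the $H_m$-property, the cut vertex $c$, and the calibrated path length $\beta$ must be combined.
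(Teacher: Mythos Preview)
Your overall architecture matches the paper's exactly: establish a linear order $t_1<t_1'\le\dots\le t_m<t_m'$ for the final clearings of the $H_m^i$, show $X\subseteq B_{t_i'}$ and $X\subseteq E_{t_{i+1}}$ so that $X$ undergoes $m-1$ full burning-to-clear transitions, and prove that $c\in F_t$ throughout each such transition, leaving only $m-1$ firefighters for $X$. Your counting step is also the paper's.

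The one place your sketch goes astray is the \emph{mechanism} in the crux. You argue that $c$ must eventually be released because keeping the path $P^i$ clear after $H_m^i$ is cleared competes for the single spare firefighter with guarding $c$. That is not the paper's obstruction, and as stated it is not airtight: with $m-1$ firefighters in $H_m^i$ and one on $c$, nothing in your text yet forces $c$ to be abandoned during the clearing block. The sharper property of $H_m$---which you have available but do not invoke---is that not merely $m-1$ but \emph{all $m$} firefighters must sit inside $H_m^i$ for at least $\alpha$ of the rounds in its clearing block (Lemma~\ref{lem:Hkstrategy}). During those $\alpha$ rounds there is no firefighter anywhere outside $H_m^i$; since at time $t_i$ some node outside $H_m^i$ still burns (otherwise the strategy would not be shortest), the fire walks unobstructed along a path of at most $\beta+4\le\alpha+1$ nodes and engulfs all of $X$ by time $t_i'$. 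This is exactly how the paper obtains $X\subseteq B_{t_i'}$, and the symmetric argument (fire would reach the already-cleared $H_m^1$ unless $X$ is clear before the next block starts) gives $X\subseteq E_{t_i}$ for $i\ge 2$. Replace your $P^i$-versus-$c$ competition by this ``all $m$ for $\alpha$ rounds'' property and your outline becomes the paper's proof.
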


\noindent By recursive applications of the construction $G(m, X)$, we can give a class of graphs such that the length of a shortest extinguishing strategy with the smallest possible number of firefighters is superpolynomial in the size of the graph.

\begin{restatable}[Shortest Strategies can have Superpolynomial Length]{theorem}{thmLength}
\label{thm:length}
	Let $G_2 = G(2, (\{v\}, \emptyset))$ and $G_m = G(m, G_{m - 1})$ for any $m \in \N_{\geq 3}$.
    For any $m \in \N_{\geq 2}$, we have $\ffn(G_m) = m$ and $T(G_m) = T_m(G_m) \geq (m - 1)!$, which is superpolynomial in $\text{size}(G_m) = \mathcal{O}(m^6)$.
\end{restatable}
\begin{proof}
    By Lemmata \ref{lem:ffnofGkX} and \ref{lem:GkXtakeslong}, we have $\ffn(G_2) = 2$ and $T(G_2) \geq 1$.
    Using induction and the same two lemmata, we get $\ffn(G_m) = m$ and $T(G_m) \geq (m - 1) \cdot T(G_{m - 1}) \geq (m - 1)!$ for any $m \geq 2$.
    Moreover, $G(m, X) \setminus X$ contains $1 + m \cdot (m - 1 + \alpha + \beta)$ nodes, which is in $\mathcal{O}(m^2)$.
    Since $G_m$ is composed of $m - 1$ graphs that do not have a greater number of nodes than $G(m, X) \setminus X$, it follows that $G_m$ has $\mathcal{O}(m^3)$ nodes and therefore has size in $\mathcal{O}(m^6)$.
    As $(m - 1)!$ is superpolynomial in $m^6$, this finishes the proof.
\end{proof}

\noindent While the above theorem underlines the possibility that $\mfft$ as well as $\mff$ are not in \texttt{NP}, we can at least give an upper bound to their space complexity.

\begin{proposition}[$\mff$ is in \texttt{PSPACE}]
    The problem $\mfft$ (and therefore also $\mff$) is in \texttt{PSPACE}.
\end{proposition}
\begin{proof}
    Given an instance $(G = (V, E), m, T)$ of $\mfft$, consider the following algorithm:
    Set $B = V$.
    Repeat the following steps $T$ times:
    Pick a random $F \subseteq V$ with $|F| \leq m$.
    Set $\tilde{B} = B \setminus F$.
    Set $B = \tilde{B} \cup N(\tilde{B})$.
    If $B = \emptyset$ after $T$ repetitions, return ``yes'', otherwise return ``no''.
    This non-deterministic algorithm has a probability strictly greater than $0$ to return ``yes'' if $(G, m, T)$ is a yes-instance of $\mfft$, and will always return ``no'' otherwise.
    Furthermore, the required space is in $\mathcal{O}(\text{size}(G))$, so $\mfft$ is in \texttt{NPSPACE} and, by the Theorem of Savitch~\cite{SAVITCH1970177}, in \texttt{PSPACE}.
    Proposition \ref{prop:Reduction} extends this result to $\mff$.
\end{proof}

\section{Implications on the Rabbit and Hunter Game}\label{app:hunter}
In this section, we show that the hardness and the existence of long shortest strategies transfer to the Rabbit and Hunter game.
The only difference between $\mff$ and the Rabbit and Hunter game is that the fugitive is not forced to move in each turn, i.e., we need to redefine $B_t\coloneqq N(B_{t-1} \setminus F_t)$ to use our notation.
To distinguish between settings, we will use $\hn(G)$ instead of $\ffn(G)$, $m$-hunter strategy instead of $m$-(firefighter-)strategy and $T^h(G)$ for the shortest $m$-hunter strategy instead of $T(G)$ for the shortest $m$-firefighter strategy.

\begin{framed}
\begin{tabular}{l}
\textbf{(\mhun): Rabbit and Hunter game} \\
\hspace{0.5cm}\textbf{Input:} A graph $G$ and $m\in \N_{>0}$.\\
\hspace{0.5cm}\textbf{Output:} Is $\hn(G)\leq m$?
\end{tabular}
\end{framed}

\noindent Let $G'=(V',E')$ be the graph that arises by replacing each edge $e\in E$ from a graph $G=(V,E)$ with $|V|+1$ many disjoint paths, each with one intermediate node as visualized in Figure \ref{fig:hunter}. Note that $G'$ is bipartite with partition sets $V$ and $V'\setminus V$. 

\begin{figure}[h]
    \centering
    \subfloat[$G=(V,E)$]{
\begin{tikzpicture}[scale=0.3, every node/.style={font=\scriptsize}]
            \node[draw, circle] (v1) at (0,0) {};
			\node[draw, circle] (v2) at (5,0) {};
			\node[draw, circle] (v3) at (10,5) {};
			\node[draw, circle] (v4) at (10,-5) {};
			\draw[thick] (v1) -- (v2);
			\draw[thick] (v2) -- (v3);
			
			\draw[thick] (v2) -- (v4);
			\draw[thick] (v3) -- (v4);
    \end{tikzpicture}
}
\hspace{1cm}
\subfloat[$G'=(V',E')$]{
\begin{tikzpicture}[scale=0.3, every node/.style={font=\scriptsize}]
            \node[draw, circle] (w1) at (15,0) {};
			\node[draw, circle] (w2) at (20,0) {};
			\node[draw, circle] (w3) at (28,5) {};
			\node[draw, circle] (w4) at (28,-5) {};

            \node[draw, circle, minimum size=0.2cm, inner sep=0pt, fill=blue!40] (w121) at (17.5,2) {};
            \node[draw, circle, minimum size=0.2cm, inner sep=0pt, fill=blue!40] (w122) at (17.5,1) {};
            \node[draw, circle, minimum size=0.2cm, inner sep=0pt, fill=blue!40] (w123) at (17.5,0) {};
            \node[draw, circle, minimum size=0.2cm, inner sep=0pt, fill=blue!40] (w124) at (17.5,-1) {};
            \node[draw, circle, minimum size=0.2cm, inner sep=0pt, fill=blue!40] (w125) at (17.5,-2) {};

            \draw[thick] (w1) -- (w121);
            \draw[thick] (w1) -- (w122);
            \draw[thick] (w1) -- (w123);
            \draw[thick] (w1) -- (w124);
            \draw[thick] (w1) -- (w125);

            \draw[thick] (w2) -- (w121);
            \draw[thick] (w2) -- (w122);
            \draw[thick] (w2) -- (w123);
            \draw[thick] (w2) -- (w124);
            \draw[thick] (w2) -- (w125);

            \node[draw, circle, minimum size=0.2cm, inner sep=0pt, fill=blue!40] (w231) at (23.5,4.5) {};
            \node[draw, circle, minimum size=0.2cm, inner sep=0pt, fill=blue!40] (w232) at (24,3.5) {};
            \node[draw, circle, minimum size=0.2cm, inner sep=0pt, fill=blue!40] (w233) at (24.5,2.5) {};
            \node[draw, circle, minimum size=0.2cm, inner sep=0pt, fill=blue!40] (w234) at (25,1.5) {};
            \node[draw, circle, minimum size=0.2cm, inner sep=0pt, fill=blue!40] (w235) at (23,5.5) {};

            \draw[thick] (w2) -- (w231);
            \draw[thick] (w2) -- (w232);
            \draw[thick] (w2) -- (w233);
            \draw[thick] (w2) -- (w234);
            \draw[thick] (w2) -- (w235);

            \draw[thick] (w3) -- (w231);
            \draw[thick] (w3) -- (w232);
            \draw[thick] (w3) -- (w233);
            \draw[thick] (w3) -- (w234);
            \draw[thick] (w3) -- (w235);

            \node[draw, circle, minimum size=0.2cm, inner sep=0pt, fill=blue!40] (w241) at (23.5,-4.5) {};
            \node[draw, circle, minimum size=0.2cm, inner sep=0pt, fill=blue!40] (w242) at (24,-3.5) {};
            \node[draw, circle, minimum size=0.2cm, inner sep=0pt, fill=blue!40] (w243) at (24.5,-2.5) {};
            \node[draw, circle, minimum size=0.2cm, inner sep=0pt, fill=blue!40] (w244) at (25,-1.5) {};
            \node[draw, circle, minimum size=0.2cm, inner sep=0pt, fill=blue!40] (w245) at (23,-5.5) {};

            \draw[thick] (w4) -- (w241);
            \draw[thick] (w4) -- (w242);
            \draw[thick] (w4) -- (w243);
            \draw[thick] (w4) -- (w244);
            \draw[thick] (w4) -- (w245);

            \draw[thick] (w2) -- (w241);
            \draw[thick] (w2) -- (w242);
            \draw[thick] (w2) -- (w243);
            \draw[thick] (w2) -- (w244);
            \draw[thick] (w2) -- (w245);

            \node[draw, circle, minimum size=0.2cm, inner sep=0pt, fill=blue!40] (w341) at (26,0) {};
            \node[draw, circle, minimum size=0.2cm, inner sep=0pt, fill=blue!40] (w342) at (27,0) {};
            \node[draw, circle, minimum size=0.2cm, inner sep=0pt, fill=blue!40] (w343) at (28,0) {};
            \node[draw, circle, minimum size=0.2cm, inner sep=0pt, fill=blue!40] (w344) at (29,0) {};
            \node[draw, circle, minimum size=0.2cm, inner sep=0pt, fill=blue!40] (w345) at (30,0) {};

            \draw[thick] (w4) -- (w341);
            \draw[thick] (w4) -- (w342);
            \draw[thick] (w4) -- (w343);
            \draw[thick] (w4) -- (w344);
            \draw[thick] (w4) -- (w345);

            \draw[thick] (w3) -- (w341);
            \draw[thick] (w3) -- (w342);
            \draw[thick] (w3) -- (w343);
            \draw[thick] (w3) -- (w344);
            \draw[thick] (w3) -- (w345);
\end{tikzpicture}
}
    \caption{$G'=(V',E')$ arises from $G=(V, E)$ by replacing each edge $e\in E$ with $|V|+1$ many disjoint paths, each with one intermediate node. Note that $G'$ is bipartite.} 
    \label{fig:hunter}
\end{figure}
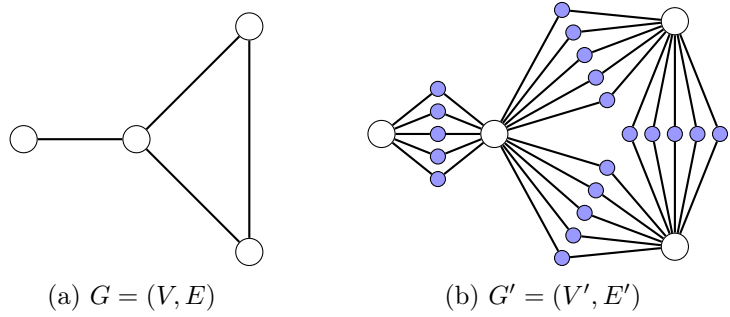

\begin{lemma}
   $\ffn(G)=\hn(G')$ and $2 T(G) = T^{h}(G')$.
\end{lemma}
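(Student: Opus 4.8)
The plan is to exploit the structure of $G'$: it is bipartite with sides $V$ and $M := V'\setminus V$ (the subdivision nodes), and each intermediate node belonging to the copy-bundle of an edge $e=\{u,v\}$ is adjacent in $G'$ exactly to $u$ and $v$. The single most important feature is that every edge of $G$ is replaced by $|V|+1$ parallel intermediate nodes, strictly more than any budget we care about: since $\ffn(G)\le |V|$ and (as the forward direction shows) $\hn(G')\le\ffn(G)$, we only ever face $m\le |V| < |V|+1$. Hence no set of $\le m$ shots can ever hit all copies of a single edge. I will also assume $G$ has no isolated vertices, since an isolated vertex breaks the correspondence (its fire simply dies in the hunter game). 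The goal is to prove the sharper statement that $G$ is $m$-firefighter-winning in time $T$ if and only if $G'$ is $m$-hunter-winning in time $2T$, from which both $\ffn(G)=\hn(G')$ and $2T(G)=T^h(G')$ follow.

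For the direction from firefighting to hunting, I would simulate each firefighter step by two hunter steps: given a $T$-winning $m$-strategy $(F_1,\dots,F_T)$ on $G$, play $F'_{2t-1}=F'_{2t}=F_t$ on $G'$. One proves by induction the invariant $B^{G'}_{2t}\cap V=B^G_t$, while the $M$-part of $B^{G'}_{2t}$ consists precisely of the copies lying on edges incident to $\tilde B^G_t=B^G_{t-1}\setminus F_t$. Since a firefighter win forces $\tilde B^G_T=\varnothing$, the $M$-part vanishes together with the $V$-part, so $B^{G'}_{2T}=\varnothing$. This yields $\hn(G')\le\ffn(G)$ and a winning hunter strategy of length $2T(G)$, hence $T^h(G')\le 2T(G)$.

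The harder converse is where the $|V|+1$ copies do the work. Given any $m$-hunter strategy with $m\le |V|$, I read off a firefighter strategy by taking the $V$-parts of the odd shots, $F_t:=F'_{2t-1}\cap V$, and claim the domination $B^G_t\subseteq B^{G'}_{2t}$. The inductive heart is a two-step ($V\to M\to V$ bounce) argument: if $S\subseteq V$ is burning at an even time and $\tilde S:=S\setminus F_{t+1}$, then after the first hunter step all $|V|+1$ copies on every edge incident to $\tilde S$ ignite, and after the second hunter step at least one copy per such edge survives (the budget $m<|V|+1$ cannot clear an edge), so both of its endpoints reignite; thus the burning set two steps later contains $\tilde S\cup N_G(\tilde S)$, exactly the firefighter update. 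This gives $\ffn(G)\le\hn(G')$, and for an even-length win $L=2s$ it yields $B^G_s\subseteq B^{G'}_{2s}=\varnothing$, i.e.\ a firefighter win in $s=L/2$ steps.

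I expect the one genuinely delicate point -- and the main obstacle -- to be pinning down the factor of $2$ exactly, that is, ruling out a hunter win in an odd number of steps $2T(G)-1$. The resolution again uses the copies: because no $\le m$ shots clear an edge, after a single hunter step \emph{every} original node reignites, so $B^{G'}_1\cap V=V$. This lets me rerun the same domination starting from time $1$ using the \emph{even} shots $\hat F_t:=F'_{2t}\cap V$, obtaining $\hat B^G_t\subseteq B^{G'}_{2t+1}$ with $\hat B^G_0=V$. An odd-length win $L=2s+1$ then forces $\hat B^G_s\subseteq B^{G'}_{2s+1}=\varnothing$, a firefighter win in $s=\lfloor L/2\rfloor$ steps. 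Combining both parities, every hunter win of length $L$ produces a firefighter win in $\lfloor L/2\rfloor$ steps, whence $T^h(G')\ge 2T(G)$; together with the forward bound this gives $T^h(G')=2T(G)$ and, at the optimal budget, $\ffn(G)=\hn(G')$.
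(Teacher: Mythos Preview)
Your proposal is correct and follows essentially the same approach as the paper: both directions exploit the bipartiteness of $G'$ and the fact that the $|V|+1$ parallel copies of each edge exceed any relevant budget $m\le\ffn(G)\le|V|$, yielding the doubled strategy $(F_1,F_1,\dots,F_T,F_T)$ for the forward bound and the extraction of odd/even $V$-restricted substrategies for the converse. Your treatment is slightly more explicit than the paper's---you set up the domination $B^G_t\subseteq B^{G'}_{2t}$ directly rather than first passing through a pruned hunter strategy, and you spell out the isolated-vertex caveat and the odd-parity case---but the underlying argument is the same.
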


\begin{proof}
    Since $G'$ is bipartite, the rabbit's position alternates between nodes in $V$ and $V'\setminus V$.
    Let $F = (F_1,F_2,\ldots, F_n)$ be a winning $\ffn(G)$-firefighter strategy on $G$.
    If the rabbit starts in a node $v \in V$, it will be in $\{v\} \cup \{w\in V : \; \exists \{v,w\} \in E\}$ after two time steps.
    Thus, the strategy $(F_1, \emptyset, F_2,\emptyset, \ldots, F_{n-1},\emptyset, F_n)$ catches the rabbit if it starts in $V$, and the strategy $(\emptyset, F_1, \emptyset, F_2,\emptyset, \ldots,$ $F_{n-1},\emptyset, F_n)$ catches the rabbit if it starts in $V'\setminus V$.
    Hence, $F'=(F_1,F_1, F_2, F_2,\ldots, F_n,F_n)$ is a winning $\ffn(G)$-hunter strategy.
    This implies $\hn(G')\leq \ffn(G)$ and $2T(G) \geq T^h(G')$.

    Suppose $\hn(G')< \ffn(G)$.
    Let $F'=(F_1',\ldots, F_{n'}')$ be a winning $\hn(G')$-hunter strategy on $G'$.
    If $u \in \tilde{B}_t$ or $v \in \tilde{B}_t$ for some $\{u, v\} \in E$ and $t \in \N_{> 0}$, then all intermediate nodes between $u$ and $v$ in $G'$ are in $B_t$.
    Since there are $|V|+1 > |V|\geq \ffn(G)>\hn(G')$ many of these nodes, at least one of them is in $\tilde{B}_{t+1}$.
    Therefore, we have $u,v\in B_{t+1}$, independent of the hunter strategy.
    This yields that the pruned hunter strategy $F''= (F_1'\cap V,\ldots, F_{n'}'\cap V)$ is also a winning $\hn(G')$-hunter strategy on $G'$. 
    The even and odd substrategies $(F''_{2i})_{i \in [\lfloor n'/2 \rfloor]}$ and $(F''_{2i - 1})_{i \in [\lceil n'/2 \rceil]}$ are both winning $\hn(G')$-firefighter strategies on $G$, since $B_{2t}\cap V$ in $G'$ under hunter strategy $F''$ coincides with $B_t$ in $G$ under firefighter strategy $(F''_{2i})_{i \in [\lfloor n'/2 \rfloor]}$, and $B_{2t-1}\cap V$ in $G'$ under hunter strategy $F''$ coincides with $B_{t}$ in $G$ under firefighter strategy $(F''_{2i - 1})_{i \in [\lceil n'/2 \rceil]}$.
    This implies that both $(F''_{2i})_{i \in [\lfloor n'/2 \rfloor]}$ and $(F''_{2i - 1})_{i \in [\lceil n'/2 \rceil]}$ are $hn(G')$-winning firefighter strategies for $G$. 
    Hence, we have $\ffn(G) \leq \hn(G')$, which is a contradiction.
    Additionally, if $F'$ was already a shortest winning $\hn(G')$-hunter strategy, this implies that $2T(G)\leq T^h(G')$.
    We conclude $\ffn(G)=\hn(G')$ and $2 T(G) = T^{h}(G')$.
\end{proof}

\noindent Observe that $\text{size}(G')\in \mathcal{O}(\text{size}(G)^2)$.
Together with Theorem \ref{thm:mffHard} and Theorem \ref{thm:length}, this implies the following.

\begin{corollary}[Hardness of $\mhun$ and Long Shortest Strategies]\label{prop:longHunt}
    $\mhun$ is \texttt{NP-hard} even on bipartite graphs and there exists an infinite family of graphs for which shortest hunter-strategies have superpolynomial length in their respective sizes.
\end{corollary}

\noindent Corollary \ref{prop:longHunt} answers one of the open questions in \cite{benameur2025huntingrabbithard} regarding a bound on the length of shortest hunter strategies.
Note that the \texttt{NP-hardness} of $\mhun$ on bipartite graphs has also been shown in \cite{benameur2025huntingrabbithard}, using a completely different approach.

\section{Open Problems}
In Corollary \ref{prop:char}, we gave a characterization of graphs with firefighter number one or two.
A naturally arising question is whether it is possible to find a somewhat compact characterization for graphs with firefighter number three, or other bigger fixed numbers.

Table \ref{tab:basicProps} covers some relations between the firefighter number and commonly used graph parameters. It is an interesting question whether one can relate the firefighter number to other graph parameters.
In particular, we think that it would be worthwhile to investigate the existence of a lower bound based on the treewidth.

While we have proven that \mff~ is \texttt{NP-hard} and in \texttt{PSPACE}, one could investigate the placement of the problem within the polynomial hierarchy in more detail.

If applying Lemma \ref{lem:lowerbounds}.1 to all subgraphs suffices to characterize the firefighter number for some class of graphs, we can show that $\mff$ restricted to this class is in \texttt{co-NP}, if either the number of firefighters $m$ or the treewidth of the graphs is bounded by a constant.
This would imply that these restricted variants are not \texttt{NP-hard}, unless \texttt{NP} equals \texttt{co-NP}.
However, it is an open problem which graph classes satisfy this property, see Open Problem 70 in \cite{Bernshteyn2021SearchingFA}.

Finally, we are interested in algorithms to solve \mff~ or \mfft~ for medium-sized instances.
The canonical IP formulation requires $\Omega(T\cdot |V|)$ integer variables for \mfft~and $\Omega(2^{|V|}\cdot |V|)$ integer variables for \mff, which is already problematic for very small graphs.

\clearpage

\bibliography{references}

\clearpage
\setboolean{mainpart}{false} 
\appendix
\section{Appendix}\label{app}
\subsection{Omitted Proofs of Section \ref{ch:basicProps}: Basic Properties and Bounds}\label{app:basicProps}
We start by giving the missing proofs of the upper and lower bounds on the firefighter number.
\lemmaLowerBounds*
\begin{proof}[Proof of 2] Follows directly by Lemma \ref{lem:lowerbounds}.1 for $i=1$. 
    An intuitive proof is the following: If $B_t = V$ and we use at most $m-1$ firefighters, then each node in $\tilde{E}_t = B_t \setminus F_t$ has at least one provisionally burning neighbour left, since at most $m - 2$ of its at least $m-1$ neighbours can also be in $F_t$. Hence, $B_{t + 1} = B_t = V$ and it is impossible to make any progress with only $m-1$ firefighters.
\end{proof}

\begin{proof}[Proof of 3] Let  $G'\subseteq G$ with $\ffn(G')\geq m$.
Suppose $\ffn(G)<m$ and let $S = (F_0, \dots, F_T)$ be a winning $\ffn(G)$-strategy for $G$.
Then, $S' = (F_0', \dots, F_T')$ with $F_i' := F_i \cap V'$ for $i \in [T]$ is a winning $\ffn(G)$-strategy for $G'$, which contradicts $\ffn(G')\geq m$.
\end{proof}

\begin{proof}[Proof of 4]
We define $f_m(n) \coloneqq m \cdot (n - \frac{m+1}{2})$ and perform a proof by contraposition, i.e., we show that $\ffn(G) < m$ implies $|E| < f_m(|V|)$.

    For $|V| < m$, we have $f_m(|V|) > m \cdot (m - \frac{m + 1}{2}) = \binom{|V|}{2}$.
    Since $\binom{|V|}{2}$ is the the number of edges in $K_{|V|}$, we have $|E| \leq \binom{|V|}{2} < \binom{m}{2} < f_m(|V|)$ for any graph $G = (V, E)$.
    
    For $|V| \geq m$, we prove the statement by induction over $|V|$.
    If $|V| = m$ and $\ffn(G) < m$, then $G = (V, E)$ cannot be the complete graph due to Lemma~\ref{lem:lowerbounds}.2., which implies $|E| < \binom{|V|}{2} = f_m(|V|)$.   
    Now assume that the statement holds for graphs with $|V| = n$ for some $n \in \N_{>0}$.
    Let $G$ be a graph with $n + 1$ nodes and $\ffn(G) < m$.
    By Lemma \ref{lem:lowerbounds}.2, $G$ contains a node $v$ with less than $m - 1$ neighbours.
    Removing this node and its adjacent edges yields a graph $G'$ with $\ffn(G') < m$ and $n$ nodes.
    By the induction hypothesis, $G'$ has less than $f_m(n)$ edges.
    Therefore, $G$ has less than $f_m(n) + m - 1 < f_m(n + 1)$ edges.
\end{proof}

\medskip
\noindent Note that Lemma \ref{lem:lowerbounds}.1 can give arbitrarily bad lower bounds for some graphs as shown in the following lemma. 
\begin{lemma}\label{lem:badInstances}
    For each $m \in \N_{\geq 1}$ there exists a graph $G_m=(V_m,E_m)$ with $\ffn(G)=m$ but for all $i\in [|V| -1]$ there exists a $W_i\subseteq V_m$ with $|N(W_i)|=1$.
\end{lemma}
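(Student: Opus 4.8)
The plan is to exhibit, for each $m$, a graph $G_m$ assembled from a clique $K_m$ (which pushes the firefighter number up to $m$) together with a long pendant path (which manufactures subsets of \emph{every} size that have only a single neighbour, so that the criterion of Lemma \ref{lem:lowerbounds}.1 can never certify more than $\ffn(G_m)\ge 2$). Concretely, for $m\ge 2$ I would take the clique on $\{c_1,\dots,c_m\}$ together with a path $v_1 - v_2 - \dots - v_{m-1}$, attaching the path to the clique by joining $v_{m-1}$ to $c_1$; then $|V_m|=2m-1$. The case $m=1$ is handled by a single vertex, for which the claim is vacuous since $[\,|V|-1\,]=\emptyset$.

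For the firefighter number I would argue both bounds separately. The lower bound $\ffn(G_m)\ge m$ is immediate from Lemma \ref{lem:lowerbounds}.3 applied to the subgraph $K_m$, using $\ffn(K_m)=m$ from Proposition \ref{prop:graphs}. For the upper bound I would write down the explicit path decomposition with bags
\[
\{v_1,v_2\},\{v_2,v_3\},\dots,\{v_{m-2},v_{m-1}\},\{v_{m-1},c_1\},\{c_1,\dots,c_m\},
\]
in which every edge lies in some bag and every vertex occupies a contiguous run of bags, and whose largest bag has $m$ nodes. This gives $\pw(G_m)\le m-1$, so Lemma \ref{lem:upperbounds}.1 yields $\ffn(G_m)\le \pw(G_m)+1=m$, and hence $\ffn(G_m)=m$.

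To produce the single-neighbour witness sets I would split the sizes $i\in\{1,\dots,2m-2\}$ into two ranges. For $1\le i\le m-1$ the prefix $W_i:=\{v_1,\dots,v_i\}$ has $N(W_i)=\{v_{i+1}\}$ (setting $v_m:=c_1$), a single neighbour. For $m\le i\le 2m-2$, writing $j:=(2m-1)-i\in\{1,\dots,m-1\}$, the complementary set $W_i:=V_m\setminus\{v_1,\dots,v_j\}$ satisfies $N(W_i)=\{v_j\}$, again a single neighbour, because among the removed path vertices only $v_j$ still has a neighbour inside $W_i$. Together these two ranges cover every $i\in[\,|V_m|-1\,]$.

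The only step requiring real care is checking that the two families of witnesses leave no gap in the attainable sizes; this is precisely why the pendant path is chosen of length $m-1$, so that the prefixes reach size $m-1$ and the complements begin at size $m$. The neighbourhood computations themselves are routine degree checks. I would close by remarking that this makes the bound of Lemma \ref{lem:lowerbounds}.1 arbitrarily weak on $G_m$: since a set of every size has exactly one neighbour, that criterion can only ever certify $\ffn(G_m)\ge 2$, while the true value $m$ grows without bound.
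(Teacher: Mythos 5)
Your proof is correct and follows essentially the same construction as the paper: a clique $K_m$ with a pendant path attached, using path prefixes as the small witness sets $W_i$ and their complements as the large ones. You are in fact slightly more careful than the paper's version, which joins the last path vertex to \emph{every} clique node (so that $V_{K_m}\cup\{v_m\}$ induces a $K_{m+1}$, forcing $\ffn(G_m)\geq m+1$ by Lemma \ref{lem:lowerbounds}.3) and never verifies $\ffn(G_m)=m$ at all; your single-attachment variant together with the explicit path decomposition giving $\pw(G_m)\le m-1$ closes both gaps.
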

\begin{proof}
     Let $G_m$ be the graph consisting of a $m$-clique $K_m=(V_{K_m},E_{K_m})$ and a path $v_1,\ldots,v_m$ where any node of $K_m$ is connected to $v_m$.
     As visualized in Figure \ref{fig:pathClique}, for any $i\in [m]$, the set $W_i=\{v_1,\ldots v_i\}$ has only one adjacent node, and for $i\in [|V|-1] \setminus [m]$ the set $W_i=V_{K_m} \cup \{v_m,\ldots , v_{m-(i-m)-1}\}$ has only one adjacent node, proving the claim.
    \begin{figure}[b]
            \centering
    \subfloat[Example for $i=k-1<k+1$.]{
\begin{tikzpicture}[scale=0.4, every node/.style={font=\footnotesize}]
				
				\node[draw, line width=0.4mm, circle, minimum size=1.2cm, inner sep=0pt] (K1) at (-7,0) {$K_k$};
				
				\node[draw, circle, minimum size=0.7cm, inner sep=0pt, fill=white] (vk) at (-4,0) {$v_k$};
				\node[draw, circle, minimum size=0.7cm, inner sep=0pt, fill=blue!40] (vk1) at (-2,0) {$v_{k-1}$};
				\node[draw, circle, minimum size=0.7cm, inner sep=0pt, fill=blue!40] (vk2) at (0,0) {$v_{k-2}$};
				\node[draw, circle, minimum size=0.7cm, inner sep=0pt, fill=blue!40] (v1) at (4,0) {$v_1$};
                
				\fill (1.8,0) circle (0.05cm);
                \fill (2,0) circle (0.05cm);
                \fill (2.2,0) circle (0.05cm);
				
				\draw[thick] (K1) -- (vk);
				\draw[thick] (vk) -- (vk1);
				\draw[thick] (vk1) -- (vk2);
				\draw[thick]  (vk2) -- (1.5,0);
                \draw[thick] (2.5,0) -- (v1);
				
			\end{tikzpicture}
}
\hspace{0.5cm}
\subfloat[Example for $i=k+2>k$.]{
\begin{tikzpicture}[scale=0.4, every node/.style={font=\footnotesize}]
				
				\node[draw, line width=0.4mm, circle, minimum size=1.2cm, inner sep=0pt,fill=blue!40] (K1) at (-7,0) {$K_k$};
				
				\node[draw, circle, minimum size=0.7cm, inner sep=0pt, fill=blue!40] (vk) at (-4,0) {$v_k$};
				\node[draw, circle, minimum size=0.7cm, inner sep=0pt, fill=blue!40] (vk1) at (-2,0) {$v_{k-1}$};
				\node[draw, circle, minimum size=0.7cm, inner sep=0pt, fill=white] (vk2) at (0,0) {$v_{k-2}$};
				\node[draw, circle, minimum size=0.7cm, inner sep=0pt, fill=white] (v1) at (4,0) {$v_1$};
                
				\fill (1.8,0) circle (0.05cm);
                \fill (2,0) circle (0.05cm);
                \fill (2.2,0) circle (0.05cm);
				
				\draw[thick] (K1) -- (vk);
				\draw[thick] (vk) -- (vk1);
				\draw[thick] (vk1) -- (vk2);
				\draw[thick]  (vk2) -- (1.5,0);
                \draw[thick] (2.5,0) -- (v1);
				
			\end{tikzpicture}
}
        \caption{Lemma \ref{lem:badInstances}: $G_m$ is constructed from a $m$-clique and a path with $m$ nodes
        The node set $W_i$ (colored in blue) has only one adjacent node.}
        \label{fig:pathClique}
    \end{figure}
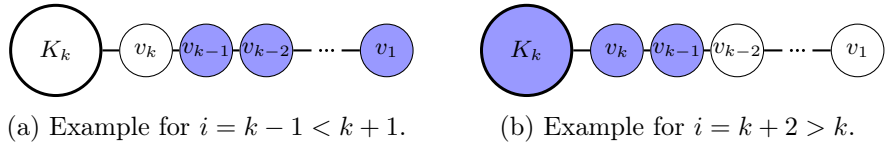
\end{proof}

    \lemmaUpperBounds*
    \noindent \begin{proof}[Proof of 2] We prove this by induction.
    For $m = 1$, $G$ can only consist of a single node, so $\ffn(G) \leq 1$ is obvious.
    Next, we assume that the statement holds for some $m \in \N_{>0}$, i.e., any tree $T$ with $\diam(T) \leq 2m - 2$ satisfies $\ffn(T) \leq m$. 
    Let $G$ be a tree with $\diam(G) \leq 2(m+1)-2 = 2m$.
    First, we choose any longest path in $G$ and set $r$ to be its middle node (or one of the two middle nodes if the longest path has odd length).
    Note that there cannot be a node $v$ with $\dist(v, r) > m$, since that would imply the existence of a longer path.
    Removing $r$ splits $G$ into disjoint trees $T_1, \dots, T_n$. Note that for each $i \in [n]$, $r$ has exactly one neighbour $r_i \in T_i$.

    For any two nodes $v$ and $w$ in such a tree $T_i$, the paths from $v$ to $r$ and from $w$ to $r$ both have to contain $r_i$ and have length less or equal to $m - 1$. Therefore, we have $\dist(v, w) \leq 2m - 2$, since we can construct a path from $v$ to $w$ by connecting the paths from $v$ to $r$ and from $r$ to $w$ and removing the unnecessary step from $r_i$ to $r$ and back.
   Therefore, each $T_i$ is a tree with diameter at most $2m - 2$.
   By the assumption of the induction, there is a winning $m$-strategy for each $T_i$.
   Concatenating these strategies and additionally including $r$ in each of the resulting firefighter sets yields a winning $(m + 1)$-strategy for $G$, which finishes the proof.
\end{proof}
\begin{proof}[Proof of 3] Let $S=(F_1,\ldots, F_n)$ be a winning $(m-k)$-strategy for $G_k$.
Since $|V\setminus V_K|\leq k$, the strategy $S'=(F_1\cup (V\setminus V_K), \ldots, F_n\cup (V\setminus V_K))$ is a winning $m$-strategy for $G$.
\end{proof}

\noindent We can now use these bounds to characterize all graphs with firefighter numbers 1 and 2, and give bounds for $d$-regular graphs.
\lemmaChar*
\begin{proof}
If $G$ has an edge $\{u,v\}$, the nodes $u$ and $v$ form a 2-clique.
With Lemma \ref{lem:lowerbounds}.3 and $\ffn(K_2)=2$, it follows that $\ffn(G) \geq 2$.
If $G$ does not have any edges and nodes $V = \{v_1, \dots, v_n\}$, then $S=(\{v_1\}, \dots, \{v_n\})$ is a winning $1$-strategy and we have $\ffn(G) = 1$.

It is well known that caterpillar graphs have pathwidth $1$, hence Lemma \ref{lem:upperbounds}.1 implies that all caterpillar graphs have firefighter number at most $2$.
If $G$ is not a caterpillar graph, it has to contain a cycle $C_\ell$ of length $\ell \geq 3$, or the tree $T$ which is constructed by attaching three paths of length two to a single node. It is easy to verify that all subsets of $T$ containing three nodes have at least $2$ neighbours in $T$, and any arbitrary node in $C_\ell$ has at least $2$ neighbours in $C_\ell$. Hence, by Lemma \ref{lem:lowerbounds}.1, $G$ contains a subgraph $G'$ with $\ffn(G') > 2$, which implies $\ffn(G) > 2$.
\end{proof}

\corExactFfn*
\begin{proof}
    Since we have $\min_{v\in V(C_n)} \delta(v)=2$, $\min_{v\in V(K_n)} \delta(v) = n-1$ and $\min_{v\in V(K_{n,m})} \delta(v)$ $= \min\{n,m\}$, Lemma \ref{lem:lowerbounds}.2 yields $\ffn(C_n)> 2$, $\ffn(K_n)> n-1$ and $\ffn(K_{n,m})>\min\{n,m\}$.
    Extinguishing all $n$ nodes simultaneously is a winning $n$-strategy for $K_n$.
    Removing the smaller of the two node sets of the bipartition of $K_{n,m}$ leaves the graph unconnected and thus, according to Lemma \ref{lem:upperbounds}.3, there is a winning $(\min\{n,m\}+1)$-strategy on $K_{n,m}$.
    Similarly, removing a node from $C_n$ leaves the graph as a caterpillar graph, and hence by Lemma \ref{lem:upperbounds}.3 and Corollary \ref{prop:char} we have $\ffn(C_n)\leq 3$.
    Note that $\ffn(K_n)=n$ was already proved in~\cite{Bernshteyn2021SearchingFA} and the proof is only given here for the sake of completeness.
\end{proof}

\corDRegular*
\begin{proof}
    For $d=1$ the graph $G$ is a union of caterpillar graphs with $|E|>0$, which implies $\ffn(G) = 2$ due to Corollary \ref{prop:char}.
    For $d=2$ the graph $G$ is a union of cycles, hence $\ffn(G) = 3$ due to Corollary \ref{prop:graphs}.
    Let $d>3$. Every complete binary tree $T=(V,E)$ has $\max_{v \in V} \delta(v) \leq 3 \leq d$ and can therefore be extended to a $d$-regular $G'$ for every $d\in \N_{\geq 3}$, see~\cite{Erds1963TheMR}. 
    According to Theorem \ref{thm:binTree}, complete binary trees can have arbitrarily high firefighter values.
    Lemma \ref{lem:lowerbounds}.3 implies that the same holds for $G'$.
    Furthermore, a $d$-regular graph $G$ has $\min_{v \in V} \delta(v)=d$ and Lemma \ref{lem:lowerbounds}.2 implies $\ffn(G)\geq d+1$. We can see that this lower bound is tight by noting that $K_{d+1}$ is $d$-regular and $\ffn(K_{d+1})=d+1$.
\end{proof}

\forst*
\begin{proof}
    Due to \cite{SCHEFFLER1992287}, for every forest the pathwidth of any forest of order $n$ is upper bounded by $\log_3(2n + 1)+1$. Together with Lemma \ref{lem:upperbounds}.1, this implies the claim.
\end{proof}

\noindent As a short overview, the following table lists for some common graph parameters whether they can, in general, be used to find an upper or lower bound on the firefighter number. We split the statements into several lemmata and prove them separately.
\begin{table}[H]
    \centering
    \begin{tabular}{c||c|c}
        Graph parameter &  \;Upper bound on $\ffn$ \;&  \;Lower bound on $\ffn$  \;\\
        \hline \hline
        \;Minimum degree \;$\delta_{\min}$& - & $\delta_{\min} + 1\leq \ffn(G)$
        \\
        \hline
         \;Maximum degree  \;$\delta_{\max}$ & - & - \\
        \hline
        Order $|V|$ & $\ffn(G) \leq |V|$ & - \\
        \hline
        Depth $d$ of a tree $T$ & $\ffn(T) \leq d + 1$ & - \\
        \hline
        Beta index $|E|/|V|$ & - & \;      
        $|E| \leq (\ffn(G) - 1) \cdot (|V| - \ffn(G)/2)$  \;\\
        \hline
        Pathwidth $\pw(G)$ & $\ffn(G) \leq \pw(G) + 1$ & - \\
        \hline
        Treewidth $\tw(G)$ & - & ? \\
        \hline
        Vertex cover number& \multirow{2}{*}{$\ffn(G) \leq \vcn(G) + 1$} & \multirow{2}{*}{-} \\
        $\vcn(G)$& & \\
    \end{tabular}
    \vspace{10pt}
    \caption{Tight upper and lower bounds on the firefighter number in terms of some graph parameters.
    For every missing entry except for the question mark, a bound based on this parameter is not possible.}
\end{table}

\begin{lemma}[Minimum Degree]
    For any graph $G$, we have $\ffn(G) \geq \delta_{\min}(G) + 1$.
    There is a class of graphs with bounded minimum degree and arbitrarily large firefighter number.
\end{lemma}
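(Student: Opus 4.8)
The plan is to derive both claims directly from results established earlier in the paper, so that almost no new work is required. For the lower bound $\ffn(G) \geq \delta_{\min}(G) + 1$, I would set $m \coloneqq \delta_{\min}(G) + 1$ and observe that $G$ then trivially satisfies the hypothesis of Lemma \ref{lem:lowerbounds}.2, namely $\delta_{\min}(G) \geq m - 1$. Applying Lemma \ref{lem:lowerbounds}.2 immediately yields $\ffn(G) \geq m = \delta_{\min}(G) + 1$, which is the desired inequality.

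For the second claim, I would exhibit the family of complete binary trees $(\B_d)_{d \in \N}$ as the required class. Each $\B_d$ with $d \geq 1$ has leaf nodes of degree $1$, so its minimum degree is exactly $1$, which is bounded independently of $d$. On the other hand, Theorem \ref{thm:binTree} establishes that $\ffn(\B_d)$ grows without bound as $d \to \infty$ (the lower bound there is essentially of order $d/2$). Hence the complete binary trees form a class with minimum degree at most $1$ yet arbitrarily large firefighter number, which is exactly what is claimed. This mirrors the reasoning already used in the proof of Proposition \ref{prop:dRegular}, where Theorem \ref{thm:binTree} is invoked to obtain unbounded firefighter numbers.

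There is essentially no obstacle in this argument, since both statements follow from the cited results; the only point requiring any care is confirming that the minimum degree of a complete binary tree is genuinely bounded, which holds because the leaves always contribute degree $1$ regardless of the depth. This verifies that the family truly witnesses the separation between the minimum degree and the firefighter number, and so completes the proof.
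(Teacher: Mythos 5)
Your proposal is correct and follows essentially the same route as the paper: the lower bound is an immediate application of Lemma \ref{lem:lowerbounds}.2, and the witnessing family of complete binary trees with minimum degree $1$ and unbounded firefighter number (via Theorem \ref{thm:binTree}) is exactly the example the paper uses.
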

\begin{proof}
    $\ffn(G) \geq \delta_{\min}(G) + 1$ follows directly from Lemma \ref{lem:lowerbounds}.2, and is a tight bound since $\ffn(K_n) = n = \delta_{\min}(K_n) + 1$.
    As proven in Theorem \ref{thm:binTree} and in~\cite{Bernshteyn2021SearchingFA}, binary trees can reach arbitrarily high firefighter number, while their minimum degree is $1$.
\end{proof}

\begin{lemma}[Maximum Degree]
    There is a class of graphs with bounded firefighter number and arbitrarily large maximum degree, and there is a class of graphs with bounded maximum degree and arbitrarily large firefighter number.
\end{lemma}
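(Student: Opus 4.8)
The plan is to exhibit one explicit graph family for each of the two claims, both of which follow immediately from results already established earlier in the paper. For the first claim (bounded firefighter number, unbounded maximum degree), I would take the family of star graphs $K_{1,n}$. Each star has a center adjacent to all $n$ leaves, so its maximum degree is $n$ and grows without bound. At the same time, a star is a caterpillar graph (taking the central path to be the single center node, all leaves lie within distance $1$ of it) with at least one edge, so Proposition \ref{prop:char} gives $\ffn(K_{1,n}) = 2$ for every $n \geq 1$. Alternatively, one can invoke Proposition \ref{prop:graphs} directly, since $K_{1,n}$ is the complete bipartite graph with parts of size $1$ and $n$, yielding $\ffn(K_{1,n}) = \min\{1,n\} + 1 = 2$.

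For the second claim (bounded maximum degree, unbounded firefighter number), I would reuse the complete binary trees $\B_d$. In $\B_d$ the root has degree $2$, every internal non-root node has degree $3$ (two children and one parent), and each leaf has degree $1$, so the maximum degree is bounded by $3$ independently of $d$. By Theorem \ref{thm:binTree}, however, the firefighter number $\ffn(\B_d)$ grows without bound as $d \to \infty$ (indeed it is at least roughly $d/2$), which supplies the desired family. This mirrors the construction already used in the preceding Minimum Degree lemma.

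I do not expect any genuine obstacle here: both directions are one-line consequences of previously proven statements, and the only work is selecting the witnessing families and citing the correct results. If one wanted a fully self-contained argument for the first claim without appealing to Proposition \ref{prop:char}, the mild point to verify would be that two firefighters suffice on a star — this is routine, since placing one firefighter permanently on the center $c$ prevents reignition of the leaves while the second firefighter extinguishes the leaves one at a time, so I would only sketch it rather than grind through the details.
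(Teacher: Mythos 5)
Your proposal is correct and matches the paper's own proof essentially verbatim: the paper also uses star graphs (as caterpillars, via Proposition \ref{prop:char}) for bounded firefighter number with unbounded maximum degree, and complete binary trees (via Theorem \ref{thm:binTree}) for bounded maximum degree with unbounded firefighter number. No further comment is needed.
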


\begin{proof}
    Any star graph has firefighter number less or equal to $2$ since it is a caterpillar graph, but star graphs can reach arbitrarily high maximum degrees.
    As proven in Theorem \ref{thm:binTree} and in~\cite{Bernshteyn2021SearchingFA}, binary trees can reach an arbitrarily high firefighter number, while their maximum degree is bounded by $3$.
\end{proof}

\begin{lemma}[Order]
    There is a class of graphs with bounded firefighter number and arbitrarily large order, i.e., number of nodes.
    The firefighter number of a graph is less or equal to its order. This bound is tight.  
\end{lemma}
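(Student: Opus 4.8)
The plan is to verify the three assertions of the lemma separately, each of which follows quickly from results already established in the paper, so the main work is simply exhibiting an explicit witnessing family in each case. For the claim that there is a class of graphs with bounded firefighter number and arbitrarily large order, I would take the path graphs $P_n$ on $n$ nodes. For $n \geq 2$, each $P_n$ is a caterpillar graph with at least one edge, so Proposition~\ref{prop:char} gives $\ffn(P_n) = 2$ while the order $n$ may be chosen arbitrarily large. (Equally well, the edgeless graphs on $n$ nodes have firefighter number $1$ by the same proposition, which would give a bound of $1$ instead of $2$.)

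For the upper bound $\ffn(G) \leq |V|$, I would use the trivial one-step strategy $S = (V)$ that places a firefighter on every node at once. Then $\tilde{B}_1 = B_0 \setminus V = V \setminus V = \emptyset$, so $B_1 = \tilde{B}_1 \cup N(\tilde{B}_1) = \emptyset$ and $S$ is a winning $|V|$-strategy; hence $\ffn(G) \leq |V|$ for every graph $G = (V, E)$. For tightness, I would invoke Proposition~\ref{prop:graphs}, which states $\ffn(K_n) = n = |V(K_n)|$ for the complete graph $K_n$, witnessing that the bound is attained with equality.

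Since each of the three parts reduces either to a previously established characterization (Proposition~\ref{prop:char} and Proposition~\ref{prop:graphs}) or to a single one-step extinguishing strategy, there is no genuine obstacle here. The only point requiring care is to name an explicit family for each statement rather than leaving the claims abstract, so that the witnesses $P_n$, the strategy $(V)$, and $K_n$ respectively cover all three assertions.
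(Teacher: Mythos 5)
Your proposal is correct. For the first and third assertions you do essentially what the paper does: exhibit caterpillars (your paths $P_n$) of unbounded order with firefighter number $2$ via Proposition~\ref{prop:char}, and cite $\ffn(K_n)=n$ from Proposition~\ref{prop:graphs} for tightness. The one genuine difference is in the upper bound $\ffn(G)\leq |V|$: the paper deduces it from subgraph monotonicity, observing that every graph of order $n$ is a subgraph of $K_n$ and $\ffn(K_n)=n$, whereas you give the explicit one-step winning strategy $S=(V)$, checking directly that $\tilde{B}_1=\emptyset$ and hence $B_1=\emptyset$. Your route is more elementary and self-contained (it does not need $\ffn(K_n)=n$ or the monotonicity lemma for this part), while the paper's version is a one-line corollary of machinery it has already set up; both are fully rigorous, and the verification $N(\emptyset)=\emptyset$ that your argument implicitly relies on is immediate from the paper's definition of $N$.
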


\begin{proof}
    Caterpillar graphs can have arbitrarily high order, but have firefighter number less or equal to $2$.
    Every graph of order $n$ is a subgraph of $K_n$ and thus has firefighter number at most $n$, since $\ffn(K_n)=n$.
\end{proof}

\begin{lemma}[Depth of a tree]
    For any tree $T$ with depth $d \in \N_{\geq 0}$, we have $\ffn(T) \leq d + 1$, which is a tight bound since there exists a tree $T_d$ with depth $d$ and $\ffn(T) = d + 1$ for each such $d$.
    There is a class of trees with arbitrary depth and bounded firefighter number.
\end{lemma}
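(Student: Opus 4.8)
The three assertions can be handled separately. For the inequality $\ffn(T)\le d+1$ I would simply note that in a rooted tree of depth $d$ every node lies within distance $d$ of the root, so $\diam(T)\le 2d = 2(d+1)-2$; applying Lemma~\ref{lem:upperbounds}.2 with $m=d+1$ then gives $\ffn(T)\le d+1$ at once. For the final sentence (trees of arbitrary depth with bounded firefighter number) I would take the paths $P_n$ rooted at an endpoint: their depth $n-1$ is unbounded, yet each $P_n$ is a caterpillar and hence $\ffn(P_n)\le 2$ by Proposition~\ref{prop:char}. (Spiders, which have $\ffn\le 3$ since one may permanently guard the centre and clear the legs one at a time with two more firefighters, would serve equally well.)

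The substantive part is the tightness claim. I would construct $T_d$ recursively: let $T_0$ be a single vertex and, for $d\ge 1$, let $T_d$ be a root $r$ joined to the roots of $N_d$ disjoint copies of $T_{d-1}$, where the branching factor $N_d$ is chosen sufficiently large during the induction. By construction $T_d$ has depth exactly $d$, so the first part already yields $\ffn(T_d)\le d+1$ no matter how large $N_d$ is; it therefore remains only to prove the matching lower bound $\ffn(T_d)\ge d+1$ by induction on $d$. The base cases are immediate: $\ffn(T_0)=1$, and $T_1=K_{1,N_1}$ has $\ffn=2$ by Proposition~\ref{prop:char}; more generally, for small $d$ one checks that $T_d$ contains the three-legged spider $S(2,2,2)$ from the proof of Proposition~\ref{prop:char} as a subgraph, which already certifies $\ffn\ge 3$ through Lemma~\ref{lem:lowerbounds}.3.

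For the inductive step I would invoke Lemma~\ref{lem:lowerbounds}.1: assuming that $\ffn(T_{d-1})\ge d$ is certified by some set size $i_{d-1}$ (every $i_{d-1}$-subset of $T_{d-1}$ has at least $d-1$ neighbours), I would exhibit a size $i_d$ for which every $W\subseteq V(T_d)$ with $|W|=i_d$ has at least $d$ neighbours. The guiding dichotomy is that a subset with few neighbours in a bushy tree must be concentrated: either $W$ meets many of the $N_d$ copies, each crossed copy contributing boundary through its own root and through $r$, or $W$ is large inside a single copy and then inherits many neighbours from the inductive bound on $T_{d-1}$. Taking $N_d$ large enough forces at least $d$ neighbours in either case, completing the induction and hence showing $\ffn(T_d)=d+1$.

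I expect the crux and main obstacle to be exactly this combinatorial step: pinning down $i_d$ and $N_d$ and verifying the neighbour bound uniformly over \emph{all} subsets, since an adversarial choice of $W$ (say one full branch together with scattered deep leaves) can try to drive the neighbour count down. A cleaner but equally delicate alternative would be a direct adversary argument showing that $d$ firefighters cannot simultaneously guard $r$ and make net progress on sufficiently many copies of $T_{d-1}$ (each with firefighter number $d$ by induction), so that the moment $r$ is left unguarded the freshly re-ignited copies undo all progress; formalising ``net progress'' and the re-ignition bookkeeping is where the real work would lie.
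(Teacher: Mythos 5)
Your handling of the upper bound (via $\diam(T)\le 2d$ and Lemma~\ref{lem:upperbounds}.2) and of the bounded-$\ffn$ class of arbitrary depth (caterpillars/paths) matches the paper and is correct. The problem is the tightness claim: what you give there is a plan with the decisive step explicitly deferred, not a proof. You propose a recursive tree with branching factor $N_d$ ``chosen sufficiently large'' and a lower bound via Lemma~\ref{lem:lowerbounds}.1 by exhibiting a subset size $i_d$, but you never pin down $N_d$ or $i_d$ or verify the neighbour bound, and you yourself flag this as ``where the real work would lie.'' That gap is not cosmetic. Lemma~\ref{lem:lowerbounds}.1 requires that \emph{every} $W$ of the chosen size $i_d$ has at least $d$ neighbours, yet in any rooted tree a union of whole subtrees hanging off a single node has neighbourhood of size $1$; so $i_d$ must avoid every value expressible as a small signed combination of subtree sizes. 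For complete binary trees the paper needs the entire Hamming-weight/flips machinery of Lemma~\ref{lem:alternatingflips} to achieve this, and even then only gets a bound off by a $\log\log$ term. Nothing in your sketch suggests how the analogous selection works for your $T_d$, and your idea of ``inheriting'' the bound from $T_{d-1}$ does not directly apply, since the inductive certificate covers only the single size $i_{d-1}$, not the sizes of the intersections $W\cap T_{d-1}$ that actually arise.

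The paper avoids this entirely and instead carries out the ``direct adversary argument'' you mention only as an alternative, with a concrete choice: $T_d$ is the complete tree of depth $d$ and branching degree $d^2-1$. Assuming a winning $d$-strategy, let $t$ be the first time the root is extinguished; then at least $d^2-d$ of its children are extinguished at time $t-1$, the root must be guarded every step since the last time $t'$ it was unguarded, so $t-t'-1\ge (d^2-d)/(d-1)=d$; but in $d$ steps the remaining $d-1$ firefighters can touch at most $(d-1)^2<d^2-d$ nodes, so some depth-$(d-1)$ subtree is fully reignited and must then be extinguished at its root using only $d-1$ firefighters (the main root being permanently occupied). Recursing down to depth $1$ with one firefighter gives the contradiction. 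If you want to complete your write-up, adopting this argument (or fully executing the Lemma~\ref{lem:lowerbounds}.1 route with explicit $N_d$ and $i_d$) is required; as it stands, the central claim $\ffn(T_d)\ge d+1$ is unproven.
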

\begin{proof}
    Caterpillar graphs can have arbitrary depth, but their firefighter number is bounded by $2$.    
    The bound $\ffn(T) \leq d + 1$ follows immediately from Lemma~\ref{lem:upperbounds}.2., since $\diam(G) \leq 2d$.    
    By Corollary \ref{prop:char}, any tree of depth $0$ (resp. $1$) has firefighter number $1$ (resp. $2$).
    
    For $d \geq 3$, consider the complete tree $T_d$ of depth $d$ with branching degree $d^2 - 1$.
    Suppose there exists a winning $d$-strategy for $T_d$.
    Then there exists an earliest point in time $t$ when the root node is in $E_t$.
    Therefore, at time $t-1$, at most $d-1$ of its children are burning, hence at least $d^2-d$ child nodes are extinguished at time $t-1$.
    
    We denote these child nodes as the \emph{relevant child nodes}.
    Let $t'$ denote the latest time before $t$ when the root node was not covered by a firefighter.
    Since the root node is burning at the end of each turn before $t$, all children of the root node were set on fire in turn $t'$.
    By the definition of $t'$, the root node is covered by a firefighter each turn between $t'$ and $t$.
    Since at most $d - 1$ nodes can be extinguished in each turn, the $d^2 - d$ relevant child nodes are extinguished at time $t - 1$ and no child nodes are extinguished at time $t'$, it follows that $t - t' - 1 \geq (d^2 - d) / (d - 1) = d$.
    We denote the subtrees whose root nodes are the relevant children of the original root node as the \textit{relevant subtrees}. 
    Since the relevant subtrees have depth $d - 1$, all of their nodes are burning at time $t' + d$, if none of their nodes were covered between $t'$ and $t' + d$.
    There are only $d - 1$ firefighters available between $t' + 1$ and $t$, hence at most $(d-1)^2$ nodes can be covered between $t'$ and $t' + d$.
    Since there are $d^2 - d$ subtrees whose root nodes are the relevant child nodes, it follows that at some point $t''$ between $t'$ and $t$, there is a relevant subtree that is completely on fire.
    
    We therefore have to be able to extinguish the root node of a complete tree with depth $d - 1$ and branching degree $d^2 - 1$ with $d - 1$ firefighters, since the original root node has to be covered every single turn between $t'$ and $t$.
    Thus, we can recursively repeat the same argument until we have to extinguish the root node of a complete tree of depth $1$ and branching degree $d^2 -1$ with only one firefighter, which is clearly not possible.
    By contradiction, it follows that $\ffn(T_d) > d$.
\end{proof}

\begin{lemma}[Beta Index]
    There exists a class of graphs with bounded beta index (which is defined as the ratio between the number of edges and the number of nodes) and an arbitrarily large firefighter number.
    If a class of graphs has bounded firefighter number, their beta index is also bounded.
    In particular, we have $|E| \leq (\ffn(G) - 1) \cdot(|V| - \ffn(G)/2)$, which implies $\frac{|E|}{|V|} \leq \ffn(G) - 1$.
\end{lemma}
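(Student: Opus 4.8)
The plan is to handle the three assertions in turn. For the first—bounded beta index with unbounded firefighter number—I would exhibit the complete binary trees $\B_d$ from Section \ref{ch:binaryTrees}. Being trees, they satisfy $|E| = |V| - 1$, so their beta index $\tfrac{|E|}{|V|} = 1 - \tfrac{1}{|V|} < 1$ is uniformly bounded, while Theorem \ref{thm:binTree} guarantees that $\ffn(\B_d)$ grows without bound as $d \to \infty$. This already shows that no upper bound on $\ffn$ can be expressed in terms of the beta index.

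For the converse direction the crux is the quantitative inequality $|E| \leq (\ffn(G) - 1)(|V| - \ffn(G)/2)$, which sharpens Lemma \ref{lem:lowerbounds}.4. I would prove it by induction on $|V|$, peeling off a vertex of minimum degree. Writing $k \coloneqq \ffn(G)$, Lemma \ref{lem:lowerbounds}.2 gives $\delta_{\min}(G) \leq \ffn(G) - 1$, so $G$ contains a vertex $v$ incident to at most $k - 1$ edges. Deleting $v$ yields a subgraph $G'$ with $\ffn(G') \leq k$ by Lemma \ref{lem:lowerbounds}.3. Setting $B_k(N) \coloneqq (k-1)(N - k/2)$, the induction hypothesis together with the fact that $B_k(N)$ is nondecreasing in $k$ on the range $k \leq N$ gives $|E(G')| \leq B_k(|V| - 1)$. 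The key arithmetic identity is the telescoping $B_k(|V|) - B_k(|V| - 1) = k - 1$, which matches exactly the at most $k-1$ edges removed with $v$; adding them back gives $|E(G)| \leq B_k(|V| - 1) + (k - 1) = B_k(|V|)$, as required.

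The bound on the beta index itself then follows by dividing through by $|V|$: since $\ffn(G)/2 \geq 0$, we obtain $\tfrac{|E|}{|V|} \leq (\ffn(G) - 1)\bigl(1 - \tfrac{\ffn(G)}{2|V|}\bigr) \leq \ffn(G) - 1$, so a constant bound on $\ffn(G)$ forces a constant bound on the beta index. (If one is content with this qualitative statement and does not need the sharp constant, it drops out of Lemma \ref{lem:lowerbounds}.4 directly, since that lemma already bounds $|E|$ by a quantity linear in $|V|$ with slope controlled by $\ffn(G)$.) Tightness is witnessed by the complete graph, where $\ffn(K_n) = n$ and $|E| = \binom{n}{2} = (n-1)\cdot n/2 = B_n(n)$, so equality holds throughout.

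I expect the main obstacle to lie at the boundary of the induction rather than in its core step: one must check that $B_k(|V|-1)$ is monotone in $k$ precisely over the indices that can occur, which degenerates exactly when $G'$ is itself complete (so that $\ffn(G') = |V(G')|$ lies at the edge of the monotonicity range). I would absorb this degenerate case by taking complete graphs as the base case, where the desired inequality holds with equality as computed above. Everything else is the same routine degree-counting that underlies the proof of Lemma \ref{lem:lowerbounds}.4, and the complete-graph case conveniently doubles as the tightness certificate claimed in the statement.
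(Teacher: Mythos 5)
Your proposal is correct and follows essentially the same route as the paper: complete binary trees (bounded beta index, unbounded $\ffn$ by Theorem \ref{thm:binTree}) witness the first claim, and the quantitative bound is obtained by induction on $|V|$, peeling off a vertex of degree at most $\ffn(G)-1$ supplied by Lemma \ref{lem:lowerbounds}.2 --- which is exactly the argument behind Lemma \ref{lem:lowerbounds}.4, to which the paper's proof simply defers. The only difference is that you inline that induction with constants matching the stated inequality $|E| \leq (\ffn(G)-1)(|V|-\ffn(G)/2)$ and treat the complete-graph endpoint (where the monotonicity in $k$ degenerates) explicitly, which is arguably tighter bookkeeping than the paper's citation, since Lemma \ref{lem:lowerbounds}.4 as literally stated yields the displayed bound only after an index shift.
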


\begin{proof}
    By Theorem \ref{thm:binTree} and~\cite{Bernshteyn2021SearchingFA}, binary trees can have arbitrarily large firefighter number. However, they have the smallest possible beta index for connected graphs, i.e., $\frac{|E|}{|V|} = \frac{|V| - 1}{|V|} < 1$.
    From Lemma \ref{lem:lowerbounds}.4 the rest of the statement follows. 
\end{proof}

\begin{lemma}[Pathwidth]\label{lem:pw}
    For any graph $G$, we have $\ffn(G) \leq \pw(G) + 1$.
    There exists a class of graphs with bounded firefighter number and arbitrarily large pathwidth.
\end{lemma}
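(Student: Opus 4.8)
The first assertion, $\ffn(G) \le \pw(G) + 1$, is precisely Lemma~\ref{lem:upperbounds}.1, so I would simply invoke that statement and concentrate all the work on the separating family.

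For the second assertion the goal is a family $(G_k)_{k \ge 1}$ with $\pw(G_k) \to \infty$ but $\ffn(G_k) \le C$ for an absolute constant $C$. Since Lemma~\ref{lem:upperbounds}.1 already gives $\ffn \le \pw + 1$, the whole point is to force the pathwidth to be \emph{much} larger than the firefighter number, so the tight examples are useless here: for $K_n$, $K_{n,m}$, $C_n$ and caterpillars one checks that $\ffn = \pw + 1$ holds with equality. The plan is to establish the two sides by completely separate arguments — an explicit winning $C$-strategy for the upper bound on $\ffn(G_k)$, and a standard pathwidth obstruction (a bramble, or the vertex-separation characterisation of pathwidth) for the lower bound $\pw(G_k) \ge k$.

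The genuine difficulty is that for the most natural high-pathwidth graphs these two quantities grow together, so none of them can serve as the family. For trees, nested branching drives the firefighter number up in lock-step with the pathwidth — a vertex with three branches of large pathwidth also forces that many guards to be held simultaneously — which is exactly why even complete binary trees satisfy $\ffn(\B_d) = \Theta(d) = \Theta(\pw(\B_d))$ by Theorem~\ref{thm:binTree}; and for the $n \times n$ grid, Lemma~\ref{lem:lowerbounds}.1 applied to corner squares already yields $\ffn \ge \Omega(n) = \Omega(\pw)$ from the vertex-isoperimetry of the grid. Hence the construction must exploit the one feature that distinguishes this game from monotone graph searching: a connected region that has been fully extinguished never reignites and therefore needs no further guarding. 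I would therefore look for a graph that decomposes into many individually cheaply-clearable blocks joined only through interfaces of bounded size, so that a non-monotone (``pulsing'') schedule can clear the blocks one after another, re-extinguishing only the current bounded interface, while the global \emph{arrangement} of the blocks is what pushes the pathwidth to infinity. The crux — and the step I expect to be hardest — is the upper bound: one must design the blocks and interfaces and then prove that this schedule keeps the boundary between the already-cleared part and the still-burning part extinguishable with $O(1)$ firefighters at \emph{every} step, despite the large pathwidth, while simultaneously verifying (via the bramble) that the global arrangement admits no small path-decomposition.
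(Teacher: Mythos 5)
Your treatment of the first assertion is fine and matches the paper: both simply invoke Lemma~\ref{lem:upperbounds}.1. Your diagnosis of the second assertion is also on the right track --- you correctly observe that the tight examples are useless, that trees and grids have $\ffn$ growing with $\pw$, and that any separating family must exploit non-monotonicity. But at the point where a proof is actually required, the proposal stops: you never exhibit a concrete family $(G_k)$, never prove an $O(1)$ upper bound on its firefighter number, and never establish the pathwidth lower bound. ``I would look for a graph that decomposes into cheaply-clearable blocks joined through bounded interfaces'' is a research plan, not an argument; you yourself flag the upper bound as ``the step I expect to be hardest'' and then do not carry it out. As it stands, the second assertion is unproven.

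The missing ingredient is in the literature you already have access to: Bernshteyn and Lee~\cite{Bernshteyn2021SearchingFA} showed that \emph{every tree admits a subdivision that is $3$-winning}. The paper's proof of Lemma~\ref{lem:pw} takes trees of arbitrarily large pathwidth, subdivides them as in that result, and uses the facts that subdividing edges does not reduce pathwidth and that the resulting graphs have firefighter number at most $3$. Note that this construction is exactly an instance of the mechanism you describe informally (long subdivided edges act as bounded interfaces that buy time for a non-monotone schedule), so your intuition is sound; what is absent is the concrete realization and, crucially, the proof of the constant upper bound on $\ffn$, which is the nontrivial content and is here supplied by citation rather than by a new argument. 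Without either that citation or a self-contained substitute for it, the proposal has a genuine gap.
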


\begin{proof}
    The bound is given in Lemma \ref{lem:upperbounds}.1.
    Furthermore, the authors of \cite{Bernshteyn2021SearchingFA} showed that each tree can be subdivided such that it is $3$-winning.
    This allows us to construct a class of graphs as described in the lemma, since trees can have arbitrarily large pathwidth and subdividing edges of a graph does not reduce the pathwidth of the graph. 
\end{proof}

\begin{lemma}[Treewidth]
    There is a class of graphs with bounded treewidth and arbitrarily large firefighter number, and a class of graphs that satisfy $\ffn(G) \leq \lceil (\tw(G) + 1)/3 \rceil + 1$.
\end{lemma}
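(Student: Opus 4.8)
I would prove the two assertions by two independent constructions. For the first part I would simply reuse the complete binary trees $\B_d$ from Section~\ref{ch:binaryTrees}: every tree has treewidth at most $1$, so the family $(\B_d)_{d\in\N}$ has bounded treewidth, while Theorem~\ref{thm:binTree} already guarantees $\ffn(\B_d)\to\infty$. This settles the first class and, together with Table~\ref{tab:basicProps}, confirms that no upper bound on $\ffn$ in terms of $\tw$ can exist.

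For the second part the plan is to exhibit a class whose treewidth grows while the firefighter number stays near one third of it. Since $\pw(G)\geq\tw(G)$ always holds and Lemma~\ref{lem:upperbounds}.1 only bounds $\ffn$ by $\pw+1$, I cannot make $\ffn$ small relative to $\tw$ through pathwidth; the decisive tool is subdivision, which leaves treewidth unchanged but slows the spreading fire. Concretely, I would let $G_n$ arise from $K_n$ by subdividing every edge into a sufficiently long path. The treewidth is then exactly $n-1$: for the lower bound, $K_n$ is a minor of $G_n$ (contract each subdivision path) and treewidth is minor-monotone; for the upper bound I would exhibit an explicit tree decomposition with one central bag holding all $n$ original hubs and, for each subdivided edge, an attached path of bags of size at most $4$ covering its internal vertices, which has width $n-1$. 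Note that the subdivision length is irrelevant here, so I can choose it freely when bounding $\ffn$.

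It then remains to show $\ffn(G_n)\leq\lceil n/3\rceil+1=\lceil(\tw(G_n)+1)/3\rceil+1$, and this is where the real work lies. I would extinguish the hubs one after another; the obstacle is that each hub has degree $n-1$, so protecting a hub while its $n-1$ incident paths burn naively seems to require $n-1$ firefighters. The key observation is that, because the subdivisions are long, the fire creeps toward a hub by only one vertex per round, so a pool of roughly $n/3$ firefighters can cycle through the incident burning paths and hold every front simultaneously, exploiting the creep-back delay; the factor $1/3$ and the necessary subdivision length should emerge from the same timing bookkeeping that produces the constants $2\beta+2\geq\alpha\geq\beta+3$ for the auxiliary graph $H_m$. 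The hard part will be making this multi-front holding strategy globally consistent, namely keeping every already-cleared hub extinguished while advancing to the next one, and verifying the factor-three accounting exactly. Once this winning $(\lceil n/3\rceil+1)$-strategy is established, the class $(G_n)_{n\geq 4}$ has unbounded treewidth and satisfies $\ffn(G_n)\leq\lceil(\tw(G_n)+1)/3\rceil+1$, completing the proof.
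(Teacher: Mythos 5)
Your first construction (complete binary trees, treewidth $1$, unbounded $\ffn$ by Theorem~\ref{thm:binTree}) is exactly what the paper does, and your treewidth computation for the subdivided $K_n$ (minor-monotonicity for the lower bound, an explicit decomposition for the upper bound) is correct and matches the paper's remark that subdivision preserves $\tw(K_n)=n-1$.

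The gap is in the second half. The paper does not prove the bound $\ffn(G)\leq\lceil n/3\rceil+1$ for a subdivision of $K_n$ at all --- it cites Bernshteyn and Lee~\cite{Bernshteyn2021SearchingFA} for precisely this fact. You instead attempt to establish it yourself, but the argument you sketch is exactly the part you defer (``the hard part will be making this multi-front holding strategy globally consistent \dots and verifying the factor-three accounting exactly''), so the central quantitative claim is never proven. Worse, the mechanism you propose is unlikely to work as stated: on a subdivided edge, extinguishing a single intermediate vertex does not stop the fire, because the burning vertices behind it immediately reignite it in the next propagation step. A firefighter who visits a front only once every three rounds therefore cannot ``hold'' that front; the fire on each unattended path advances monotonically toward the hub. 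Holding $n-1$ independent fronts genuinely requires $n-1$ simultaneous guards under this naive scheme, so the factor $1/3$ cannot emerge from creep-back timing alone --- the actual strategy in~\cite{Bernshteyn2021SearchingFA} is structured differently. To make your route sound you would either need to reconstruct that argument in full or, as the paper does, simply cite the reference for this bound.
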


\begin{proof}
    Trees have a treewidth of $1$ and can have arbitrarily large firefighter number, see Theorem \ref{thm:binTree} or~\cite{Bernshteyn2021SearchingFA}.
    It is well known that $K_n$ has treewidth $n - 1$ and subdividing edges does not decrease the treewidth.
    In \cite{Bernshteyn2021SearchingFA}, it was shown that there exists a subdivision of any $K_n$ with $n\geq 4$ that has firefighter number less or equal to $\lceil n/3 \rceil + 1$, which shows that $\lceil (\tw(G) + 1)/3 \rceil + 1$ is the best possible lower bound on $\ffn(G)$ in terms of $\tw(G)$ that one can hope for. 
\end{proof}

\begin{lemma}[Vertex Cover Number]
    For any graph $G$, we have $\ffn(G) \leq \vcn(G) + 1$.
    There exists a class of graphs with bounded firefighter number and arbitrarily large vertex cover number.
\end{lemma}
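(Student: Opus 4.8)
The plan is to treat the two assertions separately. For the upper bound $\ffn(G) \leq \vcn(G) + 1$, I would fix a minimum vertex cover $C \subseteq V$ with $|C| = \vcn(G)$ and observe that the subgraph induced on $V \setminus C$ is edgeless, since every edge of $G$ has at least one endpoint in $C$. An edgeless graph has firefighter number at most $1$ by Proposition \ref{prop:char}. I would then invoke Lemma \ref{lem:upperbounds}.3 with the subgraph $G_k \coloneqq G[V \setminus C]$, taking $k = \vcn(G)$ and $m = \vcn(G) + 1$: the hypotheses $\ffn(G_k) \leq 1 = m - k$, $G_k \subseteq G$, and $|V(G_k)| = |V| - \vcn(G) \geq |V| - k$ are all satisfied, so the lemma immediately yields $\ffn(G) \leq \vcn(G) + 1$. (Note $V \setminus C$ is never all of $V$ removed, since for any vertex $v$ the set $V \setminus \{v\}$ already covers all edges of a simple graph, so $C$ is a proper subset and the degenerate cases cause no trouble.)

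Alternatively, and perhaps more transparently, I would describe the winning strategy explicitly. In every round, use $\vcn(G)$ firefighters to extinguish all of $C$ and devote the single remaining firefighter to an arbitrary still-burning node of the independent set $I = V \setminus C$. Because $C$ is extinguished in every round, the provisionally burning set $\tilde{B}_t$ is always contained in $I$; and since $I$ is independent, $N(\tilde B_t) \subseteq C$, so the fire can only spread back into the cover, which is re-extinguished in the next round. Consequently the number of burning nodes of $I$ drops by exactly one each round, and after at most $|I|$ rounds no node of $I$ burns; a final round extinguishing $C$ clears the graph. This is a winning $(\vcn(G) + 1)$-strategy.

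For the second assertion, I would exhibit the family of path graphs $(P_n)_{n \in \N}$. Each $P_n$ is a caterpillar, so $\ffn(P_n) \leq 2$ by Proposition \ref{prop:char}, giving a uniform bound on the firefighter number across the whole family. On the other hand, $\vcn(P_n) = \lfloor n/2 \rfloor$, which is unbounded as $n \to \infty$. Hence the paths form a class with bounded firefighter number and arbitrarily large vertex cover number, as required.

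The routine verifications dominate this argument, so there is no substantial obstacle. The only point demanding a little care is the bookkeeping in the explicit strategy: confirming that $\tilde{B}_t \subseteq I$ together with $N(I) \subseteq C$ forces $|B_t \cap I| = |B_{t-1} \cap I| - 1$ whenever a burning node of $I$ is chosen, so that the process terminates. Everything else reduces to the already-established facts that edgeless graphs are $1$-winning and that caterpillars are $2$-winning.
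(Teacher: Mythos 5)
Your proof is correct, but it takes a genuinely different route from the paper's. The paper derives the upper bound indirectly: it uses the well-known fact that $\pw(G) \leq \vcn(G)$ and then invokes the pathwidth bound $\ffn(G) \leq \pw(G) + 1$ from Lemma \ref{lem:upperbounds}.1 (which itself comes from \cite{Bernshteyn2021SearchingFA}). You instead argue directly: deleting a minimum vertex cover leaves an edgeless graph with firefighter number $1$, so Lemma \ref{lem:upperbounds}.3 with $k = \vcn(G)$ gives the bound immediately, and your explicit strategy (park $\vcn(G)$ firefighters on the cover every round, sweep the independent set with the last one) is a transparent, self-contained rederivation of that lemma in this special case. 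Your version is more elementary and arguably more illuminating about why the bound holds; the paper's is shorter given the machinery it has already set up. For the second assertion the paper exhibits spiders with legs of length $2$ (firefighter number $3$), whereas you use paths $P_n$ with $\ffn(P_n) \leq 2$ and $\vcn(P_n) = \lfloor n/2 \rfloor$ — an equally valid and if anything simpler witness, since caterpillars are already known to be $2$-winning by Proposition \ref{prop:char}. All the verifications you flag (the cover being a proper subset of $V$, the bookkeeping showing $|B_t \cap I|$ strictly decreases) go through as you describe.
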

\begin{proof}
    Note that the pathwidth of a graph is upper bounded by the vertex cover number of the graph. Hence, it holds that $\ffn(G)\leq vcn(G) +1$ due to Lemma~\ref{lem:pw}.
    This is tight for a star and arbitrarily bad for spider graphs $S_i$ with $i$ legs of length 2, which have $\ffn(S_i)=3$ for all $i\geq 3$. 
\end{proof}

\subsection{Omitted Proofs of Section \ref{ch:binaryTrees}: Firefighting on Complete Binary Trees}\label{app:binaryTrees}

\thmBinTree*
\begin{proof}
    According to Corollary \ref{prop:char} we have $\ffn(\B_{0})=1$, $\ffn(\B_{1})=\ffn(\B_{2})=2$ and $\ffn(\B_3),$ $\ffn(\B_4) \geq 3$ since $\B_0$ has no edges, $\B_1$ and $\B_2$ are caterpillar graphs and $\B_3$ and $\B_4$ are not caterpillar graphs.
    Since $\B_5$ and $\B_6$ both contain a complete binary tree of depth $4$, this also implies $\ffn(\B_5), \ffn(\B_6) \geq \ffn(\B_4) \geq 3$.

    \textit{Proof of the upper bound for $d \geq 3$:}
    We use induction. The base case is already given above.
    Assume that a binary tree of even depth $d$, denoted $\B_d$, can be extinguished by $k \geq 1$ firefighters.
    To extinguish $\B_{d+1}$ with $k+1$ firefighters, we can use the additional firefighter to extinguish the root node in each step.
    For a binary tree $\B_{d + 2}$, let $r$ denote its root node and $r', r''$ its two children.
    A winning $k + 1$-strategy for $\B_{d + 2}$ is then given by the following procedure: First, use the previously given strategy to extinguish the subtree with root $r'$ (which is not affected by $r$ burning, since $r'$ is extinguished in each step of this strategy).
    Then extinguish $\{r', r\}$ and afterwards $\{r, r''\}$.
    Finally, extinguish the subtree with root node $r''$ according to the previously given strategy for $\B_{d + 1}$, which does not let the fire spread back to $r$, since $r''$ is extinguished in each step.
    
    \textit{Proof of the lower bound for $d \geq 5$:}
    First of all, note that for odd $d$, the lower bound remains the same for $d + 1$ and $\B_d \subseteq \B_{d + 1}$, which implies $\ffn(\B_{d + 1}) \geq \ffn(\B_d)$.
    Hence, it suffices to prove the bound for odd $d$.
    
    Let $\B_d = (V, E)$ be a complete binary tree of depth $d \in \N_{\geq 3}$ with root node $r$, and let the \emph{level} of a node $v \in V$ in this binary tree be denoted by $l(v) \coloneqq d - \dist(v, r)$, meaning that $l(r) = d$ and $l(v) = 0$ for any leaf $v$ of $\B_d$.
    For any $v \in V$, let $\B(v) \subseteq V$ be the set of nodes containing $v$ and its descendants in $\B_d$.
    Let $S \subseteq V$ be a subset of nodes in $\B_d$.
    Consider the following way of constructing $S$:
    
    Set $S' = V$.
    Then apply the following step for all $v \in V$, ordered in descending order by $l(v)$:
    If $v \in S'$ but $v \notin S$, remove $\B(v)$ from $S'$.
    If $v \notin S'$ but $v \in S$, add $\B(v)$ to $S'$.
    After this step has been applied for all $v \in V$, we have $S = S'$.
    An illustration of this process can be found in Figure \ref{fig:binTree}.

\begin{figure}[ht]
    \centering
    \begin{tikzpicture}[scale=0.3, every node/.style={font=\small}]
			\node[draw, circle, fill=blue!40] (v1) at (0,0) {};
			
			\node[draw, circle, fill=white] (v2) at (-1.5,-1.5) {};
			\node[draw, circle, fill=blue!40] (v3) at (1.5,-1.5) {};
			
			\node[draw, circle, fill=white] (v4) at (-2.1,-3) {};
			\node[draw, circle, fill=blue!40] (v5) at (-0.9,-3) {};
			\node[draw, circle, fill=blue!40] (v6) at (0.9,-3) {};
			\node[draw, circle, fill=white] (v7) at (2.1,-3) {};
			\draw[thick] (v1) -- (v2);
			\draw[thick] (v1) -- (v3);
			
			\draw[thick] (v2) -- (v4);
			\draw[thick] (v2) -- (v5);
			
			\draw[thick] (v3) -- (v6);
			\draw[thick] (v3) -- (v7);

                \node[] (eq) at (4,-1.5) {$=$};

                \node[draw, circle, fill=blue!40] (vv1) at (8,0) {};
			
			\node[draw, circle, fill=blue!40] (vv2) at (6.5,-1.5) {};
			\node[draw, circle, fill=blue!40] (vv3) at (9.5,-1.5) {};
			
			\node[draw, circle, fill=blue!40] (vv4) at (5.9,-3) {};
			\node[draw, circle, fill=blue!40] (vv5) at (7.1,-3) {};
			\node[draw, circle, fill=blue!40] (vv6) at (8.9,-3) {};
			\node[draw, circle, fill=blue!40] (vv7) at (10.1,-3) {};
                \draw[thick] (vv1) -- (vv2);
			\draw[thick] (vv1) -- (vv3);
			\draw[thick] (vv2) -- (vv4);
			\draw[thick] (vv2) -- (vv5);
			\draw[thick] (vv3) -- (vv6);
			\draw[thick] (vv3) -- (vv7);

                \node[] (eq) at (12,-1.5) {$-$};

                \node[draw, gray!50, circle, fill=white] (vv1) at (16,0) {};
			
			\node[draw, circle, fill=blue!40] (vv2) at (14.5,-1.5) {};
			\node[draw, gray!50, circle, fill=white] (vv3) at (17.5,-1.5) {};
			
			\node[draw, circle, fill=blue!40] (vv4) at (13.9,-3) {};
			\node[draw, circle, fill=blue!40] (vv5) at (15.1,-3) {};
			\node[draw, gray!50, circle, fill=white] (vv6) at (16.9,-3) {};
			\node[draw, gray!50, circle, fill=white] (vv7) at (18.1,-3) {};
                \draw[thick, gray!50] (vv1) -- (vv2);
			\draw[thick, gray!50] (vv1) -- (vv3);
			\draw[thick] (vv2) -- (vv4);
			\draw[thick] (vv2) -- (vv5);
			\draw[thick, gray!50] (vv3) -- (vv6);
			\draw[thick, gray!50] (vv3) -- (vv7);

                \node[] (eq) at (20,-1.5) {$+$};

                \node[draw, gray!50, circle, fill=white] (vv1) at (24,0) {};
			
			\node[draw, gray!50, circle, fill=white] (vv2) at (22.5,-1.5) {};
			\node[draw, gray!50, circle, fill=white] (vv3) at (25.5,-1.5) {};
			
			\node[draw, gray!50, circle, fill=white] (vv4) at (21.9,-3) {};
			\node[draw, circle, fill=blue!40] (vv5) at (23.1,-3) {};
			\node[draw, gray!50, circle, fill=white] (vv6) at (24.9,-3) {};
			\node[draw, gray!50, circle, fill=white] (vv7) at (26.1,-3) {};
                \draw[thick, gray!50] (vv1) -- (vv2);
			\draw[thick, gray!50] (vv1) -- (vv3);
			\draw[thick, gray!50] (vv2) -- (vv4);
			\draw[thick, gray!50] (vv2) -- (vv5);
			\draw[thick, gray!50] (vv3) -- (vv6);
			\draw[thick, gray!50] (vv3) -- (vv7);

                \node[] (eq) at (28,-1.5) {$-$};

                \node[draw, gray!50, circle, fill=white] (vv1) at (32,0) {};
			
			\node[draw, gray!50, circle, fill=white] (vv2) at (30.5,-1.5) {};
			\node[draw, gray!50, circle, fill=white] (vv3) at (33.5,-1.5) {};
			
			\node[draw, gray!50, circle, fill=white] (vv4) at (29.9,-3) {};
			\node[draw, gray!50,  circle, fill=white] (vv5) at (31.1,-3) {};
			\node[draw, gray!50, circle, fill=white] (vv6) at (32.9,-3) {};
			\node[draw, circle, fill=blue!40] (vv7) at (34.1,-3) {};
                \draw[gray!50, thick] (vv1) -- (vv2);
			\draw[gray!50, thick] (vv1) -- (vv3);
			\draw[gray!50, thick] (vv2) -- (vv4);
			\draw[gray!50, thick] (vv2) -- (vv5);
			\draw[gray!50, thick] (vv3) -- (vv6);
			\draw[gray!50, thick] (vv3) -- (vv7);
    \end{tikzpicture}		
    \caption{Decomposing a subset of a complete binary tree into multiple complete binary trees of varying depths.}
    \label{fig:binTree}
\end{figure}
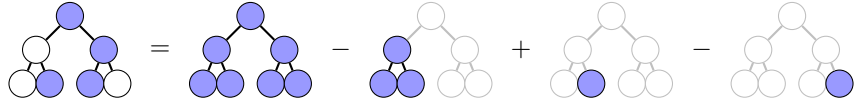

    This construction allows us to express the size of $S$ as a sum of positives and negatives of the sizes of complete binary trees of varying depths.
    Note that during this process, $\B(v)$ is added to $S'$ if and only if $v \in S$ and $p(v) \notin S$, where $p(v)$ denotes the parent of $v$ in $\B_d$.
    Similarly, $\B(v)$ is removed from $S'$ if and only if $v \notin S$ and $p(v) \in S$, or $v = r \notin S$.
    Therefore, it suffices to evaluate the root node and nodes that are incident to an edge connecting $S$ and $N(S)$.
    To that end, we partition $N(S) = C_1 \cup C_2 \cup C_3 \cup C_4 \cup C_5$ with $C_1 \coloneqq \{v \in N(S): p(v), c(v), c'(v) \in S\}$, \;$C_2 \coloneqq \{v \in N(S): p(v) \notin S, c(v), c'(v) \in S\}$, \;$C_3 \coloneqq \{v \in N(S): p(v), c(v) \in S, c'(v) \notin S\}$, \;$C_4 \coloneqq \{v \in N(S): p(v), c(v) \notin S, c'(v) \in S\}$ and \;$C_5 \coloneqq \{v \in N(S): p(v) \in S, c(v), c'(v) \notin S\}$,
    where $c(v)$ and $c'(v)$ denote the children of $v$ (in no particular order).
    If $p(v)$ or $c(v), c'(v)$ do not exist (i.e., if $v$ is the root or a leaf node), we set $p(v) \notin S$ or $c(v), c'(v) \notin S$.
    See Figure \ref{fig:C1toC5} for an illustration of the nodes in these sets.

    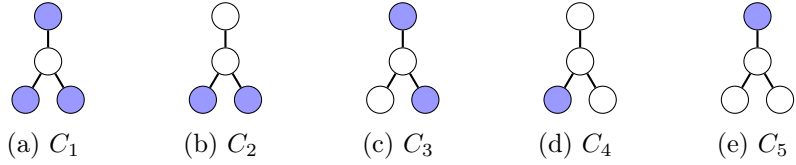
\begin{figure}[H]
		\centering
        \subfloat[$C_1$]{
         \begin{tikzpicture}[scale=0.3, every node/.style={font=\small}]
			\node[draw, circle, fill=blue!40] (v1) at (-1.5,0) {};
			
			\node[draw, circle, fill=white] (v2) at (-1.5,-2) {};
			
			\node[draw, circle, fill=blue!40] (v4) at (-2.5,-3.7) {};
			\node[draw, circle, fill=blue!40] (v5) at (-0.5,-3.7) {};
			\draw[thick] (v1) -- (v2);
			
			\draw[thick] (v2) -- (v4);
			\draw[thick] (v2) -- (v5);

    \end{tikzpicture}	
    }
    \hspace{1cm}
    \subfloat[$C_2$]{
         \begin{tikzpicture}[scale=0.3, every node/.style={font=\small}]
			\node[draw, circle, fill=white] (v1) at (-1.5,0) {};
			
			\node[draw, circle, fill=white] (v2) at (-1.5,-2) {};
			
			\node[draw, circle, fill=blue!40] (v4) at (-2.5,-3.7) {};
			\node[draw, circle, fill=blue!40] (v5) at (-0.5,-3.7) {};
			\draw[thick] (v1) -- (v2);
			
			\draw[thick] (v2) -- (v4);
			\draw[thick] (v2) -- (v5);

    \end{tikzpicture}	
    }
    \hspace{1cm}
    \subfloat[$C_3$]{
         \begin{tikzpicture}[scale=0.3, every node/.style={font=\small}]
			\node[draw, circle, fill=blue!40] (v1) at (-1.5,0) {};
			
			\node[draw, circle, fill=white] (v2) at (-1.5,-2) {};
			
			\node[draw, circle, fill=white] (v4) at (-2.5,-3.7) {};
			\node[draw, circle, fill=blue!40] (v5) at (-0.5,-3.7) {};
			\draw[thick] (v1) -- (v2);
			
			\draw[thick] (v2) -- (v4);
			\draw[thick] (v2) -- (v5);

    \end{tikzpicture}	
    }
    \hspace{1cm}
    \subfloat[$C_4$]{
         \begin{tikzpicture}[scale=0.3, every node/.style={font=\small}]
			\node[draw, circle, fill=white] (v1) at (-1.5,0) {};
			
			\node[draw, circle, fill=white] (v2) at (-1.5,-2) {};
			
			\node[draw, circle, fill=blue!40] (v4) at (-2.5,-3.7) {};
			\node[draw, circle, fill=white] (v5) at (-0.5,-3.7) {};
			\draw[thick] (v1) -- (v2);
			
			\draw[thick] (v2) -- (v4);
			\draw[thick] (v2) -- (v5);

    \end{tikzpicture}	
    }
    \hspace{1cm}
    \subfloat[$C_5$]{
         \begin{tikzpicture}[scale=0.3, every node/.style={font=\small}]
			\node[draw, circle, fill=blue!40] (v1) at (-1.5,0) {};
			
			\node[draw, circle, fill=white] (v2) at (-1.5,-2) {};
			
			\node[draw, circle, fill=white] (v4) at (-2.5,-3.7) {};
			\node[draw, circle, fill=white] (v5) at (-0.5,-3.7) {};
			\draw[thick] (v1) -- (v2);
			
			\draw[thick] (v2) -- (v4);
			\draw[thick] (v2) -- (v5);

    \end{tikzpicture}	
    }

		\caption{Illustration of nodes in the set $C_1,\ldots, C_5$. The middle node is in the corresponding set $C_i$, nodes in $S$ are colored blue.}
		\label{fig:C1toC5}
    \end{figure}

    For each such set, we can determine the contribution of a node in this set to the size of $S$ as calculated before.
    For example, a node $v \in C_3$ removes a complete binary tree of depth $l(v)$ from $S$, but adds a complete binary tree of depth $l(v) - 1$.
    Hence, its contribution to $|S|$ is $-(2^{l(v) + 1} - 1) + 2^{l(v)} - 1 = -2^{l(v)}$.
    Applying the same logic to the other sets, we calculate
    \begin{align*}
        |S| = 2^{d + 1} - 1 &- \sum_{v \in C_1} 1 + \sum_{v \in C_2} (2^{l(v) + 1} - 2) \\
        &- \sum_{v \in C_3} 2^{l(v)} 
        + \sum_{v \in C_4} (2^{l(v)} - 1)
        - \sum_{v \in C_5} (2^{l(v) + 1} - 1),
    \end{align*}
    which we can rearrange to
    \[
        |S| + 1 + c =
        \sum_{v \in C_2} 2^{l(v) + 1}
        + \sum_{v \in C_4} 2^{l(v)}
        - \sum_{v \in C_3} 2^{l(v)}
        - \sum_{v \in C_5} 2^{l(v) + 1}
    \]
    with $c \coloneqq |C_1| + 2 |C_2| + |C_4| - |C_5|$.
    Using this equation, we will now prove the following claim, which suffices to finish the proof due to Lemma \ref{lem:lowerbounds}.1.

    \medskip
    \noindent \textit{Claim:}
    Let $d \in \N_{\geq 3}$ be odd.
    Then there is no set $S \subseteq V$ with
	\begin{equation*}
		|S| = -1 + \sum_{j \in [ (d - 1)/2]_0} 2^{2j+1}
	\end{equation*}
	and
	\begin{equation*}
		|N(S)| < \frac{d - 1}{2} - \frac12 \log \left(\frac{d - 3}{2} \right) - 1.
    \end{equation*}
    \medskip

    \noindent Let $S$ be a set with cardinality as given in the claim.
    Inserting this in the last equation yields
    \begin{equation}
    \label{eq:flipeq}
        \sum_{j \in [ (d - 1)/2]_0} 2^{2j+1} + c =
        \sum_{v \in C_2} 2^{l(v) + 1}
        + \sum_{v \in C_4} 2^{l(v)}
        - \sum_{v \in C_3} 2^{l(v)}
        - \sum_{v \in C_5} 2^{l(v) + 1}.
    \end{equation}
    We want to deduce a lower bound to $|N(S)|$ from this equation.
    To this end, we use the following two concepts from information theory:
    The \emph{Hamming weight} of a number $x \in \N_{\geq 0}$, denoted as $\hw(x)$, is the number of times that the digit one appears in the binary representation of $x$.
    Moreover, we define $\flips(x)$ as the summed number of occurrences of the sequences $01$ and $10$ in the binary representation of $x$, i.e., the number of bit transitions in the binary representation of $x$.

    If Equation (\ref{eq:flipeq}) holds true, then the number of flips of the left-hand side must equal the number of flips of the right-hand side.
    Using the well-known inequalities $\hw(x + y) \leq \hw(x) + \hw(y)$ and $\flips(x - y) \leq 2 \cdot (\hw(x) + \hw(y))$ for $x, y \in \N_{\geq 0}$, we calculate the number of flips for the right-hand side as
    \begin{align*}
        \flips &\left(\sum_{v \in C_2} 2^{l(v) + 1}
        + \sum_{v \in C_4} 2^{l(v)}
        - \sum_{v \in C_3} 2^{l(v)}
        - \sum_{v \in C_5} 2^{l(v) + 1}\right) \\
        &\leq 2 \cdot (|C_2| + |C_3| + |C_4| + |C_5|) \leq 2 |N(S)|.
    \end{align*}

    \noindent To analyze the left-hand side, we first write $c = c_+ - c_-$ with $c_+ \coloneqq |C_1| + 2|C_2| + |C_4|$ and $c_- \coloneqq |C_5|$.
    Since we have $c_+, c_- \in [2 |N(S)|]_0$, it follows that $|c| \leq 2 |N(S)|$.
    In order to proceed, we use the following rather specific and technical lemma, which we will prove later.

    \begin{restatable}{lemma}{lemTechnicalLemma}
\label{lem:alternatingflips}
    For odd $d \in \N_{\geq 0}$, we have $\flips\left(\sum_{j \in [(d - 1)/2]_0} 2^{2j + 1} \right) = d$ and for any $x \in \Z_{\neq 0}$ we have $\flips\left(\sum_{j \in [(d - 1)/2]_0} 2^{2j + 1} + x\right) \geq d - \lfloor \log(|x|) \rfloor - 4$.
\end{restatable}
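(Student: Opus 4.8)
The plan is to first recognize the quantity $M_d := \sum_{i \in [(d-1)/2]_0} 2^{2i+1} = 2^1 + 2^3 + \cdots + 2^d$ as the integer whose binary representation is the perfectly alternating string $1010\cdots10$ of length $d+1$: it carries a $1$ in every odd position $1,3,\dots,d$ and a $0$ in every even position, with the top bit sitting at position $d$ (which is a $1$ since $d$ is odd). Every one of the $d$ pairs of adjacent bits then differs, so reading off the occurrences of $01$ and $10$ gives $\flips(M_d) = d$, settling the first claim immediately.

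For the second claim I would first dispose of the trivial range: if $\lfloor\log(|x|)\rfloor \geq d - 4$ the right-hand side $d - \lfloor\log(|x|)\rfloor - 4$ is non-positive and there is nothing to prove, so I may assume $|x| < 2^{d-3}$, which in particular forces $M_d + x > 0$ so that $\flips(M_d + x)$ is a bona fide quantity. Set $k := \lfloor \log(|x|)\rfloor$, so $2^k \leq |x| < 2^{k+1}$. I would decompose $M_d = A \cdot 2^{k+1} + B$ with $B := M_d \bmod 2^{k+1} \in [0, 2^{k+1})$ and $A := \lfloor M_d / 2^{k+1}\rfloor$, and write $M_d + x = A\cdot 2^{k+1} + (B + x)$. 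Since $0 \leq B < 2^{k+1}$ and $-2^{k+1} < x < 2^{k+1}$, the low block satisfies $-2^{k+1} < B + x < 2^{k+2}$, so there is a carry $q := \lfloor (B+x)/2^{k+1}\rfloor \in \{-1, 0, 1\}$ into the high block, yielding $M_d + x = (A+q)\cdot 2^{k+1} + r$ with $0 \leq r < 2^{k+1}$.

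The crux, and the step I expect to be the main obstacle to phrase cleanly, is that the alternating structure of $M_d$ prevents this carry from propagating: I claim that $A$ and $A + q$ have identical binary digits at every index $\geq 2$. Indeed, the digits of $A$ are exactly the bits of $M_d$ at positions $k+1, k+2, \dots$, and these alternate, so no two consecutive digits of $A$ are equal (and $A \geq 2^{d-k-1} \geq 1$, so subtraction is harmless). Adding or subtracting $1$ to a number whose low-order bits alternate changes only its two least significant bits: either the lowest bit alone accepts the change and it stops, or the next bit carries the opposite value and absorbs the carry or borrow at once. Hence only the digits of $A$ at indices $0$ and $1$, i.e. positions $k+1$ and $k+2$ of $M_d + x$, can differ from those of $M_d$, and every bit of $M_d + x$ at position $\geq k+3$ agrees with the corresponding (alternating) bit of $M_d$.

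Finally I would read off the bound. The bits of $M_d$ at positions $k+3, k+4, \dots, d$ alternate and are copied verbatim into $M_d + x$, whose top bit therefore remains at position $d$; consequently each adjacent pair $(j, j+1)$ with $k+3 \leq j \leq d-1$ is a genuine bit transition of $M_d + x$. This gives $\flips(M_d + x) \geq (d-1) - (k+3) + 1 = d - k - 3 \geq d - \lfloor\log(|x|)\rfloor - 4$, completing the proof (in fact with one flip to spare relative to the stated bound).
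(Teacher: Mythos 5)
Your proof is correct, and it takes a genuinely different route from the paper's. The paper establishes the same carry-containment fact by a four-way case analysis on the parity of the number of binary digits $y$ of $|x|$ and on the sign of $x$: in each case it exhibits the nearest integer to $\sum_{i} 2^{2i+1}$ that differs in the $(3+y)$-th digit, checks by explicit subtraction that $|x|$ is too small to reach it, and concludes that the leading $d-y-1$ digits stay alternating while a one must survive among the trailing digits. You instead write $M_d = A\cdot 2^{k+1} + B$, observe that the carry $q$ into the high block lies in $\{-1,0,1\}$, and invoke the fact that adding or subtracting $1$ from an integer whose low-order bits alternate perturbs only its two least significant bits; this single uniform argument replaces all four cases and makes the mechanism (the carry cannot propagate into the alternating prefix) more transparent. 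The points that need care --- that $A+q$ keeps the same bit length and the same top bit so that the representation of $M_d+x$ still ends at position $d$, and that $A \geq 2^{d-k-1} \geq 8$ in the non-trivial range so the alternation is available where the carry lands --- are both addressed. Your intermediate bound is $d - \lfloor\log|x|\rfloor - 3$ versus the paper's $d - \lfloor\log|x|\rfloor - 2$, but both comfortably imply the stated $d - \lfloor\log|x|\rfloor - 4$.
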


    \noindent Applied to the left-hand side of Equation (\ref{eq:flipeq}), this yields
    \[
        \flips\left(\sum_{j \in [ (d - 1)/2]_0} 2^{2j+1} + c\right) \geq d - \lfloor \log(2 |N(S)|) \rfloor - 4 \geq d - \log(|N(S)|) - 5.
    \]
    Combined with the bound on the number of flips of the right-hand side of the equation, this implies $2 |N(S)| + \log(|N(S)|) \geq d - 5$.
    Therefore, we have
    \[
        |N(S)| \geq \frac{d - 1}{2} - \frac12 \log \left( \frac{d - 5}{2} \right) - 2
    \]
    for each set $S \subseteq V$ with $|S| = -1 + \sum_{j \in [(d - 1)/2]_0} 2^{2j + 1}$.
\end{proof}
\begin{proof}[Proof of Lemma \ref{lem:alternatingflips}]
    In binary notation, we have $\sum_{j \in [(d - 1)/2]_0} 2^{2j + 1} = \underbrace{1010 \dots 10}_{d + 1 \text{ digits}}$, which has $d$ flips.
    We will show that adding a number with small absolute value cannot decrease the number of flips by much, since most digits will remain alternating ones and zeros.
    For $x \neq 0$, $|x|$ has $y \coloneqq \lfloor \log(|x|) \rfloor + 1$ digits.
    If $y \geq d - 3$, we have $d - \lfloor \log(|x|)\rfloor - 4 \leq 0$, so the second inequality of the lemma holds trivially.
    Therefore, we may now assume $y < d - 3$.
    
    First, consider the case that $y$ is even and $x \geq 0$.
    The smallest number greater than $\underbrace{1010 \dots 10}_{d + 1 \text{ digits}}$ with a different digit in the $(3 + y)$-th place is $\underbrace{1010 \dots 1}_{d - y - 2 \text{ digits}} \underbrace{100 \dots 0}_{y + 3 \text{ digits}}$.
    Thus, in order to change any of the first $d - y - 1$ digits, we would need
    \[
        |x| \geq \underbrace{1010 \dots 1}_{d - y - 2 \text{ digits}} \underbrace{100 \dots 0}_{y + 3 \text{ digits}} - \underbrace{1010 \dots 10}_{d + 1 \text{ digits}} = \underbrace{1010 \dots 10 110}_{y + 1 \text{ digits}},
    \]
    which contradicts the fact that $|x|$ has $y$ digits.
    Hence, the first $(d - y - 1)$ digits have to remain alternating ones and zeros.
    This also implies that there has to appear a one in the last $2 + y$ digits, since there is no way to remove it without altering the $(3 + y)$-th digit.
    Therefore we have at least $d - y - 1 = d - \lfloor \log(|x|) \rfloor - 2$ flips.
    
    Next, consider the case that $y$ is even and $x < 0$.
    The greatest number smaller than $\underbrace{1010 \dots 10}_{d + 1 \text{ digits}}$ with a different digit in the $(3 + y)$-th place is $\underbrace{1010 \dots 10}_{d - y - 3 \text{ digits}} \underbrace{011 \dots 1}_{y + 4 \text{ digits}}$.
    Thus, in order to change any of the first $d - y - 1$ digits, we would need
    \[
        |x| \geq \underbrace{1010 \dots 10}_{d + 1 \text{ digits}} - \underbrace{1010 \dots 10}_{d - y - 3 \text{ digits}} \underbrace{011 \dots 1}_{y + 4 \text{ digits}}
        = \underbrace{10 \dots 1011}_{y + 2 \text{ digits}},
    \]
    which contradicts the fact that $|x|$ has $y$ digits.
    Hence, the first $(d - y - 1)$ digits have to remain alternating ones and zeros.
    Moreover, the last $y + 2$ digits cannot all be zero, since that would imply $x = - \underbrace{10 \dots 10}_{y + 2 \text{ digits}}$, again contradicting the number of digits of $|x|$.
    Therefore we have at least $d - y - 1 = d - \lfloor \log(|x|) \rfloor - 2$ flips.
    
    Now consider the case that $y$ is odd and $x \geq 0$.
    The smallest number greater than $\underbrace{1010 \dots 10}_{d + 1 \text{ digits}}$ with a different digit in the $(3 + y)$-th place is $\underbrace{1010 \dots 1}_{d - y - 3 \text{ digits}} \underbrace{100 \dots 0}_{y + 4 \text{ digits}}$.
    Thus, in order to change any of the first $d - y - 1$ digits, we would need
    \[
        |x| \geq \underbrace{1010 \dots 1}_{d - y - 3 \text{ digits}} \underbrace{100 \dots 0}_{y + 4 \text{ digits}} - \underbrace{1010 \dots 10}_{d + 1 \text{ digits}} = \underbrace{10 \dots 10110}_{y + 2 \text{ digits}},
    \]
    which contradicts the fact that $|x|$ has $y$ digits.
    Hence, the first $d - y - 1$ digits remain alternating ones and zeros.
    This also implies that there has to appear a one in the last $2 + y$ digits, since there is no way to remove it without altering the $(3 + y)$-th digit.
    Therefore we have at least $d - y - 1 = d - \lfloor \log(|x|) \rfloor - 2$ flips.

    Finally, consider the case that $y$ is odd and $x < 0$.
    The greatest number smaller than $\underbrace{1010 \dots 10}_{d + 1 \text{ digits}}$ with a different digit in the $(3 + y)$-th place is $\underbrace{1010 \dots 10}_{d - y - 2 \text{ digits}} \underbrace{011 \dots 1}_{y + 3 \text{ digits}}$.
    Thus, in order to change any of the first $d - y - 1$ digits, we would need
    \[
        |x| \geq \underbrace{1010 \dots 10}_{d + 1 \text{ digits}} - \underbrace{1010 \dots 10}_{d - y - 2 \text{ digits}} \underbrace{011 \dots 1}_{y + 3 \text{ digits}}
        = \underbrace{10 \dots 1011}_{y + 1 \text{ digits}},
    \]
    which contradicts the fact that $|x|$ has $y$ digits.
    Hence, the first $(d - y - 1)$ digits have to remain alternating ones and zeros.
    Moreover, the last $y + 2$ digits cannot all be zero, since that would imply $x = - \underbrace{10 \dots 10}_{y + 1 \text{ digits}}$, again contradicting the number of digits of $|x|$.
    Therefore we have at least $d - y - 1 = d - \lfloor \log(|x|) \rfloor - 2$ flips.
\end{proof}

\subsection{Omitted Proofs of Section \ref{ch:npHardness}: NP-Hardness}\label{app:npHardness}

In order to prove Proposition \ref{prop:timegadget} we first show that we can restrict our analysis to strategies where $K \cap F_{t} \neq \emptyset \Leftrightarrow K \subseteq F_t$ for certain $K \subseteq G$. With this lemma at hand, we show that there exists a $T$-winning $m$-strategy for $G$ if and only if there exists a $T$-winning $2m$-strategy for $\mathbb{G}$.

\lemCliques*
\begin{proof}
    We claim that $S' \coloneqq (F'_1, \dots, F'_T)$ with
    \[
        F'_i \coloneqq \begin{cases}
            F_i &\text{ if } K \subseteq F_i, \\
            F_i \setminus K &\text{ otherwise}
        \end{cases}
    \]
    for all $i \in [T]$ is a winning $m$-strategy on $G$.
    In particular, we claim that we have $B_i = B'_i$ for all $i \in [T]$, where the $B_i$ denote the burning sets appearing during strategy $S$ and the $B'_i$ denote the burning sets appearing during strategy $S'$.

    To prove this, we assume that this statement does not hold, and let $t \in [T]$ be the smallest possible such that $B_t \neq B'_t$.
    Note that due to our choice of $S'$, this implies $K \cap B'_i = K \cap B_i \in \{\emptyset, K\}$ for all $i < t$.
    From the choice of $t$, it follows that $F_t = F'_t \cup X$ for some $X \subsetneq K$, which implies $\tilde{B}_t = \tilde{B}'_t \setminus X$.
    If $K \cap B'_{t - 1} = \emptyset$, we also have $K \cap \tilde{B}'_t = \emptyset$ and therefore $\tilde{B}_t = \tilde{B}'_t$, contradicting $B_t \neq B'_t$.
    If $K \cap B'_{t - 1} = K$, then there exists a node $v \in \tilde{B}_t \cap \tilde{B}'_t$, which implies $K \cup N(K) \subseteq B_t \cap B'_t$.
    Since $\tilde{B}_t$ and $\tilde{B}'_t$ only differ in a subset of $K$, $B_t$ and $B'_t$ can only differ in a subset of $K \cup N(K)$.
    Therefore, we get $B_t = B'_t$, contradicting the definition of $t$.
\end{proof}

\begin{lemma}[$\mathbb{G}$]
\label{lem:shadowG}
    Let $G=(V,E)$ be an arbitrary graph. There exists a $T$-winning $m$-strategy for $G$ if and only if there exists a $T$-winning $2m$-strategy for $\mathbb{G}$.
\end{lemma}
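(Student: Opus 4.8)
The plan is to exploit the structural symmetry between a node of $G$ and its two copies in $\mathbb{G}$, combined with Lemma \ref{lem:cliques}, to set up a correspondence between strategies and burning sets on the two graphs. Write $v_1, v_2$ for the two copies in $\mathbb{G}$ of a node $v \in V$, so that $\{v_1, v_2\}$ is a $2$-clique and, by construction of the blowup, $N(v_1) \setminus \{v_1, v_2\} = N(v_2) \setminus \{v_1, v_2\} = \{w_1, w_2 : w \in N_G(v)\}$. Call a set $U \subseteq V(\mathbb{G})$ \emph{balanced} if for every $v \in V$ it contains either both copies $v_1, v_2$ or neither, and for such a set define its projection $\pi(U) \coloneqq \{v \in V : v_1, v_2 \in U\} \subseteq V$.

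The core of the argument is an inductive claim: if a $2m$-strategy $S^{\mathbb{G}} = (F_1, \dots, F_T)$ for $\mathbb{G}$ has all firefighter sets balanced, then every burning set $B_t$ it produces is balanced and satisfies $\pi(B_t) = B_t'$, where $B_t'$ is the burning set produced on $G$ by the projected strategy $(\pi(F_1), \dots, \pi(F_T))$. First I would verify the induction step directly from the propagation rule: assuming $B_{t-1}$ is balanced and projects to $B_{t-1}'$, the provisionally burning set $\tilde{B}_t = B_{t-1} \setminus F_t$ is balanced (a difference of balanced sets) and projects to $\tilde{B}_t'$; then a copy $v_1$ lies in $B_t = \tilde{B}_t \cup N(\tilde{B}_t)$ iff $v_1 \in \tilde{B}_t$, or $v_2 \in \tilde{B}_t$, or some $w_i \in \tilde{B}_t$ with $w \in N_G(v)$, which—using that $\tilde{B}_t$ is balanced—is equivalent to $v \in \tilde{B}_t' \cup N_G(\tilde{B}_t') = B_t'$, and symmetrically for $v_2$. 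Hence $B_t$ is balanced with $\pi(B_t) = B_t'$.

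Given this claim, both directions follow quickly. For the forward direction, from a $T$-winning $m$-strategy $(F_1, \dots, F_T)$ for $G$ I would build the balanced $2m$-strategy with firefighter sets $\{v_1, v_2 : v \in F_t\}$; since $B_0 = V(\mathbb{G})$ is balanced and projects to $V = B_0'$, the claim gives $\pi(B_T) = B_T' = \emptyset$, and balancedness forces $B_T = \emptyset$. For the backward direction, I would start from an arbitrary $T$-winning $2m$-strategy and apply Lemma \ref{lem:cliques} once to each of the $|V|$ cliques $\{v_1, v_2\}$; each application leaves the burning sets unchanged and only alters placements inside the clique it processes, so the resulting strategy is balanced while remaining $T$-winning, and its projection is then the desired $T$-winning $m$-strategy for $G$.

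The main obstacle I anticipate is justifying the iterated use of Lemma \ref{lem:cliques}: I must check that each clique $\{v_1, v_2\}$ genuinely satisfies its hypothesis (which holds by the neighbourhood identity above) and, crucially, that applying the lemma to one clique does not destroy the balancedness already achieved for the others—this is true because the modification in Lemma \ref{lem:cliques} touches only nodes of the clique being processed, so balancedness accumulates over the $|V|$ applications. Beyond this bookkeeping, the propagation case analysis in the inductive step is the only real calculation, and it becomes routine once the balancedness of $\tilde{B}_t$ is established.
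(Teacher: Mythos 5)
Your proposal is correct and follows essentially the same route as the paper: duplicating firefighter sets for the forward direction, and invoking Lemma \ref{lem:cliques} to normalise a winning $2m$-strategy on $\mathbb{G}$ into one that always covers both copies of a node before projecting it down to $G$. The paper states both steps without verification, whereas you supply the inductive correspondence between balanced burning sets on $\mathbb{G}$ and burning sets on $G$, which is exactly the detail the paper leaves implicit.
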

\begin{proof}
    Let $S = (F_1, \dots, F_T)$ be a winning $m$-strategy on $G$. Then the $2m$-strategy $S' = (F_1', \ldots, F'_T)$ with $F'_i = F_i \cup \{v' \in \mathbb{G}: v \in F_i\}$ is a winning $2m$-strategy on $\mathbb{G}$.

    Conversely, let $S' = (F_1',\ldots, F_T')$ be a winning $2m$-strategy for $\mathbb{G}$. Due to Lemma~\ref{lem:cliques}, we can assume that if $F_i'$ contains a node $v$ (resp. $v'$) of $\mathbb{G}$, then it also contains $v'$ (resp. $v$). Therefore, $S = (F_1, \ldots, F_T)$ with $F_i = V(G) \cap F_i'$ is a winning $m$-strategy for $G$.
\end{proof}

\propTimeGadget*
\begin{proof}
    If there is a winning $m$-strategy $S'=(F'_1,\ldots,F'_T)$ for $G$, then the following strategy $F$ is a winning $4m$-strategy for $H(G,T)$. Hence, $\ffn(H(G,T))\leq 4m$.
    For a visualization of this strategy, see Figures \ref{fig:strategyStart}-\ref{fig:strategyEnd}.
    Recall that the blocks $A,B,X,Y,Z$ are $2m$-cliques and every $P_i^j$ is an $m$-clique.
    An edge between two blocks $\mathcal{B}_1$ and $\mathcal{B}_2$ in this image corresponds to connecting every node in $\mathcal{B}_1$ to every node in $\mathcal{B}_2$.
    Furthermore, in these figures, burning nodes are colored in red, extinguished nodes are colored in white and the current firefighter set is marked with thick borders around the respective nodes.
    The strategy $F$ is given by:
    \begin{enumerate}
        \item $F_{1+ T \cdot i +j } = \{A, B, P_{i + 1}^j, P_{i + 1}^{j+1} \}$ for $i \in [2T + 1]_0, j \in [T]_0$\\
        with $E_{2T^2+2T+1}=\mathcal{P}$ (see Figure \ref{fig:strategyStart})
        \item $F_{2 T^2 + 2T + 2} = \{A, B, X\}$\\
        with $E_{2T^2+2T+2}=A \cup \mathcal{P}$ (see Figure \ref{fig:strategy2})
        \item $F_{2 T^2 + 2T + 3} = \{X, Y\}$\\
        with $E_{2T^2+2T+3}=A \cup \left(\mathcal{P} \setminus  \bigcup_{i \in [2T + 2]} P_i^{T+1}\right)$ (see Figure \ref{fig:strategy3})
        \item $F_{2 T^2 + 2T + 3 + i} = \{Y, F'_i\}$ for $i \in [T]$\\
        with $E_{2 T^2 + 3T + 3}=\mathbb{G} \cup A \cup X$ (see Figure \ref{fig:strategy4})
        \item $F_{2 T^2 + 3T + 4} = \{Y, Z\}$\\
        with $E_{2 T^2 + 3T + 4}=\mathbb{G} \cup X \cup Y$ (see Figure \ref{fig:strategy5})
        \item $F_{2 T^2 + 3T + 5} = \{A, B, Z\}$\\
        with $E_{2 T^2 + 3T + 5}=\mathbb{G} \cup X \cup Y \cup Z$ (see Figure \ref{fig:strategy6})
        \item $F_{2 T^2 + 3T + 5 + T \cdot i +  j} = \{A, B, P_{i + 1}^j, P_{i + 1}^{j + 1} \}$ for $i \in [2T + 1]_0$, $j \in [T]_0$\\
        with $E_{4 T^2 + 5T + 5}=H(G,T)$ (see Figure \ref{fig:strategyEnd})
    \end{enumerate}

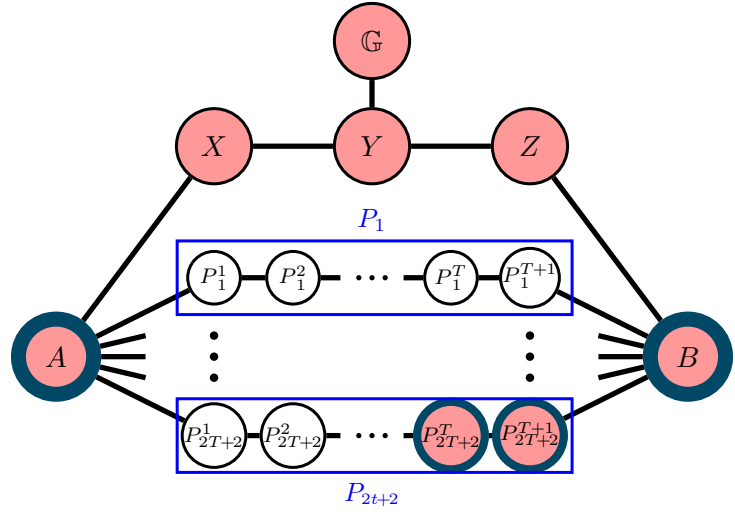
\begin{figure}[H]
\centering
\begin{tikzpicture}[scale=0.7, every node/.style={font=\scriptsize}]

\node[draw=dkteal, line width=2mm, circle, minimum size=1.0cm, inner sep=0pt,fill=red!40] (A) at (-6,0) {\normalsize $A$};
\node[draw=dkteal, line width=2mm, circle, minimum size=1.0cm, inner sep=0pt,fill=red!40] (B) at (6,0) {\normalsize $B$};
        \node[draw, line width=0.4mm, circle, minimum size=1.0cm, inner sep=0pt,fill=red!40] (X) at (-3,4) {\normalsize$X$};
        \node[draw, line width=0.4mm, circle, minimum size=1.0cm, inner sep=0pt,fill=red!40] (Y) at (0,4) {\normalsize$Y$};
        \node[draw, line width=0.4mm, circle, minimum size=1.0cm, inner sep=0pt,fill=red!40] (Z) at (3,4) {\normalsize$Z$};
        \node[draw, line width=0.4mm, circle, minimum size=1.0cm, inner sep=0pt,fill=red!40] (G) at (0,6) {\normalsize$\mathbb{G}$};
        \draw[line width=0.7mm, shorten >=-0.0cm, shorten <=-0.0cm] (A) -- (X);
        \draw[line width=0.7mm, shorten >=-0.0cm, shorten <=-0.0cm] (X) -- (Y);
        \draw[line width=0.7mm, shorten >=-0.0cm, shorten <=-0.0cm] (Y) -- (Z);
        \draw[line width=0.7mm, shorten >=-0.0cm, shorten <=-0.0cm] (Z) -- (B);
        \draw[line width=0.7mm, shorten >=-0.0cm, shorten <=-0.0cm] (Y) -- (G);
        \node[draw, line width=0.4mm, circle, minimum size=0.7cm, inner sep=0pt,fill=white] (P11) at (-3,1.5) {$P\mathrlap{}^{1}_{1}$};
        \node[draw, line width=0.4mm, circle, minimum size=0.7cm, inner sep=0pt,fill=white] (P12) at (-1.5,1.5) {$P\mathrlap{}^{2}_{1}$};
        
        \node[draw, line width=0.4mm, circle, minimum size=0.7cm, inner sep=0pt,fill=white] (P1t) at (1.5,1.5) {$P\mathrlap{}^{T}_{1}$};
        \node[draw, line width=0.4mm, circle, minimum size=0.7cm, inner sep=0pt,fill=white] (P1t+1) at (3,1.5) {$P\mathrlap{}^{\text{\tiny \smash{$T\mkern-5mu+\mkern-5mu1$}}}_{1}$};
        \draw[line width=0.7mm, shorten >=-0.0cm, shorten <=-0.0cm] (A) -- (P11);
        \draw[line width=0.7mm, shorten >=-0.0cm, shorten <=-0.0cm] (P11) -- (P12);
        \draw[line width=0.7mm, shorten >=-0.0cm, shorten <=-0.0cm] (P12) -- (-0.6,1.5);
        \fill (-0.25,1.5) circle (0.05cm);
        \fill (0,1.5) circle (0.05cm);
        \fill (0.25,1.5) circle (0.05cm);
        \draw[line width=0.7mm, shorten >=-0.0cm, shorten <=-0.0cm] (0.6,1.5) -- (P1t);
        \draw[line width=0.7mm, shorten >=-0.0cm, shorten <=-0.0cm] (P1t) -- (P1t+1);
        \draw[line width=0.7mm, shorten >=-0.0cm, shorten <=-0.0cm] (P1t+1) -- (B);
        \draw[draw=\pathcolor, line width=0.4mm] (-3.7,2.2) rectangle (3.8,0.8);

        \node at (0, 2.6) {\textcolor{\pathcolor}{\small$P_1$}};
        \fill (-3,0.4) circle (0.08cm);
        \fill (-3,0) circle (0.08cm);
        \fill (-3,-0.4) circle (0.08cm);
        \draw[line width=0.7mm, shorten >=-0.0cm, shorten <=-0.0cm] (A) -- (-4.3,0.4);
        \draw[line width=0.7mm, shorten >=-0.0cm, shorten <=-0.0cm] (A) -- (-4.3,0.0);
        \draw[line width=0.7mm, shorten >=-0.0cm, shorten <=-0.0cm] (A) -- (-4.3,-0.4);
        
        \fill (3,0.4) circle (0.08cm);
        \fill (3,0) circle (0.08cm);
        \fill (3,-0.4) circle (0.08cm);

        \draw[line width=0.7mm, shorten >=-0.0cm, shorten <=-0.0cm] (B) -- (4.3,0.4);
        \draw[line width=0.7mm, shorten >=-0.0cm, shorten <=-0.0cm] (B) -- (4.3,0.0);
        \draw[line width=0.7mm, shorten >=-0.0cm, shorten <=-0.0cm] (B) -- (4.3,-0.4);

        \node[draw, line width=0.4mm, circle, minimum size=0.7cm, inner sep=0pt,fill=white] (Pm1) at (-3,-1.5) {$P^{1}_{\text{\tiny \smash{$2T\mkern-6mu+\mkern-6mu2$}}}$};
        \node[draw, line width=0.4mm, circle, minimum size=0.7cm, inner sep=0pt,fill=white] (Pm2) at (-1.5,-1.5) {$P^{2}_{\text{\tiny \smash{$2T\mkern-6mu+\mkern-6mu2$}}}$};

        \node[draw=dkteal, line width=1.1mm, circle, minimum size=0.9cm, inner sep=0pt,fill=red!40] (Pmt) at (1.5,-1.5) {$P^{T}_{\text{\tiny \smash{$2T\mkern-6mu+\mkern-6mu2$}}}$};
        \node[draw=dkteal, line width=1.1mm, circle, minimum size=0.9cm, inner sep=0pt,fill=red!40] (Pmt+1) at (3,-1.5) {$P^{\text{\tiny \smash{$T\mkern-5mu+\mkern-5mu1$}}}_{\text{\tiny \smash{$2T\mkern-6mu+\mkern-6mu2$}}}$};
        \draw[line width=0.7mm, shorten >=-0.0cm, shorten <=-0.0cm] (A) -- (Pm1);
        \draw[line width=0.7mm, shorten >=-0.0cm, shorten <=-0.0cm] (Pm1) -- (Pm2);
        \draw[line width=0.7mm, shorten >=-0.0cm, shorten <=-0.0cm] (Pm2) -- (-0.6,-1.5);
        \fill (-0.25,-1.5) circle (0.05cm);
        \fill (0,-1.5) circle (0.05cm);
        \fill (0.25,-1.5) circle (0.05cm);
        \draw[line width=0.7mm, shorten >=-0.0cm, shorten <=-0.0cm] (0.6,-1.5) -- (Pmt);
        \draw[line width=0.7mm, shorten >=-0.0cm, shorten <=-0.0cm] (Pmt) -- (Pmt+1);
        \draw[line width=0.7mm, shorten >=-0.0cm, shorten <=-0.0cm] (Pmt+1) -- (B);
        \draw[draw=\pathcolor, line width=0.4mm] (-3.7,-2.2) rectangle (3.8,-0.8);

        \node at (0, -2.6) {\textcolor{\pathcolor}{\small$P_{\text{\tiny \smash{$2t\mkern-6mu+\mkern-6mu2$}}}$}};
\end{tikzpicture}

\caption{Step 1 of a winning $4m$-strategy for $H(G, T)$ if $\mathbb{G}$ is $2m$-winning in time $T$.
Start by covering $A$ and $B$ in each time step, clearing the paths $P_i$ one by one. The last time step of this process is visualized here.}
\label{fig:strategyStart}
\end{figure}

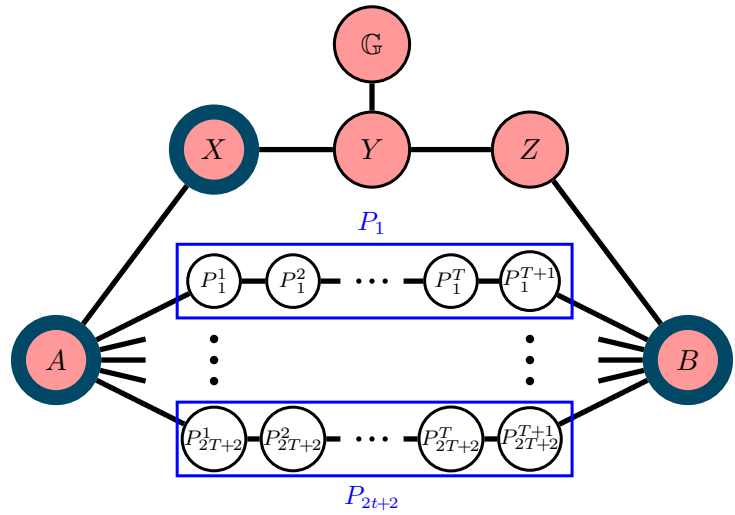
\begin{figure}[H]
\centering
\begin{tikzpicture}[scale=0.7, every node/.style={font=\scriptsize}]

\node[draw=dkteal, line width=2mm, circle, minimum size=1.0cm, inner sep=0pt,fill=red!40] (A) at (-6,0) {\normalsize $A$};
\node[draw=dkteal, line width=2mm, circle, minimum size=1.0cm, inner sep=0pt,fill=red!40] (B) at (6,0) {\normalsize $B$};
\node[draw=dkteal, line width=2mm, circle, minimum size=1.0cm, inner sep=0pt,fill=red!40] (X) at (-3,4) {\normalsize$X$};
        \node[draw, line width=0.4mm, circle, minimum size=1.0cm, inner sep=0pt,fill=red!40] (Y) at (0,4) {\normalsize$Y$};
        \node[draw, line width=0.4mm, circle, minimum size=1.0cm, inner sep=0pt,fill=red!40] (Z) at (3,4) {\normalsize$Z$};
        \node[draw, line width=0.4mm, circle, minimum size=1.0cm, inner sep=0pt,fill=red!40] (G) at (0,6) {\normalsize$\mathbb{G}$};
        \draw[line width=0.7mm, shorten >=-0.0cm, shorten <=-0.0cm] (A) -- (X);
        \draw[line width=0.7mm, shorten >=-0.0cm, shorten <=-0.0cm] (X) -- (Y);
        \draw[line width=0.7mm, shorten >=-0.0cm, shorten <=-0.0cm] (Y) -- (Z);
        \draw[line width=0.7mm, shorten >=-0.0cm, shorten <=-0.0cm] (Z) -- (B);
        \draw[line width=0.7mm, shorten >=-0.0cm, shorten <=-0.0cm] (Y) -- (G);
        \node[draw, line width=0.4mm, circle, minimum size=0.7cm, inner sep=0pt,fill=white] (P11) at (-3,1.5) {$P\mathrlap{}^{1}_{1}$};
        \node[draw, line width=0.4mm, circle, minimum size=0.7cm, inner sep=0pt,fill=white] (P12) at (-1.5,1.5) {$P\mathrlap{}^{2}_{1}$};
        
        \node[draw, line width=0.4mm, circle, minimum size=0.7cm, inner sep=0pt,fill=white] (P1t) at (1.5,1.5) {$P\mathrlap{}^{T}_{1}$};
        \node[draw, line width=0.4mm, circle, minimum size=0.7cm, inner sep=0pt,fill=white] (P1t+1) at (3,1.5) {$P\mathrlap{}^{\text{\tiny \smash{$T\mkern-5mu+\mkern-5mu1$}}}_{1}$};
        \draw[line width=0.7mm, shorten >=-0.0cm, shorten <=-0.0cm] (A) -- (P11);
        \draw[line width=0.7mm, shorten >=-0.0cm, shorten <=-0.0cm] (P11) -- (P12);
        \draw[line width=0.7mm, shorten >=-0.0cm, shorten <=-0.0cm] (P12) -- (-0.6,1.5);
        \fill (-0.25,1.5) circle (0.05cm);
        \fill (0,1.5) circle (0.05cm);
        \fill (0.25,1.5) circle (0.05cm);
        \draw[line width=0.7mm, shorten >=-0.0cm, shorten <=-0.0cm] (0.6,1.5) -- (P1t);
        \draw[line width=0.7mm, shorten >=-0.0cm, shorten <=-0.0cm] (P1t) -- (P1t+1);
        \draw[line width=0.7mm, shorten >=-0.0cm, shorten <=-0.0cm] (P1t+1) -- (B);
        \draw[draw=\pathcolor, line width=0.4mm] (-3.7,2.2) rectangle (3.8,0.8);

        \node at (0, 2.6) {\textcolor{\pathcolor}{\small$P_1$}};
        \fill (-3,0.4) circle (0.08cm);
        \fill (-3,0) circle (0.08cm);
        \fill (-3,-0.4) circle (0.08cm);
        \draw[line width=0.7mm, shorten >=-0.0cm, shorten <=-0.0cm] (A) -- (-4.3,0.4);
        \draw[line width=0.7mm, shorten >=-0.0cm, shorten <=-0.0cm] (A) -- (-4.3,0.0);
        \draw[line width=0.7mm, shorten >=-0.0cm, shorten <=-0.0cm] (A) -- (-4.3,-0.4);
        
        \fill (3,0.4) circle (0.08cm);
        \fill (3,0) circle (0.08cm);
        \fill (3,-0.4) circle (0.08cm);

        \draw[line width=0.7mm, shorten >=-0.0cm, shorten <=-0.0cm] (B) -- (4.3,0.4);
        \draw[line width=0.7mm, shorten >=-0.0cm, shorten <=-0.0cm] (B) -- (4.3,0.0);
        \draw[line width=0.7mm, shorten >=-0.0cm, shorten <=-0.0cm] (B) -- (4.3,-0.4);

        \node[draw, line width=0.4mm, circle, minimum size=0.7cm, inner sep=0pt,fill=white] (Pm1) at (-3,-1.5) {$P^{1}_{\text{\tiny \smash{$2T\mkern-6mu+\mkern-6mu2$}}}$};
        \node[draw, line width=0.4mm, circle, minimum size=0.7cm, inner sep=0pt,fill=white] (Pm2) at (-1.5,-1.5) {$P^{2}_{\text{\tiny \smash{$2T\mkern-6mu+\mkern-6mu2$}}}$};

        \node[draw, line width=0.4mm, circle, minimum size=0.7cm, inner sep=0pt,fill=white] (Pmt) at (1.5,-1.5) {$P^{T}_{\text{\tiny \smash{$2T\mkern-6mu+\mkern-6mu2$}}}$};
        \node[draw, line width=0.4mm, circle, minimum size=0.7cm, inner sep=0pt,fill=white] (Pmt+1) at (3,-1.5) {$P^{\text{\tiny \smash{$T\mkern-5mu+\mkern-5mu1$}}}_{\text{\tiny \smash{$2T\mkern-6mu+\mkern-6mu2$}}}$};
        \draw[line width=0.7mm, shorten >=-0.0cm, shorten <=-0.0cm] (A) -- (Pm1);
        \draw[line width=0.7mm, shorten >=-0.0cm, shorten <=-0.0cm] (Pm1) -- (Pm2);
        \draw[line width=0.7mm, shorten >=-0.0cm, shorten <=-0.0cm] (Pm2) -- (-0.6,-1.5);
        \fill (-0.25,-1.5) circle (0.05cm);
        \fill (0,-1.5) circle (0.05cm);
        \fill (0.25,-1.5) circle (0.05cm);
        \draw[line width=0.7mm, shorten >=-0.0cm, shorten <=-0.0cm] (0.6,-1.5) -- (Pmt);
        \draw[line width=0.7mm, shorten >=-0.0cm, shorten <=-0.0cm] (Pmt) -- (Pmt+1);
        \draw[line width=0.7mm, shorten >=-0.0cm, shorten <=-0.0cm] (Pmt+1) -- (B);
        \draw[draw=\pathcolor, line width=0.4mm] (-3.7,-2.2) rectangle (3.8,-0.8);

        \node at (0, -2.6) {\textcolor{\pathcolor}{\small$P_{\text{\tiny \smash{$2t\mkern-6mu+\mkern-6mu2$}}}$}};
\end{tikzpicture}
\caption{Step 2: Covering $A,B$ and $X$.}
\label{fig:strategy2}
\end{figure}

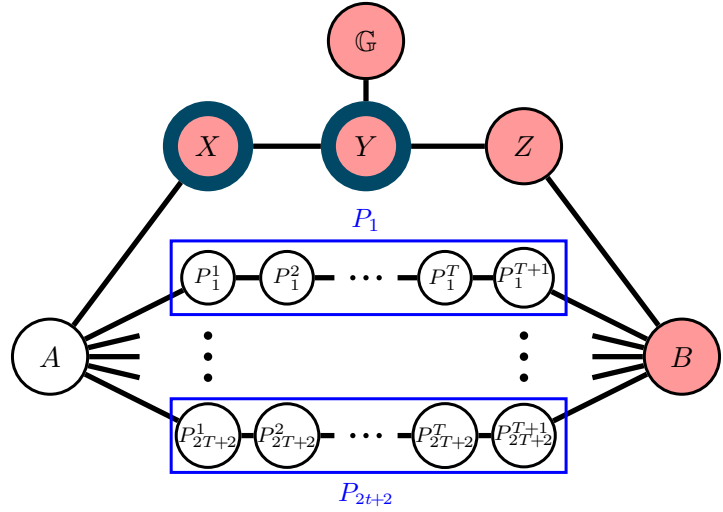
\begin{figure}[H]
\centering
\begin{tikzpicture}[scale=0.7, every node/.style={font=\scriptsize}]

    \node[draw, line width=0.4mm, circle, minimum size=1.0cm, inner sep=0pt,fill=white] (A) at (-6,0) {\normalsize $A$};
    \node[draw, line width=0.4mm, circle, minimum size=1.0cm, inner sep=0pt,fill=red!40] (B) at (6,0) {\normalsize $B$};
\node[draw=dkteal, line width=2mm, circle, minimum size=1.0cm, inner sep=0pt,fill=red!40] (X) at (-3,4) {\normalsize$X$};
\node[draw=dkteal, line width=2mm, circle, minimum size=1.0cm, inner sep=0pt,fill=red!40] (Y) at (0,4) {\normalsize$Y$};
        \node[draw, line width=0.4mm, circle, minimum size=1.0cm, inner sep=0pt,fill=red!40] (Z) at (3,4) {\normalsize$Z$};
        \node[draw, line width=0.4mm, circle, minimum size=1.0cm, inner sep=0pt,fill=red!40] (G) at (0,6) {\normalsize$\mathbb{G}$};
        \draw[line width=0.7mm, shorten >=-0.0cm, shorten <=-0.0cm] (A) -- (X);
        \draw[line width=0.7mm, shorten >=-0.0cm, shorten <=-0.0cm] (X) -- (Y);
        \draw[line width=0.7mm, shorten >=-0.0cm, shorten <=-0.0cm] (Y) -- (Z);
        \draw[line width=0.7mm, shorten >=-0.0cm, shorten <=-0.0cm] (Z) -- (B);
        \draw[line width=0.7mm, shorten >=-0.0cm, shorten <=-0.0cm] (Y) -- (G);
        \node[draw, line width=0.4mm, circle, minimum size=0.7cm, inner sep=0pt,fill=white] (P11) at (-3,1.5) {$P\mathrlap{}^{1}_{1}$};
        \node[draw, line width=0.4mm, circle, minimum size=0.7cm, inner sep=0pt,fill=white] (P12) at (-1.5,1.5) {$P\mathrlap{}^{2}_{1}$};
        
        \node[draw, line width=0.4mm, circle, minimum size=0.7cm, inner sep=0pt,fill=white] (P1t) at (1.5,1.5) {$P\mathrlap{}^{T}_{1}$};
        \node[draw, line width=0.4mm, circle, minimum size=0.7cm, inner sep=0pt,fill=white] (P1t+1) at (3,1.5) {$P\mathrlap{}^{\text{\tiny \smash{$T\mkern-5mu+\mkern-5mu1$}}}_{1}$};
        \draw[line width=0.7mm, shorten >=-0.0cm, shorten <=-0.0cm] (A) -- (P11);
        \draw[line width=0.7mm, shorten >=-0.0cm, shorten <=-0.0cm] (P11) -- (P12);
        \draw[line width=0.7mm, shorten >=-0.0cm, shorten <=-0.0cm] (P12) -- (-0.6,1.5);
        \fill (-0.25,1.5) circle (0.05cm);
        \fill (0,1.5) circle (0.05cm);
        \fill (0.25,1.5) circle (0.05cm);
        \draw[line width=0.7mm, shorten >=-0.0cm, shorten <=-0.0cm] (0.6,1.5) -- (P1t);
        \draw[line width=0.7mm, shorten >=-0.0cm, shorten <=-0.0cm] (P1t) -- (P1t+1);
        \draw[line width=0.7mm, shorten >=-0.0cm, shorten <=-0.0cm] (P1t+1) -- (B);
        \draw[draw=\pathcolor, line width=0.4mm] (-3.7,2.2) rectangle (3.8,0.8);

        \node at (0, 2.6) {\textcolor{\pathcolor}{\small$P_1$}};
        \fill (-3,0.4) circle (0.08cm);
        \fill (-3,0) circle (0.08cm);
        \fill (-3,-0.4) circle (0.08cm);
        \draw[line width=0.7mm, shorten >=-0.0cm, shorten <=-0.0cm] (A) -- (-4.3,0.4);
        \draw[line width=0.7mm, shorten >=-0.0cm, shorten <=-0.0cm] (A) -- (-4.3,0.0);
        \draw[line width=0.7mm, shorten >=-0.0cm, shorten <=-0.0cm] (A) -- (-4.3,-0.4);
        
        \fill (3,0.4) circle (0.08cm);
        \fill (3,0) circle (0.08cm);
        \fill (3,-0.4) circle (0.08cm);

        \draw[line width=0.7mm, shorten >=-0.0cm, shorten <=-0.0cm] (B) -- (4.3,0.4);
        \draw[line width=0.7mm, shorten >=-0.0cm, shorten <=-0.0cm] (B) -- (4.3,0.0);
        \draw[line width=0.7mm, shorten >=-0.0cm, shorten <=-0.0cm] (B) -- (4.3,-0.4);

        \node[draw, line width=0.4mm, circle, minimum size=0.7cm, inner sep=0pt,fill=white] (Pm1) at (-3,-1.5) {$P^{1}_{\text{\tiny \smash{$2T\mkern-6mu+\mkern-6mu2$}}}$};
        \node[draw, line width=0.4mm, circle, minimum size=0.7cm, inner sep=0pt,fill=white] (Pm2) at (-1.5,-1.5) {$P^{2}_{\text{\tiny \smash{$2T\mkern-6mu+\mkern-6mu2$}}}$};

        \node[draw, line width=0.4mm, circle, minimum size=0.7cm, inner sep=0pt,fill=white] (Pmt) at (1.5,-1.5) {$P^{T}_{\text{\tiny \smash{$2T\mkern-6mu+\mkern-6mu2$}}}$};
        \node[draw, line width=0.4mm, circle, minimum size=0.7cm, inner sep=0pt,fill=white] (Pmt+1) at (3,-1.5) {$P^{\text{\tiny \smash{$T\mkern-5mu+\mkern-5mu1$}}}_{\text{\tiny \smash{$2T\mkern-6mu+\mkern-6mu2$}}}$};
        \draw[line width=0.7mm, shorten >=-0.0cm, shorten <=-0.0cm] (A) -- (Pm1);
        \draw[line width=0.7mm, shorten >=-0.0cm, shorten <=-0.0cm] (Pm1) -- (Pm2);
        \draw[line width=0.7mm, shorten >=-0.0cm, shorten <=-0.0cm] (Pm2) -- (-0.6,-1.5);
        \fill (-0.25,-1.5) circle (0.05cm);
        \fill (0,-1.5) circle (0.05cm);
        \fill (0.25,-1.5) circle (0.05cm);
        \draw[line width=0.7mm, shorten >=-0.0cm, shorten <=-0.0cm] (0.6,-1.5) -- (Pmt);
        \draw[line width=0.7mm, shorten >=-0.0cm, shorten <=-0.0cm] (Pmt) -- (Pmt+1);
        \draw[line width=0.7mm, shorten >=-0.0cm, shorten <=-0.0cm] (Pmt+1) -- (B);
        \draw[draw=\pathcolor, line width=0.4mm] (-3.7,-2.2) rectangle (3.8,-0.8);

        \node at (0, -2.6) {\textcolor{\pathcolor}{\small$P_{\text{\tiny \smash{$2t\mkern-6mu+\mkern-6mu2$}}}$}};
\end{tikzpicture}
\caption{Step 3: Covering $X$ and $Y$.}
\label{fig:strategy3}
\end{figure}

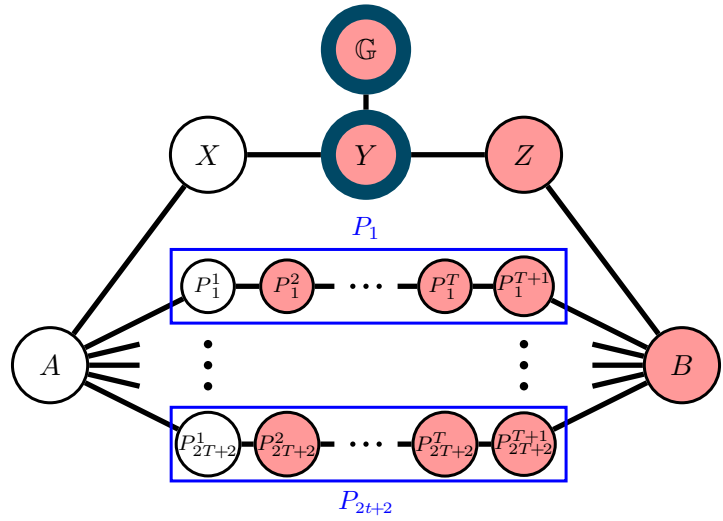
\begin{figure}[H]
\centering
\begin{tikzpicture}[scale=0.7, every node/.style={font=\scriptsize}]

    \node[draw, line width=0.4mm, circle, minimum size=1.0cm, inner sep=0pt,fill=white] (A) at (-6,0) {\normalsize $A$};
    \node[draw, line width=0.4mm, circle, minimum size=1.0cm, inner sep=0pt,fill=red!40] (B) at (6,0) {\normalsize $B$};
    \node[draw, line width=0.4mm, circle, minimum size=1.0cm, inner sep=0pt,fill=white] (X) at (-3,4) {\normalsize$X$};
\node[draw=dkteal, line width=2mm, circle, minimum size=1.0cm, inner sep=0pt,fill=red!40] (Y) at (0,4) {\normalsize$Y$};
        \node[draw, line width=0.4mm, circle, minimum size=1.0cm, inner sep=0pt,fill=red!40] (Z) at (3,4) {\normalsize$Z$};
\node[draw=dkteal, line width=2mm, circle, minimum size=1.0cm, inner sep=0pt,fill=red!40] (G) at (0,6) {\normalsize$\mathbb{G}$};
        \draw[line width=0.7mm, shorten >=-0.0cm, shorten <=-0.0cm] (A) -- (X);
        \draw[line width=0.7mm, shorten >=-0.0cm, shorten <=-0.0cm] (X) -- (Y);
        \draw[line width=0.7mm, shorten >=-0.0cm, shorten <=-0.0cm] (Y) -- (Z);
        \draw[line width=0.7mm, shorten >=-0.0cm, shorten <=-0.0cm] (Z) -- (B);
        \draw[line width=0.7mm, shorten >=-0.0cm, shorten <=-0.0cm] (Y) -- (G);
        \node[draw, line width=0.4mm, circle, minimum size=0.7cm, inner sep=0pt,fill=white] (P11) at (-3,1.5) {$P\mathrlap{}^{1}_{1}$};
        \node[draw, line width=0.4mm, circle, minimum size=0.7cm, inner sep=0pt,fill=red!40] (P12) at (-1.5,1.5) {$P\mathrlap{}^{2}_{1}$};
        
        \node[draw, line width=0.4mm, circle, minimum size=0.7cm, inner sep=0pt,fill=red!40] (P1t) at (1.5,1.5) {$P\mathrlap{}^{T}_{1}$};
        \node[draw, line width=0.4mm, circle, minimum size=0.7cm, inner sep=0pt,fill=red!40] (P1t+1) at (3,1.5) {$P\mathrlap{}^{\text{\tiny \smash{$T\mkern-5mu+\mkern-5mu1$}}}_{1}$};
        \draw[line width=0.7mm, shorten >=-0.0cm, shorten <=-0.0cm] (A) -- (P11);
        \draw[line width=0.7mm, shorten >=-0.0cm, shorten <=-0.0cm] (P11) -- (P12);
        \draw[line width=0.7mm, shorten >=-0.0cm, shorten <=-0.0cm] (P12) -- (-0.6,1.5);
        \fill (-0.25,1.5) circle (0.05cm);
        \fill (0,1.5) circle (0.05cm);
        \fill (0.25,1.5) circle (0.05cm);
        \draw[line width=0.7mm, shorten >=-0.0cm, shorten <=-0.0cm] (0.6,1.5) -- (P1t);
        \draw[line width=0.7mm, shorten >=-0.0cm, shorten <=-0.0cm] (P1t) -- (P1t+1);
        \draw[line width=0.7mm, shorten >=-0.0cm, shorten <=-0.0cm] (P1t+1) -- (B);
        \draw[draw=\pathcolor, line width=0.4mm] (-3.7,2.2) rectangle (3.8,0.8);

        \node at (0, 2.6) {\textcolor{\pathcolor}{\small$P_1$}};
        \fill (-3,0.4) circle (0.08cm);
        \fill (-3,0) circle (0.08cm);
        \fill (-3,-0.4) circle (0.08cm);
        \draw[line width=0.7mm, shorten >=-0.0cm, shorten <=-0.0cm] (A) -- (-4.3,0.4);
        \draw[line width=0.7mm, shorten >=-0.0cm, shorten <=-0.0cm] (A) -- (-4.3,0.0);
        \draw[line width=0.7mm, shorten >=-0.0cm, shorten <=-0.0cm] (A) -- (-4.3,-0.4);
        
        \fill (3,0.4) circle (0.08cm);
        \fill (3,0) circle (0.08cm);
        \fill (3,-0.4) circle (0.08cm);

        \draw[line width=0.7mm, shorten >=-0.0cm, shorten <=-0.0cm] (B) -- (4.3,0.4);
        \draw[line width=0.7mm, shorten >=-0.0cm, shorten <=-0.0cm] (B) -- (4.3,0.0);
        \draw[line width=0.7mm, shorten >=-0.0cm, shorten <=-0.0cm] (B) -- (4.3,-0.4);

        \node[draw, line width=0.4mm, circle, minimum size=0.7cm, inner sep=0pt,fill=white] (Pm1) at (-3,-1.5) {$P^{1}_{\text{\tiny \smash{$2T\mkern-6mu+\mkern-6mu2$}}}$};
        \node[draw, line width=0.4mm, circle, minimum size=0.7cm, inner sep=0pt,fill=red!40] (Pm2) at (-1.5,-1.5) {$P^{2}_{\text{\tiny \smash{$2T\mkern-6mu+\mkern-6mu2$}}}$};

        \node[draw, line width=0.4mm, circle, minimum size=0.7cm, inner sep=0pt,fill=red!40] (Pmt) at (1.5,-1.5) {$P^{T}_{\text{\tiny \smash{$2T\mkern-6mu+\mkern-6mu2$}}}$};
        \node[draw, line width=0.4mm, circle, minimum size=0.7cm, inner sep=0pt,fill=red!40] (Pmt+1) at (3,-1.5) {$P^{\text{\tiny \smash{$T\mkern-5mu+\mkern-5mu1$}}}_{\text{\tiny \smash{$2T\mkern-6mu+\mkern-6mu2$}}}$};
        \draw[line width=0.7mm, shorten >=-0.0cm, shorten <=-0.0cm] (A) -- (Pm1);
        \draw[line width=0.7mm, shorten >=-0.0cm, shorten <=-0.0cm] (Pm1) -- (Pm2);
        \draw[line width=0.7mm, shorten >=-0.0cm, shorten <=-0.0cm] (Pm2) -- (-0.6,-1.5);
        \fill (-0.25,-1.5) circle (0.05cm);
        \fill (0,-1.5) circle (0.05cm);
        \fill (0.25,-1.5) circle (0.05cm);
        \draw[line width=0.7mm, shorten >=-0.0cm, shorten <=-0.0cm] (0.6,-1.5) -- (Pmt);
        \draw[line width=0.7mm, shorten >=-0.0cm, shorten <=-0.0cm] (Pmt) -- (Pmt+1);
        \draw[line width=0.7mm, shorten >=-0.0cm, shorten <=-0.0cm] (Pmt+1) -- (B);
        \draw[draw=\pathcolor, line width=0.4mm] (-3.7,-2.2) rectangle (3.8,-0.8);

        \node at (0, -2.6) {\textcolor{\pathcolor}{\small$P_{\text{\tiny \smash{$2t\mkern-6mu+\mkern-6mu2$}}}$}};
\end{tikzpicture}
\caption{Step 4: Covering $Y$ for $T$ timesteps, while clearing $\mathbb{G}$ with the remaining $2m$ firefighters.
In the meantime, the fire is spreading along the paths. The last of the $T$ time steps is visualized.}
\label{fig:strategy4}
\end{figure}

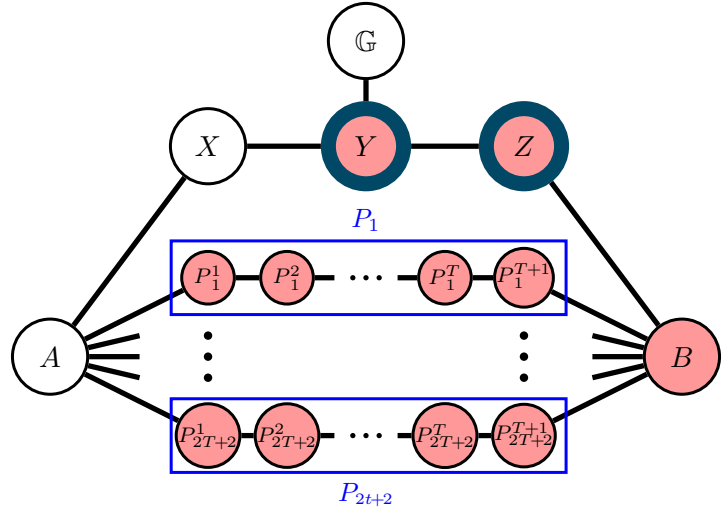
\begin{figure}[H]
\centering
\begin{tikzpicture}[scale=0.7, every node/.style={font=\scriptsize}]

    \node[draw, line width=0.4mm, circle, minimum size=1.0cm, inner sep=0pt,fill=white] (A) at (-6,0) {\normalsize $A$};
    \node[draw, line width=0.4mm, circle, minimum size=1.0cm, inner sep=0pt,fill=red!40] (B) at (6,0) {\normalsize $B$};
    \node[draw, line width=0.4mm, circle, minimum size=1.0cm, inner sep=0pt,fill=white] (X) at (-3,4) {\normalsize$X$};
\node[draw=dkteal, line width=2mm, circle, minimum size=1.0cm, inner sep=0pt,fill=red!40] (Y) at (0,4) {\normalsize$Y$};
\node[draw=dkteal, line width=2mm, circle, minimum size=1.0cm, inner sep=0pt,fill=red!40] (Z) at (3,4) {\normalsize$Z$};
        \node[draw, line width=0.4mm, circle, minimum size=1.0cm, inner sep=0pt,fill=white] (G) at (0,6) {\normalsize$\mathbb{G}$};
        \draw[line width=0.7mm, shorten >=-0.0cm, shorten <=-0.0cm] (A) -- (X);
        \draw[line width=0.7mm, shorten >=-0.0cm, shorten <=-0.0cm] (X) -- (Y);
        \draw[line width=0.7mm, shorten >=-0.0cm, shorten <=-0.0cm] (Y) -- (Z);
        \draw[line width=0.7mm, shorten >=-0.0cm, shorten <=-0.0cm] (Z) -- (B);
        \draw[line width=0.7mm, shorten >=-0.0cm, shorten <=-0.0cm] (Y) -- (G);
        \node[draw, line width=0.4mm, circle, minimum size=0.7cm, inner sep=0pt,fill=red!40] (P11) at (-3,1.5) {$P\mathrlap{}^{1}_{1}$};
        \node[draw, line width=0.4mm, circle, minimum size=0.7cm, inner sep=0pt,fill=red!40] (P12) at (-1.5,1.5) {$P\mathrlap{}^{2}_{1}$};
        
        \node[draw, line width=0.4mm, circle, minimum size=0.7cm, inner sep=0pt,fill=red!40] (P1t) at (1.5,1.5) {$P\mathrlap{}^{T}_{1}$};
        \node[draw, line width=0.4mm, circle, minimum size=0.7cm, inner sep=0pt,fill=red!40] (P1t+1) at (3,1.5) {$P\mathrlap{}^{\text{\tiny \smash{$T\mkern-5mu+\mkern-5mu1$}}}_{1}$};
        \draw[line width=0.7mm, shorten >=-0.0cm, shorten <=-0.0cm] (A) -- (P11);
        \draw[line width=0.7mm, shorten >=-0.0cm, shorten <=-0.0cm] (P11) -- (P12);
        \draw[line width=0.7mm, shorten >=-0.0cm, shorten <=-0.0cm] (P12) -- (-0.6,1.5);
        \fill (-0.25,1.5) circle (0.05cm);
        \fill (0,1.5) circle (0.05cm);
        \fill (0.25,1.5) circle (0.05cm);
        \draw[line width=0.7mm, shorten >=-0.0cm, shorten <=-0.0cm] (0.6,1.5) -- (P1t);
        \draw[line width=0.7mm, shorten >=-0.0cm, shorten <=-0.0cm] (P1t) -- (P1t+1);
        \draw[line width=0.7mm, shorten >=-0.0cm, shorten <=-0.0cm] (P1t+1) -- (B);
        \draw[draw=\pathcolor, line width=0.4mm] (-3.7,2.2) rectangle (3.8,0.8);

        \node at (0, 2.6) {\textcolor{\pathcolor}{\small$P_1$}};
        \fill (-3,0.4) circle (0.08cm);
        \fill (-3,0) circle (0.08cm);
        \fill (-3,-0.4) circle (0.08cm);
        \draw[line width=0.7mm, shorten >=-0.0cm, shorten <=-0.0cm] (A) -- (-4.3,0.4);
        \draw[line width=0.7mm, shorten >=-0.0cm, shorten <=-0.0cm] (A) -- (-4.3,0.0);
        \draw[line width=0.7mm, shorten >=-0.0cm, shorten <=-0.0cm] (A) -- (-4.3,-0.4);
        
        \fill (3,0.4) circle (0.08cm);
        \fill (3,0) circle (0.08cm);
        \fill (3,-0.4) circle (0.08cm);

        \draw[line width=0.7mm, shorten >=-0.0cm, shorten <=-0.0cm] (B) -- (4.3,0.4);
        \draw[line width=0.7mm, shorten >=-0.0cm, shorten <=-0.0cm] (B) -- (4.3,0.0);
        \draw[line width=0.7mm, shorten >=-0.0cm, shorten <=-0.0cm] (B) -- (4.3,-0.4);

        \node[draw, line width=0.4mm, circle, minimum size=0.7cm, inner sep=0pt,fill=red!40] (Pm1) at (-3,-1.5) {$P^{1}_{\text{\tiny \smash{$2T\mkern-6mu+\mkern-6mu2$}}}$};
        \node[draw, line width=0.4mm, circle, minimum size=0.7cm, inner sep=0pt,fill=red!40] (Pm2) at (-1.5,-1.5) {$P^{2}_{\text{\tiny \smash{$2T\mkern-6mu+\mkern-6mu2$}}}$};

        \node[draw, line width=0.4mm, circle, minimum size=0.7cm, inner sep=0pt,fill=red!40] (Pmt) at (1.5,-1.5) {$P^{T}_{\text{\tiny \smash{$2T\mkern-6mu+\mkern-6mu2$}}}$};
        \node[draw, line width=0.4mm, circle, minimum size=0.7cm, inner sep=0pt,fill=red!40] (Pmt+1) at (3,-1.5) {$P^{\text{\tiny \smash{$T\mkern-5mu+\mkern-5mu1$}}}_{\text{\tiny \smash{$2T\mkern-6mu+\mkern-6mu2$}}}$};
        \draw[line width=0.7mm, shorten >=-0.0cm, shorten <=-0.0cm] (A) -- (Pm1);
        \draw[line width=0.7mm, shorten >=-0.0cm, shorten <=-0.0cm] (Pm1) -- (Pm2);
        \draw[line width=0.7mm, shorten >=-0.0cm, shorten <=-0.0cm] (Pm2) -- (-0.6,-1.5);
        \fill (-0.25,-1.5) circle (0.05cm);
        \fill (0,-1.5) circle (0.05cm);
        \fill (0.25,-1.5) circle (0.05cm);
        \draw[line width=0.7mm, shorten >=-0.0cm, shorten <=-0.0cm] (0.6,-1.5) -- (Pmt);
        \draw[line width=0.7mm, shorten >=-0.0cm, shorten <=-0.0cm] (Pmt) -- (Pmt+1);
        \draw[line width=0.7mm, shorten >=-0.0cm, shorten <=-0.0cm] (Pmt+1) -- (B);
        \draw[draw=\pathcolor, line width=0.4mm] (-3.7,-2.2) rectangle (3.8,-0.8);

        \node at (0, -2.6) {\textcolor{\pathcolor}{\small$P_{\text{\tiny \smash{$2t\mkern-6mu+\mkern-6mu2$}}}$}};
\end{tikzpicture}
\caption{Step 5: Covering $Y$ and $Z$.}
\label{fig:strategy5}
\end{figure}

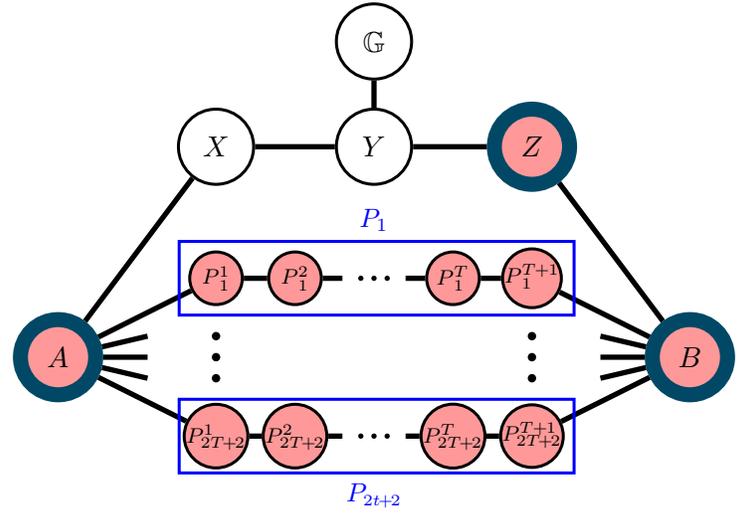
\begin{figure}[H]
\centering
\begin{tikzpicture}[scale=0.7, every node/.style={font=\scriptsize}]

\node[draw=dkteal, line width=2mm, circle, minimum size=1.0cm, inner sep=0pt,fill=red!40] (A) at (-6,0) {\normalsize $A$};
\node[draw=dkteal, line width=2mm, circle, minimum size=1.0cm, inner sep=0pt,fill=red!40] (B) at (6,0) {\normalsize $B$};
    \node[draw, line width=0.4mm, circle, minimum size=1.0cm, inner sep=0pt,fill=white] (X) at (-3,4) {\normalsize$X$};
    \node[draw, line width=0.4mm, circle, minimum size=1.0cm, inner sep=0pt,fill=white] (Y) at (0,4) {\normalsize$Y$};
\node[draw=dkteal, line width=2mm, circle, minimum size=1.0cm, inner sep=0pt,fill=red!40] (Z) at (3,4) {\normalsize$Z$};
        \node[draw, line width=0.4mm, circle, minimum size=1.0cm, inner sep=0pt,fill=white] (G) at (0,6) {\normalsize$\mathbb{G}$};
        \draw[line width=0.7mm, shorten >=-0.0cm, shorten <=-0.0cm] (A) -- (X);
        \draw[line width=0.7mm, shorten >=-0.0cm, shorten <=-0.0cm] (X) -- (Y);
        \draw[line width=0.7mm, shorten >=-0.0cm, shorten <=-0.0cm] (Y) -- (Z);
        \draw[line width=0.7mm, shorten >=-0.0cm, shorten <=-0.0cm] (Z) -- (B);
        \draw[line width=0.7mm, shorten >=-0.0cm, shorten <=-0.0cm] (Y) -- (G);
        \node[draw, line width=0.4mm, circle, minimum size=0.7cm, inner sep=0pt,fill=red!40] (P11) at (-3,1.5) {$P\mathrlap{}^{1}_{1}$};
        \node[draw, line width=0.4mm, circle, minimum size=0.7cm, inner sep=0pt,fill=red!40] (P12) at (-1.5,1.5) {$P\mathrlap{}^{2}_{1}$};
        
        \node[draw, line width=0.4mm, circle, minimum size=0.7cm, inner sep=0pt,fill=red!40] (P1t) at (1.5,1.5) {$P\mathrlap{}^{T}_{1}$};
        \node[draw, line width=0.4mm, circle, minimum size=0.7cm, inner sep=0pt,fill=red!40] (P1t+1) at (3,1.5) {$P\mathrlap{}^{\text{\tiny \smash{$T\mkern-5mu+\mkern-5mu1$}}}_{1}$};
        \draw[line width=0.7mm, shorten >=-0.0cm, shorten <=-0.0cm] (A) -- (P11);
        \draw[line width=0.7mm, shorten >=-0.0cm, shorten <=-0.0cm] (P11) -- (P12);
        \draw[line width=0.7mm, shorten >=-0.0cm, shorten <=-0.0cm] (P12) -- (-0.6,1.5);
        \fill (-0.25,1.5) circle (0.05cm);
        \fill (0,1.5) circle (0.05cm);
        \fill (0.25,1.5) circle (0.05cm);
        \draw[line width=0.7mm, shorten >=-0.0cm, shorten <=-0.0cm] (0.6,1.5) -- (P1t);
        \draw[line width=0.7mm, shorten >=-0.0cm, shorten <=-0.0cm] (P1t) -- (P1t+1);
        \draw[line width=0.7mm, shorten >=-0.0cm, shorten <=-0.0cm] (P1t+1) -- (B);
        \draw[draw=\pathcolor, line width=0.4mm] (-3.7,2.2) rectangle (3.8,0.8);

        \node at (0, 2.6) {\textcolor{\pathcolor}{\small$P_1$}};
        \fill (-3,0.4) circle (0.08cm);
        \fill (-3,0) circle (0.08cm);
        \fill (-3,-0.4) circle (0.08cm);
        \draw[line width=0.7mm, shorten >=-0.0cm, shorten <=-0.0cm] (A) -- (-4.3,0.4);
        \draw[line width=0.7mm, shorten >=-0.0cm, shorten <=-0.0cm] (A) -- (-4.3,0.0);
        \draw[line width=0.7mm, shorten >=-0.0cm, shorten <=-0.0cm] (A) -- (-4.3,-0.4);
        
        \fill (3,0.4) circle (0.08cm);
        \fill (3,0) circle (0.08cm);
        \fill (3,-0.4) circle (0.08cm);

        \draw[line width=0.7mm, shorten >=-0.0cm, shorten <=-0.0cm] (B) -- (4.3,0.4);
        \draw[line width=0.7mm, shorten >=-0.0cm, shorten <=-0.0cm] (B) -- (4.3,0.0);
        \draw[line width=0.7mm, shorten >=-0.0cm, shorten <=-0.0cm] (B) -- (4.3,-0.4);

        \node[draw, line width=0.4mm, circle, minimum size=0.7cm, inner sep=0pt,fill=red!40] (Pm1) at (-3,-1.5) {$P^{1}_{\text{\tiny \smash{$2T\mkern-6mu+\mkern-6mu2$}}}$};
        \node[draw, line width=0.4mm, circle, minimum size=0.7cm, inner sep=0pt,fill=red!40] (Pm2) at (-1.5,-1.5) {$P^{2}_{\text{\tiny \smash{$2T\mkern-6mu+\mkern-6mu2$}}}$};

        \node[draw, line width=0.4mm, circle, minimum size=0.7cm, inner sep=0pt,fill=red!40] (Pmt) at (1.5,-1.5) {$P^{T}_{\text{\tiny \smash{$2T\mkern-6mu+\mkern-6mu2$}}}$};
        \node[draw, line width=0.4mm, circle, minimum size=0.7cm, inner sep=0pt,fill=red!40] (Pmt+1) at (3,-1.5) {$P^{\text{\tiny \smash{$T\mkern-5mu+\mkern-5mu1$}}}_{\text{\tiny \smash{$2T\mkern-6mu+\mkern-6mu2$}}}$};
        \draw[line width=0.7mm, shorten >=-0.0cm, shorten <=-0.0cm] (A) -- (Pm1);
        \draw[line width=0.7mm, shorten >=-0.0cm, shorten <=-0.0cm] (Pm1) -- (Pm2);
        \draw[line width=0.7mm, shorten >=-0.0cm, shorten <=-0.0cm] (Pm2) -- (-0.6,-1.5);
        \fill (-0.25,-1.5) circle (0.05cm);
        \fill (0,-1.5) circle (0.05cm);
        \fill (0.25,-1.5) circle (0.05cm);
        \draw[line width=0.7mm, shorten >=-0.0cm, shorten <=-0.0cm] (0.6,-1.5) -- (Pmt);
        \draw[line width=0.7mm, shorten >=-0.0cm, shorten <=-0.0cm] (Pmt) -- (Pmt+1);
        \draw[line width=0.7mm, shorten >=-0.0cm, shorten <=-0.0cm] (Pmt+1) -- (B);
        \draw[draw=\pathcolor, line width=0.4mm] (-3.7,-2.2) rectangle (3.8,-0.8);

        \node at (0, -2.6) {\textcolor{\pathcolor}{\small$P_{\text{\tiny \smash{$2t\mkern-6mu+\mkern-6mu2$}}}$}};
\end{tikzpicture}
\caption{Step 6: Covering $A,B$ and $Z$ right before the fire can spread again back to $X$.}
\label{fig:strategy6}
\end{figure}

\begin{figure}[H]
\centering
\begin{tikzpicture}[scale=0.7, every node/.style={font=\scriptsize}]

\node[draw=dkteal, line width=2mm, circle, minimum size=1.0cm, inner sep=0pt,fill=white] (A) at (-6,0) {\normalsize $A$};
\node[draw=dkteal, line width=2mm, circle, minimum size=1.0cm, inner sep=0pt,fill=red!40] (B) at (6,0) {\normalsize $B$};
    \node[draw, line width=0.4mm, circle, minimum size=1.0cm, inner sep=0pt,fill=white] (X) at (-3,4) {\normalsize$X$};
    \node[draw, line width=0.4mm, circle, minimum size=1.0cm, inner sep=0pt,fill=white] (Y) at (0,4) {\normalsize$Y$};
    \node[draw, line width=0.4mm, circle, minimum size=1.0cm, inner sep=0pt,fill=white] (Z) at (3,4) {\normalsize$Z$};
        \node[draw, line width=0.4mm, circle, minimum size=1.0cm, inner sep=0pt,fill=white] (G) at (0,6) {\normalsize$\mathbb{G}$};
        \draw[line width=0.7mm, shorten >=-0.0cm, shorten <=-0.0cm] (A) -- (X);
        \draw[line width=0.7mm, shorten >=-0.0cm, shorten <=-0.0cm] (X) -- (Y);
        \draw[line width=0.7mm, shorten >=-0.0cm, shorten <=-0.0cm] (Y) -- (Z);
        \draw[line width=0.7mm, shorten >=-0.0cm, shorten <=-0.0cm] (Z) -- (B);
        \draw[line width=0.7mm, shorten >=-0.0cm, shorten <=-0.0cm] (Y) -- (G);
        \node[draw, line width=0.4mm, circle, minimum size=0.7cm, inner sep=0pt,fill=white] (P11) at (-3,1.5) {$P\mathrlap{}^{1}_{1}$};
        \node[draw, line width=0.4mm, circle, minimum size=0.7cm, inner sep=0pt,fill=white] (P12) at (-1.5,1.5) {$P\mathrlap{}^{2}_{1}$};
        
        \node[draw, line width=0.4mm, circle, minimum size=0.7cm, inner sep=0pt,fill=white] (P1t) at (1.5,1.5) {$P\mathrlap{}^{T}_{1}$};
        \node[draw, line width=0.4mm, circle, minimum size=0.7cm, inner sep=0pt,fill=white] (P1t+1) at (3,1.5) {$P\mathrlap{}^{\text{\tiny \smash{$T\mkern-5mu+\mkern-5mu1$}}}_{1}$};
        \draw[line width=0.7mm, shorten >=-0.0cm, shorten <=-0.0cm] (A) -- (P11);
        \draw[line width=0.7mm, shorten >=-0.0cm, shorten <=-0.0cm] (P11) -- (P12);
        \draw[line width=0.7mm, shorten >=-0.0cm, shorten <=-0.0cm] (P12) -- (-0.6,1.5);
        \fill (-0.25,1.5) circle (0.05cm);
        \fill (0,1.5) circle (0.05cm);
        \fill (0.25,1.5) circle (0.05cm);
        \draw[line width=0.7mm, shorten >=-0.0cm, shorten <=-0.0cm] (0.6,1.5) -- (P1t);
        \draw[line width=0.7mm, shorten >=-0.0cm, shorten <=-0.0cm] (P1t) -- (P1t+1);
        \draw[line width=0.7mm, shorten >=-0.0cm, shorten <=-0.0cm] (P1t+1) -- (B);
        \draw[draw=\pathcolor, line width=0.4mm] (-3.7,2.2) rectangle (3.8,0.8);

        \node at (0, 2.6) {\textcolor{\pathcolor}{\small$P_1$}};
        \fill (-3,0.4) circle (0.08cm);
        \fill (-3,0) circle (0.08cm);
        \fill (-3,-0.4) circle (0.08cm);
        \draw[line width=0.7mm, shorten >=-0.0cm, shorten <=-0.0cm] (A) -- (-4.3,0.4);
        \draw[line width=0.7mm, shorten >=-0.0cm, shorten <=-0.0cm] (A) -- (-4.3,0.0);
        \draw[line width=0.7mm, shorten >=-0.0cm, shorten <=-0.0cm] (A) -- (-4.3,-0.4);
        
        \fill (3,0.4) circle (0.08cm);
        \fill (3,0) circle (0.08cm);
        \fill (3,-0.4) circle (0.08cm);

        \draw[line width=0.7mm, shorten >=-0.0cm, shorten <=-0.0cm] (B) -- (4.3,0.4);
        \draw[line width=0.7mm, shorten >=-0.0cm, shorten <=-0.0cm] (B) -- (4.3,0.0);
        \draw[line width=0.7mm, shorten >=-0.0cm, shorten <=-0.0cm] (B) -- (4.3,-0.4);

        \node[draw, line width=0.4mm, circle, minimum size=0.7cm, inner sep=0pt,fill=white] (Pm1) at (-3,-1.5) {$P^{1}_{\text{\tiny \smash{$2T\mkern-6mu+\mkern-6mu2$}}}$};
        \node[draw, line width=0.4mm, circle, minimum size=0.7cm, inner sep=0pt,fill=white] (Pm2) at (-1.5,-1.5) {$P^{2}_{\text{\tiny \smash{$2T\mkern-6mu+\mkern-6mu2$}}}$};

        \node[draw=dkteal, line width=1.1mm, circle, minimum size=0.9cm, inner sep=0pt,fill=red!40] (Pmt) at (1.5,-1.5) {$P^{T}_{\text{\tiny \smash{$2T\mkern-6mu+\mkern-6mu2$}}}$};
        \node[draw=dkteal, line width=1.1mm, circle, minimum size=0.9cm, inner sep=0pt,fill=red!40] (Pmt+1) at (3,-1.5) {$P^{\text{\tiny \smash{$T\mkern-5mu+\mkern-5mu1$}}}_{\text{\tiny \smash{$2T\mkern-6mu+\mkern-6mu2$}}}$};
        \draw[line width=0.7mm, shorten >=-0.0cm, shorten <=-0.0cm] (A) -- (Pm1);
        \draw[line width=0.7mm, shorten >=-0.0cm, shorten <=-0.0cm] (Pm1) -- (Pm2);
        \draw[line width=0.7mm, shorten >=-0.0cm, shorten <=-0.0cm] (Pm2) -- (-0.6,-1.5);
        \fill (-0.25,-1.5) circle (0.05cm);
        \fill (0,-1.5) circle (0.05cm);
        \fill (0.25,-1.5) circle (0.05cm);
        \draw[line width=0.7mm, shorten >=-0.0cm, shorten <=-0.0cm] (0.6,-1.5) -- (Pmt);
        \draw[line width=0.7mm, shorten >=-0.0cm, shorten <=-0.0cm] (Pmt) -- (Pmt+1);
        \draw[line width=0.7mm, shorten >=-0.0cm, shorten <=-0.0cm] (Pmt+1) -- (B);
        \draw[draw=\pathcolor, line width=0.4mm] (-3.7,-2.2) rectangle (3.8,-0.8);

        \node at (0, -2.6) {\textcolor{\pathcolor}{\small$P_{\text{\tiny \smash{$2t\mkern-6mu+\mkern-6mu2$}}}$}};
\end{tikzpicture}
\caption{Step 7: Finally covering $A$ and $B$ while clearing all paths again. The last time step is visualized, after which all nodes are extinguished.}
\label{fig:strategyEnd}
\end{figure}
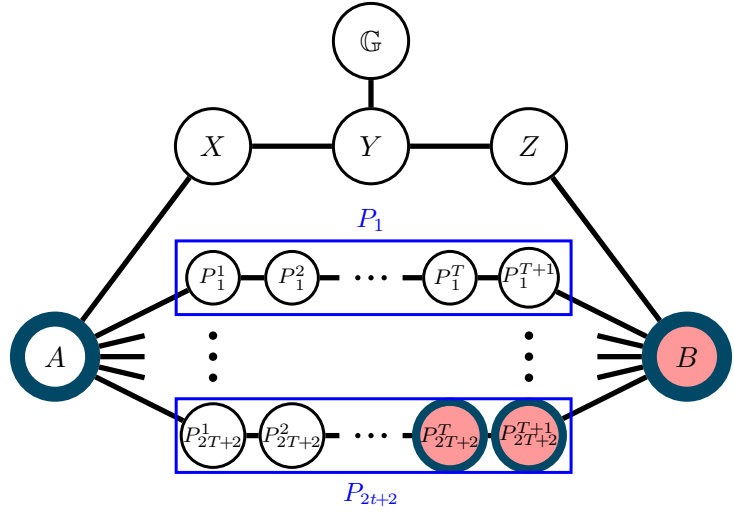

Next, we assume that there is no $m$-strategy for $G$ that wins in time $T$.
    We consider the following node subsets:
    $\Omega_1 = H(G,T) \setminus ( A \cup \mathcal{P} )$, \;
    $\Omega_2 = H(G,T) \setminus (\mathbb{G} \cup \mathcal{P})$, \;
    $\Omega_3 = \mathbb{G} \cup B \cup Y \cup Z \cup \bigcup_{i \in [2T + 2]} P_i^{T+1}$, \;
    $\Omega_4 = B \cup Y \cup Z \cup P_k \cup P_\ell \cup \{v\}$, \;
    $\Omega_5 = A \cup B \cup Y \cup Z \cup P_k \cup P_\ell$, \; $\Omega_6 = A \cup B \cup X \cup Z \cup P_k \cup P_\ell$, and $\Omega_7 = A \cup B \cup Y \cup Z \cup P_k \cup (P_\ell \setminus P_\ell^{1}) \cup \{v\}$ where $v$ is any node from $ \mathbb{G}$ and $k,\ell\in [2T+2]$ with $k\neq \ell$.
    We call a subset of burning nodes $\Omega_n$-blocked, if it contains $\Omega_n$ or one of its symmetric variants, regarding the following symmetries: Switching $A$ and $B$, $X$ and $Z$ as well as $P_i^j$ with $P_i^{T + 2 - j}$ for all $i \in [2T + 2], j \in [T + 1]$ (i.e., mirroring the graph as shown in Figure \ref{fig:TimeGadget} horizontally), switching the complete paths $\{P_1, \ldots, P_{2T + 2}\}$ according to any permutation, or replacing $v$ by any other node in $\mathbb{G}$.

    We prove in the following that for any $4m$-strategy and any subset of burning nodes that is $\Omega_n$-blocked for some $n \in [7]$, after finitely many steps, the subset of burning nodes will be $\Omega_{n'}$-blocked for some $n' \in [7]$ as visualized in Figure \ref{fig:stateGraph}. Since the initial state of a fully burning graph is $\Omega_1$-blocked, this means that there is no winning $4m$-strategy for $H(G,T)$, since the empty set is not $\Omega_{n}$-blocked for any $n \in [7]$.

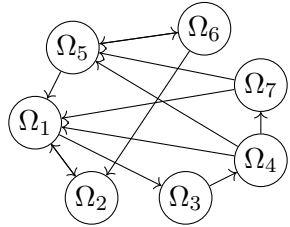
\begin{figure}[htb]
    \centering
	\begin{tikzpicture}[scale=0.5]
            \node[draw, circle, minimum size=0.7cm, inner sep=0pt] (s1) at (1,1) {$\Omega_1$};
            \node[draw, circle, minimum size=0.7cm, inner sep=0pt] (s2) at (2.5,-1) {$\Omega_2$};
            \node[draw, circle, minimum size=0.7cm, inner sep=0pt] (s3) at (5,-1) {$\Omega_3$};
            \node[draw, circle, minimum size=0.7cm, inner sep=0pt] (s4) at (7,0) {$\Omega_4$};
            \node[draw, circle, minimum size=0.7cm, inner sep=0pt] (s5) at (2,3) {$\Omega_5$};
            \node[draw, circle, minimum size=0.7cm, inner sep=0pt] (s6) at (5.5,3.5) {$\Omega_6$};
            \node[draw, circle, minimum size=0.7cm, inner sep=0pt] (s7) at (7,2) {$\Omega_7$};
            \draw[->] (s1) to (s2);
            \draw[->] (s1) to (s3);

            \draw[->] (s2) to (s1);

            \draw[->] (s3) to (s4);

            \draw[->] (s4) to (s1);
            \draw[->] (s4) to (s5);
            \draw[->] (s4) to (s7);

            \draw[->] (s5) to (s1);
            \draw[->] (s5) to (s6);

            \draw[->] (s6) to (s2);
            \draw[->] (s6) to (s5);
            
            \draw[->] (s7) to (s1);
            \draw[->] (s7) to (s5);
            \end{tikzpicture}
    \caption{For any subset of burning nodes that is $\Omega_n$-blocked for some $n \in [7]$, the subset of burning nodes will be $\Omega_{n'}$-blocked after finitely many steps, for some $n' \in [7]$ with an edge $(\Omega_{n},\Omega_{n'})$ or $n' = n$.}
    \label{fig:stateGraph}
\end{figure}

    \noindent\textbf{Case 1:} Let $B_t$ be $\Omega_1$-blocked with $\Omega_1 = H(G,T) \setminus ( A \cup \mathcal{P})$.
    Since $Y$ (resp. $Z$) has at least $4m$ (resp. $3m$) burning neighbours in addition to its own size of $2m$ (resp. $2m$), $E_{t + 1}$ cannot contain a node from $Y$ (resp. $Z$).

    \textit{Case 1.1:} $E_{t + 1}$ contains a node $v$ from $\mathbb{G}$.
    Then $Y \cup \{v\} \subseteq F_{t+1}$. The only helpful thing one can do with the remaining $2m-1$ firefighters is to guard $B$. Thus, $B_{t + 1} \supseteq H(G,T) \setminus \mathcal{P}$ and hence is $\Omega_2$-blocked.

    \textit{Case 1.2:} $E_{t + 1}$ contains a node from $B$.
    Then $B \cup Z \subseteq F_{t+1}$. The only useful thing to do with the remaining $m$ firefighters is to guard $A$. Thus, $B_{t + 1} \supseteq H(G,T) \setminus ( B \cup \mathcal{P})$ and hence is $\Omega_1$-blocked.
    
    \textit{Case 1.3:} $E_{t + 1}$ contains a node from $X$.
    Then $F_{t+1} = X \cup Y$. Thus $B_{t + 1} \supseteq \mathbb{G} \cup B \cup Y \cup Z \cup \bigcup_{i \in [2T + 2]} P_i^{T+1}$ and hence is $\Omega_3$-blocked.

    \medskip
    \noindent\textbf{Case 2:} Let $B_t$ be $\Omega_2$-blocked with $\Omega_2 = H(G,T) \setminus (\mathbb{G} \cup \mathcal{P})$.
    Since $Y$ (resp. $X,Z$) has at least $4m$ (resp. $3m$) burning neighbours in addition to its own size of $2m$ (resp. $2m$), $E_{t + 1}$ cannot contain a node from $Y$ (resp. $X,Z$).

    If $E_{t + 1}$ contains a node from $A$ (resp. $B$), then $A \cup X \subseteq F_{t+1}$ (resp. $B \cup Z \subseteq F_{t+1}$). The only helpful thing to do with the $m$ remaining firefighters is to guard $B$ (resp. $A$). Thus $B_{t + 1} \supseteq H(G,T) \setminus ( B \cup \mathcal{P} )$ (resp. $H(G,T) \setminus ( A \cup \mathcal{P} )$) and hence is $\Omega_1$-blocked.

    \medskip
    \noindent\textbf{Case 3:} Let $B_t$ be $\Omega_3$-blocked with $\Omega_3 = \mathbb{G} \cup B \cup Y \cup Z \cup \bigcup_{i \in [2T + 2]} P_i^{T+1}$.
    If $Y \cap F_{t+1} = \emptyset$, we have $B_{t + 1} \supseteq \mathbb{G} \cup B \cup X \cup Y \cup Z$, which is $\Omega_1$-blocked.
    Otherwise, let $t' \in \{2, \ldots, T\}$ be the first time such that $Y \cap F_{t + t'}$ (if it exists). Then there were at most $2m$ firefighters used in $\mathbb{G}$ in $F_{t+1}, \ldots, F_{t + t' - 1}$, so by Lemma \ref{lem:shadowG}, we have $B_{t + t' - 1} \cap \mathbb{G} \neq \emptyset$. Furthermore, we have $B_{t + t' - 2} \cap \mathbb{G} \neq \emptyset$, which implies $Y \subseteq B_{t + t' - 1}$. Thus, we have $B_{t + t'} \supseteq \mathbb{G} \cup X \cup Y \cup Z$.
    Since there are $2T + 2$ paths connecting $A$ and $B$ and we need to use at least $m$ firefighters to influence the spreading of the fire along one path, there are at least $2$ paths $P_k, P_\ell$ such that $(P_k \cup P_\ell) \cap F_{\tilde{t}+1} = \emptyset$ for all $\tilde{t} \in \{t, \ldots, t + t' - 2\}$. Thus, $B$ has at least $2m$ neighbours in $B_{\tilde{t}}$ for all $\tilde{t} \in \{t, \ldots, t + t' - 1\}$, which implies $B_{t + t'} \supseteq B$. Therefore, $B_{t + t'}$ is $\Omega_1$-blocked.
    Now, assume that $Y \in F_{t + \tilde{t}}$ for all $\tilde{t} \in [T]$. Then there were at most $2m$ firefighters used in $\mathbb{G}$ in $F_{t+1}, \ldots, F_{t + T}$, so by Lemma \ref{lem:shadowG}, we have $B_{t + T} \cap \mathbb{G} \neq \emptyset$. Additionally, this implies $Y \subseteq B_{t + T}$. Since there are $2T + 2$ paths connecting $A$ and $B$ and we need to use at least $m$ firefighters to influence the spreading of the fire along one path, there are at least $2$ paths $P_k, P_\ell$ such that $(P_k \cup P_\ell) \cap F_{t + \tilde{t}+1} = \emptyset$ for all $\tilde{t} \in [T - 1]_0$. Hence, we have $B_{t + T} \supseteq P_k \cup P_\ell$. In particular, this implies $P_k^1 \cup P_\ell^1 \subseteq B_{t + \tilde{t}}$ for all $\tilde{t} \in [T]_0$. Therefore, $B$ has at least $2m$ burning neighbours the whole time, and thus $B_{t + \tilde{t}} \supseteq B$ for all $\tilde{t} \in [T]_0$, which further implies $Z \subseteq B_{t + T}$.
    This finally shows that $B_{t + T}$ is $\Omega_4$-blocked.

    \medskip
    \noindent\textbf{Case 4:} Let $B_t$ be $\Omega_4$-blocked with $\Omega_4 = B \cup Y \cup Z \cup P_k \cup P_\ell \cup \{v\}$ where $v$ is any node from $\mathbb{G}$.
    Since $B$ (resp. $Z$) has at least $4m$ (resp. $3m$) burning neighbours in addition to its own size of $m$ (resp. $2m$), $E_{t + 1}$ cannot contain a node from $B$ (resp. $Z$). In addition, $Y$ has at least $2m+1$ burning neighbours in addition to its own size of $2m$ and hence $E_{t + 1}$ cannot contain a node from $Y$.

    \textit{Case 4.1:} $E_{t + 1}$ contains $v$.
    Then $\{v\} \cup Y \subseteq F_{t+1}$. Since any node in $A, P_k, P_\ell, B$ and $Z$ has at least $2m - 1$ neighbours and the nodes in $Y$ have $2m$ neighbours in $Z$, the remaining $2m - 1$ firefighters cannot stop the fire from spreading to those groups. Thus $B_{t + 1} \supseteq A \cup B \cup Y \cup Z \cup P_k \cup P_\ell$, and hence is $\Omega_5$-blocked.
    
    \textit{Case 4.2:} $E_{t + 1}$ contains a node from $P_k$ and a node from $P_\ell$.
    Then we have $F_{t+1} \subseteq P_k \cup P_\ell$. Thus $B_{t + 1} \supseteq \mathbb{G} \cup B \cup X \cup Y \cup Z$, and hence is $\Omega_1$-blocked.

    \textit{Case 4.3:} $E_{t + 1}$ contains a node from $P_k$ but no node from $P_\ell$ (or, symmetrically, a node from $P_\ell$ but no node from $P_k$).
    Then we have $|F_{t+1} \cap P_k| \geq 2m$. The final $2m$ firefighters can only extinguish more nodes in $P_k$ or stop the fire from spreading to $X$ and $\mathbb{G}$ by guarding $Y$. Thus $B_{t + 1} \supseteq A \cup B \cup Y \cup Z \cup P_\ell$, and hence is $\Omega_7$-blocked.

    \medskip
    \noindent\textbf{Case 5:} Let $B_t$ be $\Omega_5$-blocked with $\Omega_5 = A \cup B \cup Y \cup Z \cup P_k \cup P_\ell$.
    Since $B$ (resp. $Z$) has at least $4m$ (resp. $3m$) burning neighbours in addition to its own size of $m$ (resp. $2m$), $E_{t + 1}$ cannot contain a node from $B$ (resp. $Z$).

    \textit{Case 5.1:} $E_{t + 1}$ contains a node from $A$.
    Then $F_{t+1} = A \cup P_k^{1} \cup P_\ell^{1}$. Thus $B_{t + 1} \supseteq \mathbb{G} \cup B \cup X \cup Y \cup Z \cup P_k \cup P_\ell$ and hence is $\Omega_1$-blocked.

    \textit{Case 5.2:} $E_{t + 1}$ contains a node from $Y$.
    Then $F_{t+1} = Y \cup Z$. Thus $B_{t + 1} \supseteq A \cup B \cup X \cup Z \cup P_k \cup P_\ell$ and hence is $\Omega_6$-blocked.

    \textit{Case 5.3:} $E_{t + 1}$ contains a node from $P_k$ (or, symmetrically, $P_\ell$).
    In order to achieve this, we must have $|F_{t+1} \cap (P_k \cup A \cup B)| \geq 3m$. The final $m$ firefighters can only extinguish further nodes in $P_k$, since all other nodes have at least $m$ neighbours in $B_t \setminus (P_k \cup A \cup B)$. Then, we have $B_{t + 1} \supseteq \mathbb{G} \cup A \cup B \cup X \cup Y \cup Z$, and hence is $\Omega_1$-blocked.

    \medskip
    \noindent\textbf{Case 6:} Let $B_t$ be $\Omega_6$-blocked with $\Omega_6 = A \cup B \cup X \cup Z \cup P_k \cup P_\ell$.
    Since $A$ and $B$ both have at least $4m$ burning neighbours in addition to their own size of $m$, $E_{t + 1}$ cannot contain a node from $A$ or $B$.

    \textit{Case 6.1:} $E_{t + 1}$ contains a node from $X$ (or, symmetrically, $Z$).
    Then $X \cup A \subseteq F_{t+1}$. Since any other node has at least $m$ neighbours in $B_t \setminus (X \cup A)$, the position of the last $m$ firefighters does not matter.
    Then, we have $B_{t + 1} \supseteq A \cup B \cup Y \cup Z \cup P_k \cup P_\ell$ and hence is $\Omega_5$-blocked.
    
    \textit{Case 6.2:} $E_{t + 1}$ contains a node from $P_k$ (or, symmetrically, $P_\ell$).
    In order to achieve this, we must have $|F_{t+1} \cap (P_k \cup A \cup B)| \geq 3m$. The final $m$ firefighters can only extinguish further nodes in $P_k$, since all other nodes have at least $m$ neighbours in $B_t \setminus (P_k \cup A \cup B)$. Then, we have $B_{t + 1} \supseteq A \cup B \cup X \cup Y \cup Z \cup P_\ell$, and hence is $\Omega_2$-blocked.

    \medskip
    \noindent\textbf{Case 7:} Let $B_t$ be $\Omega_7$-blocked with $\Omega_7 = A \cup B \cup Y \cup Z \cup P_k \cup (P_\ell \setminus P_\ell^{1}) \cup \{v\}$ where $v$ is any node from $ \mathbb{G}$.
    Since $B$ (resp. $Y$/$Z$) has at least $4m$ (resp. $2m + 1$ / $3m$) burning neighbours in addition to its own size of $m$ (resp. $2m$ / $2m$), $E_{t + 1}$ cannot contain a node from $B$ (resp. $Y$/$Z$).

    \textit{Case 7.1:} $E_{t + 1}$ contains a node from $A$.
    Then $A \cup P_k^{1} \subseteq F_{t+1}$. With the last $2m$ firefighters there are multiple possibilities.
    Either we have $Y \subseteq F_{t+1}$, then we have $B_{t+1} \supseteq B \cup Y \cup Z \cup P_k \cup P_\ell \cup \{v\}$ and hence is $\Omega_4$-blocked.
    Otherwise, we have $B_{t + 1} \supseteq \mathbb{G} \cup B \cup X \cup Y \cup Z$ which is $\Omega_1$-blocked.

    \textit{Case 7.2:} $E_{t + 1}$ contains $v$.
    Then $Y \cup \{v\} \subseteq F_{t+1}$. Independent of the positions of the remaining firefighters, we have $B_{t + 1} \supseteq A \cup B \cup X \cup Y \cup Z \cup P_k \cup P_\ell$ which is $\Omega_5$-blocked.

    \textit{Case 7.3:} $E_{t + 1}$ contains a node of $P_k$ or $P_\ell$.
    Then we must have $|F_{t+1} \cap (P_k \cup P_\ell)| \geq 2m$.
    If we have $Y \subseteq F_{t+1}$, then we have $B_{t + 1} \supseteq A \cup B \cup X \cup Y \cup Z$ which is $\Omega_2$-blocked.
    Otherwise, $B_{t + 1} \supseteq \mathbb{G} \cup B \cup X \cup Y \cup Z$ which is $\Omega_1$-blocked.
\end{proof}

\noindent To finish this section, we prove a result on the hardness of $\mfft$ on trees.
\thmMffinTIMEHard*
\begin{proof}
    We prove this via a reduction from \textsc{3-Partition}. which is \texttt{strongly NP-hard}, see \cite{10.5555/574848}.
    For some $k \in \N_{>0}$, let $a_1, \dots, a_{3k} \in \N_{>0}$ be the positive integer numbers in a given instance of \textsc{3-Partition}, and set $s = \sum_{i = 1}^{3k} a_i$.
    Without loss of generality, we may assume $\tfrac{s}{k} \in \N_{>0}$. Otherwise, a 3-partition of the numbers trivially cannot exist.

    We now construct a graph $G$ which we claim is $(\tfrac{s}{k} + 3s + 1)$-winning in time $k$ if and only if there exists a 3-partition of $a_1, \dots, a_{3k}$.
    To this end, let $T_i$ be an arbitrary tree with $a_i + s$ nodes for each $i \in [3k]$.
    Then the graph $G$ arises by adding a new node $c$ and, for each $i \in [3k]$, adding an edge between $c$ and an arbitrary node from $T_i$ as visualized in Figure \ref{fig:TreeConstruction}.
    Note that $|G| = s + 3sk + 1$.
    By choosing a star graph (resp. a path graph) for each $T_i$ and attaching $c$ to the internal node (resp. to an end of the path), we get the result for trees with diameter $\leq 4$ (resp. for spiders).
\begin{figure}[H]
    \centering
\begin{tikzpicture}[scale=0.6, every node/.style={font=\scriptsize}]

        \node[draw, line width=0.4mm, circle, minimum size=0.2cm, inner sep=0pt, fill=white] (v1) at (0,-0.5) {};

        \draw[] (v1) -- (-0.25,0.2);
        \draw[] (v1) -- (0,0.2);
        \draw[] (v1) -- (0.25,0.2);

        \node[draw, line width=0.4mm, circle, minimum size=0.7cm, inner sep=0pt] (K1) at (-2,1) {$T_1$};
        \node[draw, line width=0.4mm, circle, minimum size=0.7cm, inner sep=0pt] (K4) at (2,1) {$T_{3k}$};
        \fill (-0.6,1) circle (0.1cm);
        \fill (0.0,1) circle (0.1cm);
        \fill (0.6,1) circle (0.1cm);
        \draw[] (v1) -- (K1);
        \draw[] (v1) -- (K4);

\end{tikzpicture}
    \caption{Construction of $G$. Every $T_i$ is an arbitrary tree with $a_i+s$ nodes.} 
    \label{fig:TreeConstruction}
\end{figure}
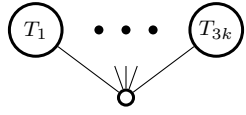

    First, if there exists a 3-partition $(i_1, i_1', i_1''), \dots, (i_k, i_k', i_k'')$ of $a_1, \dots, a_{3k}$ such that $i_j, i_j', i_j'' \in [3k]$, then $S = (T_{i_1} \cup T_{i_1'} \cup T_{i_1''} \cup \{c\}, \dots, T_{i_k} \cup T_{i_k'} \cup T_{i_k''} \cup \{c\})$ is a winning $(\tfrac{s}{k} + 3s + 1)$-strategy of length $k$.

    Next, assume that the graph $G$ is $(\tfrac{s}{k} + 3s + 1)$-winning in time $k$.
    $G$ has $s + 3sk + 1$ nodes, which is greater than $(k - 1) \cdot (\tfrac{s}{k} + 3s + 1) = s + 3sk + k - \tfrac{s}{k} - 3s - 1$, since $s \geq 3k$. Hence, any winning $(\tfrac{s}{k} + 3s + 1)$-strategy has to have a length of at least $k$, since otherwise, at least one node of $G$ would never even appear in a single firefighter set.

    Therefore we have $\tilde{B}_t \neq \emptyset$ for any $t \in [k - 1]$, which implies $|B_t| \geq |\tilde{B}_{t}| + 1$, since $G$ is connected.
    If there exists a $t'$ with $|B_{t'}| \geq |\tilde{B}_{t'}| + 2$, we would have
    \begin{align*}
        |\tilde{B}_k| &\geq |B_{k - 1}| - |F_{k}|
        \geq |B_{k - 1}| - (\tfrac{m}{k} + 3m + 1) \\
        &\geq |\tilde{B}_{k - 1}| + 1 - (\tfrac{s}{k} + 3s + 1) \\
        &\geq \dots \\
        &\geq |B_{t'}| + (k - 1 - t') - (k - t') \cdot (\tfrac{s}{k} + 3s + 1) \\
        &\geq |\tilde{B}_{t'}| + (k + 1 - t') - (k - t') \cdot (\tfrac{s}{k} + 3s + 1) \\
        &\geq \dots \\
        &\geq |\tilde{B}_1| + k - (k - 1) \cdot (\tfrac{s}{k} + 3s+ 1) \\
        &\geq |B_0| + k - k \cdot (\tfrac{s}{k} + 3s + 1) \\
        &= (s + 3sk + 1) + k - s - 3sk - k = 1,
    \end{align*}
    which means that the used strategy is not winning in time $k$.
    We therefore must have
    \begin{equation}
    \label{eq:burnsetdiff}
        |B_t| = |\tilde{B}_t| + 1 \text{ for all } t \in [k - 1].
    \end{equation}
    Note that in this case, the above inequality yields $|\tilde{B}_k| \geq 0$, which implies that all the inequalities must actually be equalities in order for the strategy to be winning in time $k$.
    Hence, we have
    \begin{equation}
    \label{eq:ffsetsize}
        |F_t| = \frac{s}{k} + 3s + 1 \text{ for each } t \in [k].
    \end{equation}
    Therefore, each $F_t$ contains nodes of at least three different $T_i$, and fully contains at most three different $T_i$.

    Now assume that there is a $\tilde{t} \in [k]$ such that $F_{\tilde{t}}$ is not equal to $\{c\} \cup T_i \cup T_{i'} \cup T_{i''}$ for some pairwise different $i, i', i'' \in [3k]$.
    Without loss of generality, we assume $\tilde{t}$ to be as small as possible, which implies $B_{\tilde{t} - 1} = \{c\} \cup \bigcup_{i \in I} T_i$ for some nonempty $I \subseteq [3k]$.
    Note that this implies $\tilde{t} < k$, since we must have $F_k = B_{k - 1}$.
    By our choice of $\tilde{t}$, $F_{\tilde{t}}$ has to fulfill $c \notin F_{\tilde{t}}$ or there is an $i \in [3k]$ such that $\emptyset \neq F_{\tilde{t}} \cap T_i \subsetneq T_i$.

    Let $i, i', i'' \in [3k]$ be such that $F_{\tilde{t}}$ contains nodes of $T_i$, $T_{i'}$ and $T_{i''}$.
    If $c \notin F_{\tilde{t}}$, then $\tilde{E}_{\tilde{t}}$ contains at least one node from $T_i$, $T_{i'}$ and $T_{i''}$ each that is adjacent to a node in $\tilde{B}_{\tilde{t}}$. This implies $|B_{\tilde{t}}| \geq |\tilde{B}_{\tilde{t}}| + 3$, which contradicts (\ref{eq:burnsetdiff}).
    
    Now let $c \in F_{\tilde{t}}$. If at least two subtrees $T_i$ and $T_{i'}$ are only partially contained in $F_{\tilde{t}}$, there are at least two nodes in $\tilde{E}_{\tilde{t}}$ that are adjacent to nodes in $\tilde{B}_{\tilde{t}}$, which again contradicts (\ref{eq:burnsetdiff}). So let us assume now that exactly one subtree $T_i$ is only partially contained in $F_{\tilde{t}}$.
    
    If there exists a subtree $T_j \subseteq B_{\tilde{t} - 1}$ that does not intersect $F_{\tilde{t}}$, it follows that there are at least two nodes in $\tilde{E}_{\tilde{t}}$ that are adjacent to nodes in $\tilde{B}_{\tilde{t}}$, namely $c$ and at least one node in $T_i$.
    This again contradicts (\ref{eq:burnsetdiff}).
    If such a subtree $T_j$ does not exist, it follows that $B_{\tilde{t}} \subseteq \{c \} \cup T_i$.
    Then, we could successfully finish the strategy with $F_{\tilde{t} + 1} = B_{\tilde{t}}$, which contradicts (\ref{eq:ffsetsize}).

    Thus, we have shown that for any $t \in [k]$, we have $F_t = \{c \} \cup T_{i_t} \cup T_{i'_t} \cup T_{i''_t}$ for some $i_t, i'_t, i''_t \in [3k]$.
    We need to visit each subtree at least once, therefore $\bigcup_{t \in k} T_{i_t} \cup T_{i'_t} \cup T_{i''_t}$ is a disjoint union of disjoint groups of three pairwise different subtrees each, as there are $3k$ subtrees and only $k$ steps.
    By (\ref{eq:ffsetsize}), we have $|T_{i_t} \cup T_{i'_t} \cup T_{i''_t}| = \frac{s}{k} + 3s$ for each $t \in [k]$, which implies  $(|T_{i_t}| - s) + (|T_{i'_t}| - s) + (|T_{i''_t}| -s) = \frac{s}{k}$.
    Therefore, $(i_1, i_1', i_1''), \dots, (i_k, i_k', i_k'')$ yields a 3-partition of $a_1, \dots, a_{3k}$.
\end{proof}

\subsection{Omitted Proofs of Section \ref{sec:shortStrat}: Graphs with Long Shortest Strategies}\label{app:shortStrats}
We start this section by showing a useful property for the auxiliary graph $H_m$.
Note that Definitions \ref{def:aux} and \ref{def:GkX} depend on the parameters $\alpha, \beta \in \N_{>0}$, which have fixed values (in particular, they do not depend on $m$ or $X$).
The only necessary property of these parameters is the inequality $2\beta + 2 \geq \alpha \geq \beta + 3$. Actually, we only need $2 \beta + 2 \geq \alpha \geq \beta + 2$, which would allow $\beta = 1, \alpha = 3$.
However, having $\alpha \geq \beta + 3$ makes some proofs slightly less convoluted. The smallest possible choice would therefore be $\beta = 1, \alpha = 4$.
We do not use explicit values for the parameters in the upcoming constructions and proofs, since that would not lead to relevant simplifications.
\begin{lemma}[Properties of $H_m$]
\label{lem:Hkstrategy}
    Let $S = (F_1, \dots, F_T)$ be a $m$-winning strategy for $H_m$ with $B_t \notin \{\emptyset, H_m\}$ for all $t \in [T - 1]$.
    Then we have $T \geq \alpha$, and each firefighter set uses at least $m - 1$ firefighters.
    Furthermore, at least $\alpha$ firefighter sets need to use all $m$ firefighters.
    In particular, we have $|F_1| = m$.
\end{lemma}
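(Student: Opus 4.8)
The plan is to exploit the extreme connectivity of the clique $K_{m-1} = \{v_1, \dots, v_{m-1}\}$ inside $H_m$. The two structural facts driving everything are: (i) each clique node $v_i$ is adjacent to \emph{every} other node of $H_m$, so $N(v_i) = H_m \setminus \{v_i\}$; and (ii) each of the $\alpha$ outer nodes $w_j$ satisfies $N(w_j) = K_{m-1}$. From (i) I would first record two observations: if some clique node lies in $\tilde B_t$, then $B_t = H_m$; and if some clique node lies in $E_t$, then $\tilde B_t = \emptyset$ and hence $B_t = \emptyset$.

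First I would show that the clique stays fully ablaze at every intermediate step: for $t \in [T-1]$, the hypothesis $B_t \neq \emptyset$ together with the second observation forces $K_{m-1} \subseteq B_t$ (otherwise some $v_i \in E_t$ collapses $\tilde B_t$ to $\emptyset$, giving the forbidden $B_t = \emptyset$), while $B_0 = H_m$ covers $t = 0$. Next, using $B_t \neq H_m$ for $t \in [T-1]$ and the first observation contrapositively, no clique node is provisionally burning, i.e. $K_{m-1} \subseteq \tilde E_t$; since $K_{m-1} \subseteq B_{t-1}$ means no clique node is already extinguished at time $t-1$, every clique node must be freshly covered, i.e. $K_{m-1} \subseteq F_t$. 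The terminal step $t = T$ is handled separately: $B_T = \emptyset$ forces $B_{T-1} \subseteq F_T$, and since $K_{m-1} \subseteq B_{T-1}$ we again obtain $K_{m-1} \subseteq F_T$. This establishes $K_{m-1} \subseteq F_t$ for all $t \in [T]$, which immediately yields the second claim $|F_t| \geq m-1$.

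With $K_{m-1} \subseteq F_t$ in hand, each $F_t$ can contain at most one further node, necessarily one of the $w_j$. Moreover, since $K_{m-1} \cap \tilde B_t = \emptyset$ for every $t$, the outer nodes can never reignite (their only neighbours are clique nodes), so by induction from $B_0 = H_m$ one gets $w_j \in B_t$ iff $w_j \notin F_s$ for all $s \le t$; thus each $w_j$ must appear in some $F_t$ in order to be extinguished by time $T$. As at most one $w_j$ is covered per step, covering all $\alpha$ distinct outer nodes requires at least $\alpha$ steps, each of which uses the full clique plus one $w_j$ and hence all $m$ firefighters. This gives $T \geq \alpha$ and the statement that at least $\alpha$ firefighter sets use all $m$ firefighters. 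Finally, since $\alpha \geq \beta + 3 \geq 4$ we have $T \geq 2$, so $B_1 \neq H_m$ is available; were $F_1 = K_{m-1}$, then $\tilde B_1 = \{w_1, \dots, w_\alpha\}$ and hence $B_1 = H_m$, a contradiction, so $F_1$ also covers a $w$-node and $|F_1| = m$.

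The main obstacle is not any single computation but getting the boundary steps right: the deduction $K_{m-1} \subseteq F_t$ must be run slightly differently at the terminal step $t = T$ (where one uses $B_T = \emptyset$ rather than $B_t \notin \{\emptyset, H_m\}$), and the conclusion $|F_1| = m$ must wait until $T \geq \alpha \geq 2$ has been secured so that the hypothesis $B_1 \neq H_m$ genuinely applies. Care is also needed to confirm that the non-reignition of the $w_j$ relies precisely on $K_{m-1} \subseteq F_t$ holding at \emph{every} step, including those where no outer node is covered.
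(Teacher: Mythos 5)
Your proposal is correct and follows essentially the same route as the paper's proof: establish $K_{m-1}\subseteq B_{t-1}$ from $B_t\neq\emptyset$, deduce $K_{m-1}\subseteq F_t$ for every $t$ from $B_t\neq H_m$, and then count the $\alpha$ outer nodes that must each be covered by the single remaining firefighter. Your treatment is somewhat more careful at the boundary steps ($t=T$ and the justification that the $w_j$ cannot reignite), but the core argument is identical.
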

\begin{proof}
    Denote by $K_{m - 1}$ the $(m - 1)$-clique contained in $H_m$.
    From the assumption that $B_t \neq \emptyset$ for any $t \in [T - 1]$ and the fact that any node in $K_{m - 1}$ is adjacent to any other node of $H_m$, it follows that $K_{m - 1} \subseteq B_{t - 1}$ for any $t \in [T]$.
    Hence, if $K_{m - 1} \not\subseteq F_t$ for some $t \in [T]$, we have $B_t = H_m$, a contradiction.
    Therefore, we have $K_{m - 1} \subseteq F_t$ for any $t \in [T]$, which leaves at most one firefighter in $H_m \setminus K_{m - 1}$ in each turn.
    Since any node of a graph must be part of at least one firefighter set in a winning strategy and $|H_m \setminus K_{m - 1}| = \alpha$, it follows that $T \geq \alpha$ and $|F_t| = m$ for at least $\alpha$ different $t \in [T]$.
    Finally, we must have $|F_1| = m$, since otherwise $F_1 = K_{m - 1}$ which would imply $B_1 = H_m$.
\end{proof}

\noindent In order to calculate $\ffn(G(m, X))$, we use the following two lemmata describing parts of a winning $m$-strategy.

\begin{lemma}[Extinguishing of $H_m$]
\label{lem:Hkextinguishable}
    For any $i \in [m]$, one can extinguish $H^i_m \setminus \{u_i\}$ in $\alpha$ steps, independent of the state of the rest of the graph $G(m, X)$.
\end{lemma}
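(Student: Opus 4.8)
The plan is to exhibit an explicit $\alpha$-step $m$-strategy confined to $H_m^i$ and to verify that it extinguishes $H_m^i \setminus \{u_i\}$ regardless of how the fire behaves elsewhere in $G(m, X)$. Recall that $H_m^i$ consists of an $(m-1)$-clique $K \coloneqq \{v_1, \dots, v_{m-1}\}$ (with $u_i$ one of these clique nodes, say $u_i = v_1$) together with an independent set $W \coloneqq \{w_1, \dots, w_\alpha\}$ in which every $w_j$ is adjacent to all of $K$ and to nothing else. The decisive structural point is that the only edge leaving $H_m^i$ is $\{u_i, v^i_\beta\}$, so fire from the rest of $G(m, X)$ can enter $H_m^i$ only through $u_i$.

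I would define the firefighter sets $F_t \coloneqq K \cup \{w_t\}$ for $t \in [\alpha]$, each of size $(m-1) + 1 = m$, so this is a valid $m$-strategy. The single fact driving the argument is that $W$ is independent: once a node $w_j$ is extinguished, its only neighbours are the clique nodes, and since every $F_t$ covers all of $K$, no already-extinguished $w_j$ can reignite. I would make this precise by induction on $t$, proving the invariant that after step $t$ the nodes $w_1, \dots, w_t$ lie in $E_t$ and stay there for the remainder of the strategy, while the clique is fully reignited by the still-burning $w_{t+1}, \dots, w_\alpha$ as long as $t < \alpha$. The verification of each inductive step is a direct application of the rule $B_t = (B_{t-1} \setminus F_t) \cup N(B_{t-1} \setminus F_t)$: the $w_j$ with $j \leq t$ have no provisionally burning neighbour since all of $K \subseteq F_t$, and the reignition of the clique from the later $w$-nodes does not reach them.

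For the final step $t = \alpha$ I would observe that $F_\alpha$ covers all of $K$ and $w_\alpha$, while $w_1, \dots, w_{\alpha-1}$ are already extinguished, so no node of $H_m^i$ other than possibly $u_i$ is provisionally burning. The clique nodes $v_2, \dots, v_{m-1}$ remain extinguished because all of their neighbours (the other clique nodes and all of $W$) are covered or extinguished in this step; the node $u_i$ may reignite through the external edge to $v^i_\beta$, but it is excluded from the target set $H_m^i \setminus \{u_i\}$. This yields $H_m^i \setminus \{u_i\} \subseteq E_\alpha$. Independence from the state of the rest of the graph is then automatic, since the only possible external influence is the reignition of $u_i$, which never propagates because $u_i \in K \subseteq F_t$ is removed before propagation in every step.

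I do not expect a serious obstacle: the whole argument is bookkeeping with the propagation rule. The one point deserving care is the necessity of spending the full $m-1$ firefighters on all of $K$ (including $u_i$) in every step; if $u_i$ were ever left uncovered while burning, it would re-spread to the entire independent set $W$ and undo all prior progress. Confirming that covering $K$ nonetheless leaves exactly one firefighter free for a fresh $w$-node in each step is precisely what makes the count work out to exactly $\alpha$ steps.
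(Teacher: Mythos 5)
Your proposal is correct and follows essentially the same route as the paper: the identical strategy $F_t = K_{m-1} \cup \{w_t\}$ for $t \in [\alpha]$, with the same key observation that $u_i$ lies in the clique and is therefore covered in every step, so the external fire cannot enter $H_m^i$. Your inductive verification of the invariant is simply a more detailed write-up of what the paper states in one sentence.
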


\begin{proof}
    Let $K_{m - 1}$ denote the $(m - 1)$-clique contained in $H^i_m$ and $\{w_1, \dots, w_\alpha\} = H^i_m \setminus K_{m - 1}$.
    Then the strategy $(F_1, \dots, F_\alpha)$ with $F_j = K_{m - 1} \cup \{w_j\}$ for $j \in [\alpha]$ achieves what was claimed to be possible in the statement of the lemma, since $u_i$ is the only node connected to the rest of the graph $G(m, X)$ and $u_i$ is contained in every $F_j$.
\end{proof}

\begin{lemma}[Property of $H_m$]
\label{lem:spiderextinguishable}
     If at some time $t$, we have some $I \subseteq [m]$ with $1 \leq |I| < m$ such that $(H_m^i \setminus \{u_i\}) \subseteq E_t$ for all $i \in I$, then there is a strategy that achieves $B_{t'} \subseteq \bigcup_{i \in [m] \setminus I} H_m^i$ for some $t' \geq t$.
\end{lemma}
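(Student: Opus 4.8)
The plan is to exploit that $c$ is a cut vertex: if a firefighter sits on $c$ in every round, then $G(m,X)\setminus\{c\}$ splits into the component $X$ and the $m$ pairwise non-adjacent \emph{arms} $A_i := \{v^i_1,\dots,v^i_\beta\}\cup H_m^i$, so fire can no longer travel between $X$ and the arms or between two different arms. I would therefore guard $c$ in every step from time $t$ onwards and treat $X$ and each arm as an independent subproblem, scheduling the work so that an arm I have already finished stays extinguished for free: every node of a clean arm $A_i$ has all of its neighbours either inside the clean arm or equal to the guarded $c$, so with $c$ guarded it cannot reignite.

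The central point is how to clean an arm $A_i$ with $i\in I$ without ever reigniting $H_m^i$. Since $H_m^i\setminus\{u_i\}$ is already extinguished and $u_i$ lies in the clique $K_{m-1}^i$, the graph $H_m^i$ can only catch fire again through $u_i$: if $u_i$ were provisionally burning while its clique- and $w$-neighbours are extinguished, all of $H_m^i\setminus\{u_i\}$ would reignite. I would avoid this by sweeping the short path $u_i,v^i_\beta,v^i_{\beta-1},\dots,v^i_1$ from the $u_i$-end towards $c$, covering two consecutive nodes each round. Because $K_{m-1}^i$ is clean and $c$ is guarded, the burning part always stays on the $c$-side of the sweep, so $u_i$ is extinguished first and never spreads into $H_m^i$; after at most $\beta+1$ rounds the whole arm, together with $H_m^i$, is extinguished and (with $c$ guarded) stays so. As $\beta$ is a fixed constant independent of $m$, each such sweep uses only a bounded number of firefighters and finitely many steps.

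Once all arms $A_i$ with $i\in I$ are secured, I would run a winning $(m-1)$-strategy for $X$ (which exists since $\ffn(X)=m-1$) with the remaining $m-1$ firefighters; with $c$ guarded, $X$ evolves exactly as the isolated graph $X$, so it is fully extinguished while the locked-in $I$-arms are untouched. It then remains to deal with the sink arms $j\in[m]\setminus I$, whose paths are fed from $H_m^j$ through $u_j$. Since $\beta$ is a small fixed constant and at least one sink exists, I would, in the final round, simultaneously extinguish all the sink-path nodes $\bigcup_{j\in[m]\setminus I}\{v^j_1,\dots,v^j_\beta\}$ (leaving $c$ clean, as all of its neighbours are then clean or covered), pushing their fire into the $H_m^j$. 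Taking the snapshot at that time $t'$ gives $B_{t'}\subseteq\bigcup_{i\in[m]\setminus I}H_m^i$, as required.

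The main obstacle is the firefighter budget while the $I$-arms are still being secured: cleaning $X$ alone can demand all $m-1$ non-$c$ firefighters, so I cannot simultaneously afford to guard the $u_i$ of every not-yet-cleaned $I$-arm to stop its $H_m^i$ from reigniting. The key to fitting the schedule into $m$ firefighters is that finishing one arm releases the firefighter guarding its $u_i$, so I would secure the $I$-arms one at a time in a cascade — each completed arm frees exactly the firefighter needed to begin the next — and only commit to $X$ after all of $I$ is locked in. Here I use $1\le|I|<m$ (so that a sink arm always exists to absorb displaced fire) together with the spacing condition $2\beta+2\ge\alpha\ge\beta+3$ on the constants, which controls how far the creeping fire has advanced along the arms when this lemma is invoked and thereby guarantees the cascade a place to start; making this counting argument fully rigorous is the part that requires the most care.
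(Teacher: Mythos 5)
Your overall plan coincides with the paper's proof in its first two phases: guard the $u_i$'s, clean the $I$-arms one at a time in a cascade so that each finished arm releases a guard, use $c$ as the cut vertex that keeps cleaned arms safe, and only then extinguish $X$ with the remaining $m-1$ firefighters. The genuine gap is in your final round. If $F_{t'}=\bigcup_{j\in[m]\setminus I}\{v^j_1,\dots,v^j_\beta\}$, then every $u_j$ with $j\in[m]\setminus I$ is uncovered and burning, hence lies in $\tilde{B}_{t'}$; since $u_j$ is adjacent to $v^j_\beta$, the propagation step puts $v^j_\beta$ right back into $B_{t'}$, so $B_{t'}\not\subseteq\bigcup_{i\in[m]\setminus I}H_m^i$. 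In addition, this single round uses $(m-|I|)\cdot\beta$ firefighters, which exceeds $m$ for any admissible $\beta\geq 2$ when $|I|$ is small (the lemma must hold for every $\alpha,\beta$ with $2\beta+2\geq\alpha\geq\beta+3$, not only $\beta=1$). The paper instead clears the sink paths one at a time, sweeping from $c$ towards $u_j$ and leaving a permanent guard \emph{on $u_j$ itself}; at the terminal step it is precisely these guards on the $u_j$'s that prevent the fire inside $H_m^j$ from re-entering the paths, and the count of $(m-|I|-1)$ guards plus two sweepers (plus one at $c$, which can be dropped for the last path) stays within $m$.

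A second, smaller issue: guarding $c$ in \emph{every} step from time $t$ onwards overflows the budget during the very first arm's sweep when $|I|=m-1$ — you then need one firefighter on $c$, $|I|-1$ on the untouched $u_i$'s, and two for the sweep, i.e.\ $m+1$ in total. The paper avoids this by first guarding all $|I|\leq m-1$ of the $u_i$'s with one firefighter to spare, sweeping the first path down to $c$ with that spare plus the guard released from $u_1$, and only stationing a firefighter permanently at $c$ once that sweep has reached $c$. Finally, the spacing condition on $\alpha$ and $\beta$ plays no role in this lemma (it is needed to show $\ffn(G(m,X))\leq m$, where timing between consecutive arms matters); the present statement is a pure reachability claim with no time bound, so invoking it here signals a misplaced dependency rather than an actual ingredient of the argument.
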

\begin{proof}
    For better clarity, we will describe such a strategy in words instead of explicitly stating the firefighter sets.

    Without loss of generality, we assume $I = [j]$ for some $j < m$.
    Since $|I| < m$, we can position one firefighter at each $u_i$ for all $i \in I$ and still have at least one additional firefighter left.
    Using this additional firefighter as well as the one positioned at $u_1$, we can extinguish the path from $u_1$ to $c$ without letting the fire spread to $H_m^1$.
    For the rest of the strategy, we permanently station one of those two firefighters at $c$, which ensures that the fire cannot spread to $H_m^1$.
    This again leaves at least one firefighter free, allowing us to extinguish the path from $u_2$ to $c$, and afterwards removing the need to keep a firefighter stationed in $u_2$.
    We repeat this until all paths connecting $c$ to some $H^i_m$ for $i \in I$ are extinguished.

    Next, we keep one firefighter positioned in $c$ and use the remaining $m - 1$ firefighters to extinguish $X$, which is possible since $\ffn(X) = m - 1$ and the only node of $G(m, X) \setminus X$ connected to $X$ is $c$.

    After that, we still keep one firefighter in $c$ and start extinguishing the paths connecting $c$ to $u_i$ for $i \in [m] \setminus I$ one by one.
    Whenever we finished extinguishing one of these paths, we keep one firefighter in the corresponding $u_i$ to prevent the path from being reignited.
    Since $|[m] \setminus I| \leq m - 1$, we still have at least $2$ firefighters available to extinguish each path, until we reach the final path.
    When we start extinguishing the final path, we don't need to keep a firefighter in $c$ anymore, which again guarantees that we have at least $2$ firefighters available, and hence that this task is possible.
    After extinguishing the final path, we have restricted the burning set as desired in the lemma.
\end{proof}

\noindent We are now ready to determine the firefighter number of $G(m, X)$.
The general idea for a winning $m$-strategy is to alternate between extinguishing a new $H^i_m$ and extinguishing $G(m, X) \setminus \bigcup_{i \in [m]} H^i_m$ without letting the previously extinguished $H^i_m$ reignite.

\lemFfnofGkX*
\begin{proof}
    By construction, the graph $G(m, X)$ contains the subgraph $H_m$, which in turn contains a $m$-clique. This shows $\ffn(G(m, X)) \geq m$ due to Lemma \ref{lem:lowerbounds}.3.

    To show $\ffn(G(m, X)) \leq m$, let us give a winning $m$-strategy.
    By Lemma \ref{lem:Hkextinguishable}, we can start by extinguishing $H_m^1 \setminus \{u_1\}$.
    From Lemma \ref{lem:spiderextinguishable}, it follows that we can then reach the state where any burning nodes are contained in $\bigcup_{i \in [m] \setminus [1]} H_m^i$.
    For any $i \neq j$, $i, j \in [m]$, it takes $2 \beta + 2$ steps for the fire to reach $u_i$ if it starts at $u_j$ and no firefighters are used to stop or slow the spread.
    Since $\alpha \leq 2 \beta + 2$, this means that we can now, by Lemma \ref{lem:Hkextinguishable}, extinguish $H^2_m \setminus \{u_2\}$ in $\alpha$ steps so that afterwards, both $H^1_m \setminus \{u_1\}$ and $H^2_m \setminus \{u_2\}$ are extinguished.
    Again using Lemma \ref{lem:spiderextinguishable}, we can reach the state where any burning nodes are contained in $\bigcup_{i \in [m] \setminus [2]} H_m^i$.
    
    By repeating this argument $m - 1$ times, we can achieve $B_t = \bigcup_{i \in [m] \setminus [m - 1]} H_m^i$ $= H_m^m$ and then extinguish $H_m^m$ using the strategy given in the proof of Lemma \ref{lem:Hkextinguishable}, which does not let the fire spread back to $G(m, X) \setminus H_m^m$ and thus completely extinguishes the graph.
\end{proof}

\noindent After determining the firefighter number of $G(m, X)$, we shall now find a lower bound to the number of steps needed to extinguish $G(m, X)$ with that number of firefighters.
Our strategy for this is the following:
In Lemma \ref{lem:Hkordering}, we prove that the different $H_m$ contained in $G(m, X)$ need to be extinguished one after the other, roughly speaking.
The Lemmata \ref{lem:Xkwillburn} and \ref{lem:Xwasextinguished} then show that between extinguishing two different $H_m$, we always need to extinguish $X$, which needs to be done with just $m - 1$ firefighters due to Lemma \ref{lem:alwaysaffinc}.

\begin{lemma}[Properties of Shortest Strategies of $G(m, X)$]
\label{lem:Hkordering}
    Let $S = (F_1, \dots, F_T)$ be a shortest winning $m$-strategy for $G(m, X)$.
    For each $i \in [m]$, set $t_i \coloneqq \max \{t \in [T]: H_m^i \subseteq B_t\}$ and $t_i' \coloneqq \min \{t \in [T]: t > t_i, H_m^i \subseteq E_t\}$.
    Then, we have
    \begin{enumerate}[a)]
        \item $H_m^i \subseteq B_{t_i}$,
        \item $H_m^i \subseteq E_{t'_i}$,
        \item $t_i' \geq t_i + \alpha$ and
        \item $|F_t \cap H_m^i| \geq m - 1$ for all $t_i < t \leq t_i'$
    \end{enumerate}
    for all $i \in [m]$.
    Furthermore, with an appropriate reordering of the $H_m^i$, we have $t_{i + 1} \geq t_i'$ for all $i \in [m - 1]$.
\end{lemma}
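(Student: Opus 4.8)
The plan is to run the whole argument off a single \emph{reignition invariant} for the $(m-1)$-clique inside each copy. Write $K_{m-1}^i$ for the clique of $H_m^i$, and recall that every node of $K_{m-1}^i$ is adjacent to all of $H_m^i$, while $H_m^i$ is attached to the rest of $G(m,X)$ only through the single node $u_i\in K_{m-1}^i$ (via $v_\beta^i$). The fact I would establish first is: \emph{if $K_{m-1}^i\subseteq B_{t-1}$ and $H_m^i\not\subseteq B_t$, then $K_{m-1}^i\subseteq F_t$}. Indeed, an uncovered clique node $v\in K_{m-1}^i\cap\tilde B_t$ has $N(v)\cup\{v\}\supseteq H_m^i$, forcing $H_m^i\subseteq B_t$, a contradiction. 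This single fact already yields part (d): once the clique is known to burn throughout the window, covering it costs $m-1$ firefighters, so $|F_t\cap H_m^i|\geq m-1$.

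For (a), (b) and $t_i<T$ I would argue as follows. By Lemma \ref{lem:ffnofGkX} we have $\ffn(G(m,X))=m$, so every $F_t$ has at most $m$ nodes. Since $B_T=\emptyset$, the copy is fully extinguished, hence $H_m^i\subseteq E_T$ and $t_i\neq T$; thus if $t_i$ is well defined it is $<T$, and then $T$ witnesses that the set defining $t_i'$ is nonempty, giving (b). For well-definedness of $t_i$ (part (a)) I would show $H_m^i$ is fully burning at some $t\geq 1$: starting from $H_m^i\subseteq B_0=V$, a direct count shows that making $H_m^i$ not fully burning in one step requires a node $z\in H_m^i$ with $z$ and all its neighbours in $F_t$, which (the clique case needing more than $m$ nodes) is only possible with $F_t=K_{m-1}^i\cup\{w_j\}$, i.e. \emph{all} $m$ firefighters inside that one copy. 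So at most one copy can be disturbed per step, and a copy abandoned while its $u_i$-side burns reignites completely; chasing this gives a time with $H_m^i\subseteq B_t$, whence $t_i=\max$ of a nonempty set satisfies $H_m^i\subseteq B_{t_i}$ by definition.

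The heart is (c)--(d). On the window I would induct to prove $K_{m-1}^i\subseteq B_t$ for $t_i\leq t<t_i'$: the base case is $H_m^i\subseteq B_{t_i}$, and in the step $t+1>t_i$ forces the invariant (so $K_{m-1}^i\subseteq F_{t+1}$), while $t+1<t_i'$ gives $B_{t+1}\cap H_m^i\neq\emptyset$; with the clique covered, such a burning node is either a still-burning pendant $w_j$ (which re-ignites the entire clique) or, in the delicate case treated below, a re-entry through $u_i$. This keeps the clique burning and gives (d) at every step $t_i<t\leq t_i'$. For (c) I would track, exactly as in Lemma \ref{lem:Hkstrategy}, that with $K_{m-1}^i$ covered each round at most one of the $\alpha$ pendants can be permanently removed per step (every remaining burning $w_j$ resurrects the clique), so the clique cannot vanish before all $\alpha$ pendants are cleared, giving $t_i'-t_i\geq\alpha$. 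The reordering then follows from a firefighter count: if two windows overlapped at a common time, part (d) would demand $\geq m-1$ firefighters in each of the disjoint subgraphs $H_m^i,H_m^j$, i.e. $\geq 2(m-1)>m$ for $m\geq 3$ (the case $m=2$ checked by hand); hence the windows are pairwise disjoint intervals, and ordering the copies by them yields $t_{i+1}\geq t_i'$.

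I expect the main obstacle to be precisely the one place where the clique-burning induction can stall: fire re-entering $H_m^i$ through $u_i$ so that $B_t\cap H_m^i=\{u_i\}$ with all pendants already gone. In that configuration a single firefighter on $u_i$ prevents reignition, which would let a strategy ``babysit'' the copy below the $m-1$ threshold and threaten (d) as literally stated. The delicate step is therefore to invoke the \emph{shortest}-strategy hypothesis to rule out a prolonged partially-extinguished phase --- via an exchange argument showing that postponing the clearing of $H_m^i\setminus\{u_i\}$ to the final $\alpha$ rounds before $t_i'$ cannot lengthen the strategy --- so that each window is a single clean extinguishing burst to which the invariant, and hence (c) and (d), apply verbatim.
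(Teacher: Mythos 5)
Your overall route is the same as the paper's: the ``reignition invariant'' you state for $K_{m-1}^i$ is exactly the engine behind Lemma~\ref{lem:Hkstrategy}, and the paper likewise obtains c) and d) by transporting that lemma into the window $(t_i,t_i']$ and then gets the ordering from d) together with the budget $|F_t|\leq m<2(m-1)$ (you are right that $m=2$ needs a separate word there; so does the paper). Parts a) and b) the paper dispatches by definition, where your counting argument for well-definedness is more explicit but also more hand-wavy at the step ``chasing this gives a time with $H_m^i\subseteq B_t$''; note also that $t_i$ ranges over $[T]$, not $[T]_0$, so nonemptiness of $\{t\in[T]:H_m^i\subseteq B_t\}$ genuinely requires an argument and is not settled by $B_0=V$.

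The genuine gap is precisely the case you flag at the end and then do not close: after the $\alpha$-step burst clears $H_m^i\setminus\{u_i\}$, the node $u_i$ is reignited through its external neighbour $v_\beta^i$, and from then on a single firefighter on $u_i$ keeps $B_t\cap H_m^i=\{u_i\}$ without ever letting the copy relight. In that configuration your inductive claim ``$K_{m-1}^i\subseteq B_t$ throughout $[t_i,t_i')$'' fails (the only burning witness for $u_i\in B_t$ is the \emph{external} node $v_\beta^i\in\tilde{B}_t$, so the reignition invariant gives you nothing), and d) is violated since $|F_t\cap H_m^i|=1<m-1$ for $m\geq 3$ while $H_m^i\not\subseteq E_t$, so $t<t_i'$. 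This is not a far-fetched pathology: it is exactly what the constructive strategy of Lemmata~\ref{lem:Hkextinguishable} and~\ref{lem:spiderextinguishable} does (one firefighter ``positioned at each $u_i$'' while the path and $X$ are cleared), so any proof must either rule this phase out of the window $(t_i,t_i']$ for \emph{shortest} strategies or reformulate d). Your proposed fix --- an exchange argument postponing the clearing of $H_m^i\setminus\{u_i\}$ to the last $\alpha$ rounds before $t_i'$ --- is only named, not executed, and it is not clear it terminates in the claimed conclusion (the babysitting phase frees $m-1$ firefighters, which is exactly what is needed to extinguish $X$ minus the guard on $c$, so the two orderings are not obviously interchangeable without lengthening the strategy). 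Since this is the decisive step for d), and hence for the disjointness of the windows and the final ordering, the proposal as written does not constitute a proof. (For what it is worth, the paper's own one-line reduction to Lemma~\ref{lem:Hkstrategy} silently assumes the standalone dynamics of $H_m$, i.e.\ that every burning node of $H_m^i$ has its provisionally burning witness \emph{inside} $H_m^i$; your instinct that this is where the difficulty lives is correct, but identifying a hole is not the same as filling it.)
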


\begin{proof}
    Since the graph is fully burning at time $0$ and fully extinguished at time $T$, the numbers $t_i$ and $t_i'$ are well-defined.

    Statements a) and b) hold by the definitions of $t_i$ and $t_i'$.
    Furthermore, for every $t \in \{t_i + 1, \dots, t_i' - 1\}$, we have $B_t \cap H_m^i \neq \emptyset$ and $E_t \cap H_m^i \neq \emptyset$, since otherwise, a contradiction to the maximality and minimality of $t_i$ and $t_i'$ would arise.
    Together with Lemma \ref{lem:Hkstrategy}, this implies statements c) and d).

    Since the $H_m^i$ are pairwise disjoint and $|F_t| = m < 2m - 2$ for all $t \in [T]$, d) implies that the sets $\{t_i + 1, \dots, t_i'\}$ are pairwise disjoint for $i \in [m]$.
    Therefore we may, without loss of generality, assume $t_1 < t_1' \leq \dots \leq t_m < t_m'$ by appropriately reordering the indices of $H_m^1, \dots, H^m_m$.
\end{proof}

\begin{lemma}[$X$ is Burning]
\label{lem:Xkwillburn}
    $X \subseteq B_{t_i'}$ for all $i \in [m - 1]$.
\end{lemma}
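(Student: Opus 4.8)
The plan is to exploit the fact that extinguishing $H^i_m$ consumes essentially all firefighters, so that the rest of the graph is left defenceless long enough for fire from a not-yet-handled clique to flood back into $X$. First I would record two consequences of the earlier lemmata on the interval $(t_i,t_i']$. By Lemma \ref{lem:Hkordering}(d), at least $m-1$ firefighters lie in $H^i_m$ at every such step, so at most one firefighter is free. Moreover, applying Lemma \ref{lem:Hkstrategy} to the sub-strategy induced on $H^i_m$ (its intermediate burning sets are proper and non-empty, and whenever the $(m-1)$-clique is fully covered the sole entry point $u_i$ is guarded, so the clique evolves exactly as in isolation), at least $\alpha \ge \beta+3$ of these steps must place all $m$ firefighters inside $H^i_m$, i.e.\ leave \emph{no} free firefighter. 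These $\ge \beta+3$ \emph{full} steps drive the whole argument; note it is precisely here that the all-$m$ requirement, rather than the weaker $m-1$, is indispensable.

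Next I would produce a persistent fire source outside $H^i_m$. Since $i \le m-1$, a later clique exists; fix $j=m$. The ordering from Lemma \ref{lem:Hkordering} gives $t_j \ge t_i'$, so the reduction interval of $H^j_m$ lies entirely after $t_i'$; furthermore a single firefighter cannot extinguish $H^j_m$ (it contains an $m$-clique), and on every full step $H^j_m$ receives no firefighter at all, so it stays fully burning throughout $(t_i,t_i']$. In particular $u_j$ burns at every full step, feeding the path $(u_j, v^j_\beta, \dots, v^j_1, c)$ of length $\beta+1$, and through the universal node $c$ the fire can reach every node of $X$, each of which sits at distance $\beta+2$ from $u_j$.

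Then I would run the fire along this path. With at most one free firefighter the burning front can never recede: protecting the front node leaves it with a burning, unprotected neighbour one step behind (toward the source), so it immediately reignites; hence the front is monotone and advances by one on each full step. As there are at least $\beta+3 > \beta+2$ full steps, all occurring by time $t_i'$, the fire spreads distance at least $\beta+2$, so $c$ and all of $X$ are burning at some step $\le t_i'$. To conclude I would invoke the invariant that a fully burning $X \cup \{c\}$ stays fully burning under at most one free firefighter: $X \cup \{c\}$ is connected, as $c$ is adjacent to all of $X$, so any single protected vertex still has an unprotected burning neighbour and reignites, while every other vertex simply remains burning. Propagating this invariant up to $t_i'$ yields $X \cup \{c\} \subseteq B_{t_i'}$, and in particular $X \subseteq B_{t_i'}$.

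The hard part will be the rigorous bookkeeping around the full steps: one must verify that fire entering $H^i_m$ through $u_i$ cannot lower the all-$m$ requirement of Lemma \ref{lem:Hkstrategy} (morally, external fire can only make extinguishing harder), and that the chosen source clique survives the \emph{entire} interval rather than merely appearing at its endpoints, which combines the interval ordering of Lemma \ref{lem:Hkordering} with the clique-persistence observation that losing any node of a clique requires $m-1$ firefighters on it. I would also handle the degenerate case $m=2$ separately, where ``at most one free firefighter'' is not strictly below $m-1$: there $H^j_2$ is a star that one firefighter still cannot clear, so the source persists and the same monotone-front argument goes through.
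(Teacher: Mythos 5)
Your propagation mechanism is the same as the paper's: during $(t_i,t_i']$ at most one firefighter is free outside $H_m^i$ (Lemma~\ref{lem:Hkordering}d), at least $\alpha\ge\beta+3$ of these steps leave \emph{no} free firefighter (Lemma~\ref{lem:Hkstrategy}), the burning portion of a short path into $X$ cannot shrink against one firefighter and must grow on every full step, and $\alpha$ exceeds the relevant distance. That part is sound. The genuine gap is in how you establish the fire source. You fix $j=m$ and assert that $H_m^m$ ``stays fully burning throughout $(t_i,t_i']$,'' but this presupposes that $H_m^m$ is fully burning (or at least that $u_m$ is burning) at time $t_i$. Lemma~\ref{lem:Hkordering} only gives $H_m^m\subseteq B_{t_m}$ for some $t_m\ge t_i'$; it says nothing about the state of $H_m^m$ at times in $[t_i,t_i')$. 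Between $t_{i-1}'$ and $t_i$ the strategy is unconstrained, so it could, e.g., hold $m-1$ firefighters on the clique of $H_m^m$ and keep some of its nodes (conceivably all of them) extinguished at time $t_i$; ruling this out is itself a nontrivial minimality argument, not a consequence of the cited lemmata. Your ``a single firefighter cannot extinguish $H_m^j$'' only shows it cannot be \emph{fully} cleared, which is weaker than ``stays fully burning.'' Moreover, even a one-or-two-step reignition patch is delicate: from $u_m$ the distance to $X$ is $\beta+2$ and you have only $\alpha\ge\beta+3$ full steps, so there is a single step of slack, and the case $H_m^m\cap B_{t_i}=\emptyset$ breaks the argument entirely.

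The paper sidesteps this with an exchange argument you are missing: if $B_{t_i}\subseteq H_m^i$, one could extinguish $H_m^i$ in $\alpha$ steps via Lemma~\ref{lem:Hkextinguishable} and terminate, whereas the actual strategy still has to spend at least $\alpha$ further steps on $H_m^m$ after $t_i'$; this contradicts minimality of $T$, so \emph{some} node $v\in B_{t_i}\setminus H_m^i$ exists. The propagation is then run along a path of at most $\beta+4$ nodes from this arbitrary $v$ to any $w\in X$ (avoiding $H_m^i$), tracking only the cardinality $|P\cap B_t|$, so it does not matter where $v$ sits or whether any particular clique is intact. Replacing your ``$H_m^m$ is fully burning throughout'' claim with this minimality argument (or proving that claim rigorously, which amounts to the same kind of reasoning) is what is needed to close the proof.
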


\begin{proof}
    Let $i \in [m - 1]$ be arbitrary.
    By the properties of $t_1, t_1', \dots, t_m, t_m'$ as given by Lemma \ref{lem:Hkordering}, we know that at time $t_i$, the remaining strategy $(S_i, \dots, S_T)$ needs to fully extinguish both $H_m^i$ and $H^m_m$ from a fully burning state, which, according to Lemma \ref{lem:Hkstrategy}, takes at least $2 \alpha$ steps.
    Therefore, there must be at least one node $v \in B_{t_i} \setminus H^m_i$, since one could otherwise simply extinguish $H^m_i$ in $\alpha$ steps without letting the fire spread, creating a shorter strategy.
    
    Now, let $w$ be an arbitrary node in $X$.
    Then there exists a path $P$ containing $j \leq 4 + \beta$ nodes that connects $v$ and $w$ and does not intersect $H_m^i$.
    Whenever we have $|F_t \cap P| = 0$ and $1 \leq |P \cap B_{t - 1}| \leq |P| - 1$, it follows that $|P \cap B_t| \geq |P \cap B_{t - 1}| + 1$.
    Moreover, if $|F_t \cap P| = 1$ and $|P \cap B_{t - 1}| \geq 2$, we have $|P \cap B_t| \geq |P \cap B_{t - 1}|$.
    Together with Lemma \ref{lem:Hkstrategy} and the fact that $|P| \leq 4 + \beta$ is not greater than $\alpha + 1$, this proves that $|P \cap B_{t'_i}| = |P|$, and in particular $w \in B_{t'_i}$.
\end{proof}

\begin{lemma}[$X$ is Extinguished]
\label{lem:Xwasextinguished}
    $X \subseteq E_{t_i}$ for all $i \in [m] \setminus \{1\}$.
\end{lemma}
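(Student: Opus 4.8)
The plan is to argue by contradiction, exploiting that $S=(F_1,\dots,F_T)$ is a \emph{shortest} winning $m$-strategy. Fix $i\in\{2,\dots,m\}$ and suppose $X\cap B_{t_i}\neq\emptyset$. By Lemma~\ref{lem:Hkordering} the sets $\{t_j+1,\dots,t_j'\}$ are pairwise disjoint and ordered, so $t_{i-1}'\le t_i<t_i'$, and by Lemma~\ref{lem:Xkwillburn} we have $X\subseteq B_{t_{i-1}'}$; that is, $X$ is fully burning at the moment the extinguishing of $H_m^{i-1}$ has just finished. I would treat $m\ge 3$ as the main case (so that $\ffn(X)=m-1\ge 2$) and dispatch $m=2$ separately, where $X$ is edgeless and the claim is checked directly.

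First I would establish a firefighter bottleneck. By Lemma~\ref{lem:Hkordering}(d) together with Lemma~\ref{lem:Hkstrategy}, throughout $(t_i,t_i']$ the $(m-1)$-clique of $H_m^i$ must lie in every firefighter set, leaving at most one firefighter for the rest of the graph. Since $\ffn(X)=m-1\ge 2$, the graph $X$ contains an edge (Proposition~\ref{prop:char}), and a single firefighter cannot clear $X$ from a substantially burning state; more to the point, the \emph{only} window in which at least $m-1$ firefighters are simultaneously available for $X$ is the gap $[t_{i-1}'+1,\,t_i]$, because every other step before $t_i'$ lies inside some $H_m^j$-phase and hence offers at most one free firefighter. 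As $c$ is adjacent to all of $X$, extinguishing $X$ also forces one firefighter onto $c$ at each such step, so the clearing of $X$ from its burning state at $t_{i-1}'$ must occupy all $m$ firefighters for a contiguous block inside this gap.

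The contradiction should come from a step-deletion principle for shortest strategies: by monotonicity of the dynamics (if $B'_s\subseteq B_s$, then applying any common firefighter set preserves $B'_{s+1}\subseteq B_{s+1}$), a shortest strategy can never satisfy $B_b\subseteq B_a$ for some $a<b$, as otherwise deleting steps $a+1,\dots,b$ would yield a strictly shorter winning strategy. Assuming $X\cap B_{t_i}\neq\emptyset$, the effort spent on $X$ inside the gap is undone by time $t_i'$: for $i\le m-1$ we already know $X\subseteq B_{t_i'}$ from Lemma~\ref{lem:Xkwillburn}, and for $i=m$ the same follows by tracing the spread of the surviving fire in $X$ through $c$ under the at-most-one free firefighter. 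I would then exhibit times $a<b$ (with $a$ in the gap where $X$ is fully burning and $b=t_i'$) satisfying $B_b\subseteq B_a$, contradicting minimality; intuitively, relative to $a$ the only genuine progress over $[a,b]$ is the clearing of $H_m^i$, which by Lemma~\ref{lem:Hkextinguishable} can be achieved in $\alpha$ steps without ever letting the fire leave $H_m^i$, while Lemma~\ref{lem:spiderextinguishable} lets one re-confine the rest. The main obstacle is precisely verifying this inclusion $B_{t_i'}\subseteq B_a$ on the components untouched by the bottleneck argument --- the already-cleared copies $H_m^j$ with $j<i$ and the connecting paths --- i.e.\ showing that no cleared copy has reignited and that the path-fire at $t_i'$ is dominated by the configuration at $a$. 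Once this structural inclusion is secured, minimality forces $X\subseteq E_{t_i}$, which together with Lemma~\ref{lem:Xkwillburn} supplies the $m-1$ disjoint windows of length at least $T(X)$ needed for Lemma~\ref{lem:GkXtakeslong}.
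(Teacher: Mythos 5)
There is a genuine gap, and it sits at the heart of your argument: the step-deletion principle is stated backwards. Monotonicity says that if $B'_s \subseteq B_s$ and the same firefighter set is applied to both, then $B'_{s+1} \subseteq B_{s+1}$. So if you delete steps $a+1,\dots,b$ and restart the tail $(F_{b+1},\dots,F_T)$ from $B_a$, the resulting trajectory is comparable to the original one (which starts the tail from $B_b$) only when $B_a \subseteq B_b$ — then the truncated strategy's burning sets are \emph{subsets} of the original's and the truncation still wins. The condition you propose to exhibit, $B_b \subseteq B_a$ with $a<b$, gives the inclusion the wrong way around: restarting the tail from the \emph{larger} set $B_a$ yields supersets of the original trajectory, which need not end empty. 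Indeed $B_b \subseteq B_a$ is just "progress was made", which every winning strategy must exhibit ($B_T=\emptyset\subseteq B_0=V$), so it can never contradict minimality. Nor can the argument be repaired by flipping the inclusion: during $(t_i,t_i']$ the strategy genuinely extinguishes $H_m^i$ (we have $H_m^i\subseteq B_{t_i}$ but $H_m^i\subseteq E_{t_i'}$), so $B_a \subseteq B_{t_i'}$ is false for any $a$ near $t_i$, and no deletable window exists. A secondary issue is that your reliance on Lemma~\ref{lem:Xkwillburn} to show the effort on $X$ is "undone" only covers $i\le m-1$, and the case $i=m$ is left to an unspecified tracing argument.

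The paper's proof avoids all of this and is much shorter: assuming some $v\in X\cap B_{t_i}$ with $i\ge 2$, it takes an arbitrary $w\in H_m^1$ and the path $v,c,v^1_1,\dots,v^1_\beta,u_1,w$ of at most $4+\beta\le\alpha+1$ nodes, which avoids $H_m^i$. By Lemma~\ref{lem:Hkordering}(c),(d) the interval $(t_i,t_i']$ has length at least $\alpha$ and leaves at most one free firefighter outside $H_m^i$ in each of its steps, so the path-propagation argument from Lemma~\ref{lem:Xkwillburn} forces the whole path, hence all of $H_m^1$, to burn at time $t_i'$. Since $t_i' > t_1$, this contradicts the maximality of $t_1$ — no strategy-shortening is needed. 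If you want to keep a minimality-based flavour, note that the paper only invokes minimality of $T$ earlier (in Lemma~\ref{lem:Xkwillburn}, to produce a burning node outside $H_m^i$ at time $t_i$); here the contradiction is purely with the definition of $t_1$.
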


\begin{proof}
    Assume that there exists a node $v \in X \cap B_{t_i}$ for some $i \in [m] \setminus \{1\}$.
    Furthermore, let $w$ be an arbitrary node in $H_m^1$.
    By repeating the same argument as in the proof of Lemma \ref{lem:Xkwillburn}, i. e., using that there is a path between $v$ and $w$ that does not intersect $H^1_m$ and contains at most $4 + \beta \leq \alpha + 1$ nodes, we see that we have $w \in B_{t'_i}$, and thus $H_m^1 \subseteq B_{t'_i}$.
    This contradicts the definition of $t_1$, since $t'_i > t_1$.
\end{proof}

\begin{lemma}[Firefighters Cover $c$]
\label{lem:alwaysaffinc}
    Let $t', t'' \in [T]$ be such that $X \subseteq B_{t'} \cap E_{t''}$ and for any $t \in \{t' + 1, t'' - 1\}$, we have $X \not\subseteq B_t$ and $X \not\subseteq E_t$.
    Then $c \in F_t$ for all $t \in \{t' + 1, \dots, t''\}$.
\end{lemma}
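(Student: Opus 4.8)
The plan is to reduce the whole statement to two structural observations about how a burning node of $X$ forces $c$ to burn, and how a provisionally burning $c$ forces all of $X$ to reignite. The two features of $G(m,X)$ that I would exploit are that $c$ is adjacent to every node of $X$, and that $X$ is attached to the rest of the graph only through $c$, so that inside $G(m,X)$ every neighbour of a node $x\in X$ lies in $X\cup\{c\}$; note also $X\neq\emptyset$ since $\ffn(X)=m-1\geq 1$.

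First I would prove the \emph{burning-propagation fact}: if $X\cap B_s\neq\emptyset$ then $c\in B_s$. Pick $x\in X\cap B_s$ and write $B_s=\tilde B_s\cup N(\tilde B_s)$. If $x\in\tilde B_s$, then $c\in N(\tilde B_s)\subseteq B_s$ because $c$ is a neighbour of $x$. Otherwise $x$ has a neighbour $y\in\tilde B_s$, and since $y\in X\cup\{c\}$ we either get $c=y\in\tilde B_s$, or $c\in N(\tilde B_s)$ because $c$ neighbours $y\in X$; in all cases $c\in B_s$. Second, I would record the easy \emph{reignition fact}: if $c\in\tilde B_s$ then $X\subseteq N(c)\subseteq N(\tilde B_s)\subseteq B_s$, i.e.\ all of $X$ becomes burning.

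Next I would assemble the lemma from these two facts together with the hypotheses. On one hand, for every $s\in\{t',\dots,t''-1\}$ the set $X$ is burning somewhere: at $s=t'$ because $X\subseteq B_{t'}$, and for $t'+1\le s\le t''-1$ because $X\not\subseteq E_s$ is assumed. By the burning-propagation fact this gives $c\in B_s$ for all $s\in\{t',\dots,t''-1\}$. On the other hand, $X\not\subseteq B_t$ for every $t\in\{t'+1,\dots,t''\}$: this is the hypothesis for the interior times, and follows from $X\subseteq E_{t''}$ (hence $X\cap B_{t''}=\emptyset$, and $X\neq\emptyset$) at the endpoint. By the contrapositive of the reignition fact, $c\notin\tilde B_t$ for all $t\in\{t'+1,\dots,t''\}$. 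Finally I fix such a $t$; then $t-1\in\{t',\dots,t''-1\}$, so $c\in B_{t-1}$, while $c\notin\tilde B_t=B_{t-1}\setminus F_t$ forces $c\in F_t$, which is exactly the claim.

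The main obstacle I anticipate is not any calculation but the boundary bookkeeping, in particular ruling out the scenario where $c$ is already extinguished, so that placing a firefighter on $c$ would be unnecessary. The burning-propagation fact is precisely what closes this gap: it shows $c$ cannot be extinguished as long as any node of $X$ burns, which the hypotheses guarantee throughout $\{t',\dots,t''-1\}$. I would take care to check that the endpoint $t=t''$ is handled via $X\subseteq E_{t''}$ rather than the interior hypothesis, and that the degenerate case $t''=t'+1$ causes no trouble, since both observations are applied only at indices where the relevant burning/extinguished data are available.
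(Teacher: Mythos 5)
Your proof is correct and follows essentially the same route as the paper's: for each $t$ in the range you show $c\in B_{t-1}$ (because some node of $X$ burns at time $t-1$ and every neighbour of such a node is $c$ or a neighbour of $c$) and then observe that $c\in\tilde B_t$ would reignite all of $X$, forcing $c\in F_t$. The only cosmetic blemish is the chain $N(c)\subseteq N(\tilde B_s)$ in your reignition fact, which is not literal under the paper's convention that $N(W)\subseteq V\setminus W$ (a neighbour of $c$ already in $\tilde B_s$ is excluded from $N(\tilde B_s)$), but the intended conclusion $X\subseteq\tilde B_s\cup N(\tilde B_s)=B_s$ still holds.
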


\begin{proof}
    Let $t \in \{t' + 1, \dots, t''\}$ be arbitrary.
    By the definition of $t'$ and $t''$, there exists a node $v \in B_{t - 1} \cap X$.
    If $t = 1$, we immediately get $c \in B_{t - 1}$.
    Otherwise, there must be a neighbour of $v$ contained in $\tilde{B}_{t - 1}$.
    As any neighbour of $v$ is either $c$ itself or a neighbour of $c$, this again implies $c \in B_{t - 1}$.
    
    Since every node of $X$ shares an edge with $c$, $c \in \tilde{B}_t$ implies $X \subseteq B_t$, which would contradict the definition of $t'$ and $t''$.
    Therefore, we must have $c \in F_t$ for all $t \in \{t' + 1, \dots, t''\}$.
\end{proof}

\noindent Combining the previous lemmata gives us a lower bound to the time that is required in order to extinguish $G(m, X)$ with $m$ firefighters.
\lemGkXtakeslong*
\begin{proof}
    From the Lemmata \ref{lem:Hkordering}, \ref{lem:Xkwillburn} and \ref{lem:Xwasextinguished}, it follows that during any shortest winning $m$-strategy for $G(m, X)$, the subgraph $X$ changes its state from fully burning to fully extinguished at least $m - 1$ times.
    Furthermore, due to Lemma \ref{lem:alwaysaffinc}, at most $m - 1$ firefighters can be used to extinguish $X$ these $m - 1$ times, which proves that the strategy takes at least $(m - 1) \cdot T(X)$ steps.
\end{proof}
\end{document}